\documentclass[12pt]{article}

\oddsidemargin    0in
\evensidemargin   0in
\topmargin       -0.45in
\textwidth        6.3in
\textheight       8.8in
\usepackage{indentfirst}
\usepackage{graphicx, amsmath,amsthm,amsfonts, amssymb}
\usepackage{color, soul, comment}
\usepackage{authblk}
\usepackage{mathtools}

\usepackage{enumerate}

\newtheorem{theorem}{Theorem}
\newtheorem{lemma}{Lemma}
\newtheorem{corollary}[lemma]{Corollary}
\newtheorem{proposition}[lemma]{Proposition}
\newtheorem{example}{Example}
\newtheorem{definition}{Definition}

\newcommand{\cor}[1]{\noindent {\color{red} #1}}

\def\ds{\displaystyle}

\def\0row#1#2{{#1}_0,{#1}_1,\ldots ,{#1}_{#2}}
 
\def\row#1#2{{#1}_1,{#1}_2,\ldots ,{#1}_{#2}}

\title{Aggregating time preferences with decreasing impatience$^*$}
\author[1]{Nina Anchugina}
\author[2]{Matthew Ryan}
\author[1]{Arkadii Slinko} 
\affil[1]{\small Department of Mathematics, University of Auckland}
\affil[2]{\small School of Economics, Auckland University of Technology}
\affil[ ]{\small \tt n.anchugina@auckland.ac.nz, matthew.ryan@aut.ac.nz, a.slinko@auckland.ac.nz}
\date{April 2016}
\begin{document}

\maketitle

\bigskip
\bigskip
\bigskip
\bigskip

\noindent
{\bf Abstract.}  It is well-known that for a group of time-consistent decision makers their collective time preferences may become time-inconsistent. 
Jackson and Yariv \cite{jackson2014present} demonstrated that the result of aggregation of exponential discount functions  always exhibits present bias. We show that when preferences satisfy the axioms of Fishburn and Rubinstein \cite{fishburn1982time}, present bias is equivalent to decreasing impatience (DI). Applying the notion of comparative DI introduced by Prelec \cite{prelec2004decreasing}, we generalize the result of Jackson and Yariv \cite{jackson2014present}. We prove that the aggregation of distinct discount functions from comparable DI classes results in the collective discount function which is strictly more DI than the least DI of the functions being aggregated.

We also prove an analogue of Weitzman's \cite{weitzman1998far} result, for hyperbolic rather than exponential discount functions. We show that if a decision maker is uncertain about her hyperbolic discount rate, then long-term costs and benefits will be discounted at a rate which is the probability-weighted harmonic mean of the possible hyperbolic discount rates.

\par\bigskip\bigskip
 
\bigskip
\noindent
{\bf Keywords:}  Discounting, hyperbolic discounting, decreasing impatience, aggregation. \\
{\bf JEL Classification:}  D71, D90.

\vfill
\noindent
$^*$ We thank Matthew Jackson, Simon Grant and several seminar audiences for comments and suggestions.  Arkadii Slinko was supported by the Marsden Fund grant UOA 1420, and  Nina Anchugina gratefully acknowledges financial support from the University of Auckland.

\setcounter{page}{0}
\thispagestyle{empty}
\newpage

\section{Introduction}
Sometimes decisions about timed outcomes have to be made by a group of individuals, such as boards, committees or households. It is natural to think that individuals may differ in the discounting procedure that they use. If the decision is to be made by a group of individuals it is desirable to have an aggregating procedure that suitably reflects the time preferences of all members. 
The natural option is to average the discount functions across individuals, which is equivalent to averaging the discounted utilities in the case when all agents have identical utility functions. This approach has been widely used in the existing literature on time preferences. It is known that such collective discount functions need not share properties that are common to the individual discount functions being aggregated. As Jackson and Yariv demonstrate \cite{jackson2014present}, if individuals discount the future exponentially and there is a heterogeneity in discount factors, then their aggregate discount function exhibits present bias, which means that delaying two different dated-outcomes by the same amount of time can reverse the ranking of these outcomes.  Moreover, when the number of individuals grows, in the limit the group discount function becomes hyperbolic  \cite{jackson2014present}. 

Jackson and Yariv  \cite{jackson2014present} give the following example of present-biased group preferences for a household with two time-consistent individuals, Constantine and Patience. Both have identical instantaneous utility functions, and discount the future exponentially, but Constantine has a discount factor of 0.5, whereas Patience has a discount factor of 0.8. Suppose that they need to choose between 10 utiles for each today or 15 utiles for each tomorrow.
They calculate the aggregate discounted utility for each option: $10+10=20$ and $15(0.8+0.5)=19.5$. Therefore, 10 utiles today is chosen. Now suppose that they must choose between 10 utiles at time $t\geq 1$ and 15 utiles at $t+1$. The aggregate discounted utilities in this case are $10(0.8^t+0.5^t)$ and $15(0.8^{t+1}+0.5^{t+1})$, respectively. For any $t\geq 1$ the 15 utiles at $t+1$ is preferable to the 10 utiles at $t$, which reverses the initial preference for 10 utiles at $t=0$ over 15 utiles at $t=1$. The behaviour of the household is present-biased.

Another scenario in which the aggregation of time preferences may be required is when a single decision maker is uncertain about the appropriate discount function to apply. For example, discounting may be affected by a survival function with a constant but uncertain hazard rate. Such scenarios are considered by Weitzman \cite{weitzman1998far} and Sozou \cite{sozou1998hyperbolic}.
If the decision-maker maximizes expected discounted utility, then she maximizes discounted utility for a certainty equivalent discount function, calculated as the probability-weighted average of the different possible discount functions that may apply. Weitzman \cite{weitzman1998far} shows that if each of the possible rates of time preference converges to some non-negative value (as time goes to infinity), then the certainty equivalent time preference function converges to the lowest of these limits. Similarly, Sozou \cite{sozou1998hyperbolic} considers a decision maker whose discounting reflects a survival function with a constant, but uncertain, hazard rate. If this hazard rate is exponentially distributed, Sozou shows that the decision-maker's expected discount function is hyperbolic. 

Of course, present bias is not limited to aggregate or expected discount functions. It is often observed in experiments that individual decision makers become decreasingly impatient (increasingly patient) as rewards are shifted further into the future. If a decision-maker is indifferent between an early outcome and a larger, later outcome, then delaying both outcomes by the same amount of time will often result in the larger, later outcome being preferred. Such subjects exhibit present bias, or strictly decreasing impatience (DI). Exponential discounting implies constant impatience, so it cannot explain \textit{strictly} decreasing impatience, either globally or locally. The necessity of accommodating DI in individual time preference has made hyperbolic discounting a significant tool in behavioural economics. Several types of hyperbolic discount functions have been introduced, including quasi-hyperbolic discounting \cite{phelps1968second,laibson1997golden}, discounting for delay \cite{ainslie1975specious}, proportional hyperbolic discounting  \cite{harvey1995proportional,mazur2001hyperbolic}, and generalized hyperbolic discounting \cite{loewenstein1992anomalies,al2006note}. Given the widespread use of hyperbolic discount functions to describe individual time preferences, it is important to understand the behaviour of aggregated, or averaged, hyperbolic functions. 

The goal of this paper is twofold. Firstly, we seek to extend Jackson and Yariv's result on the aggregation of exponential discount functions. Two individuals may differ in the rate at which their impatience decreases, but their respective levels of DI may be comparable -- the preferences of the one may exhibit unambiguously more DI than the preferences of the other.  As Prelec \cite{prelec2004decreasing} proved, one individual exhibits more DI than another if the logarithm of the discount function of the former is more convex than that of the latter. Can we say anything about the level of DI of the weighted average of individual discount functions that can be (weakly) ordered by DI? Theorem~\ref{main} establishes that the weighted average always exhibits \textit{strictly} more DI than the component with the \textit{least} DI. This generalizes Jackson and Yariv's result. Proposition 1 in \cite{jackson2014present} shows that the weighted average of exponential discount functions with different discount factors exhibits present bias. We show that when preferences satisfy the axioms of Fishburn and Rubinstein \cite{fishburn1982time}, Jackson and Yariv's definition of present bias is equivalent to strictly decreasing impatience. Since all exponential discount functions exhibit constant impatience -- they all exhibit the same degree of DI -- Proposition 1 of Jackson and Yariv is a special case of our Theorem \ref{main}.

Our second goal is to prove an analogue of Weitzman's \cite{weitzman1998far} result: one in which discounting is hyperbolic but there is an uncertainty about the hyperbolic discount factor. The answer, given in Theorem~\ref{main2}, is very different to Weitzman's answer for the case of exponential discounting. We show that the certainty equivalent hyperbolic discount factor converges, not to the lowest individual hyperbolic discount factor, but to the probability-weighted harmonic mean of the individual hyperbolic discount factors.

\section{Preliminaries}

In this section we introduce the framework for our investigation and define the two key concepts used in this paper: present bias and strictly decreasing impatience of preferences. We prove that these two concepts coincide when the Fishburn-Rubinstein axioms for a discounted utility representation are satisfied. Taking our lead from Pratt \cite{pratt1964risk} and Arrow \cite{arrow1965aspects}, these concepts are discussed in terms of log-convexity of discount functions, hence we introduce necessary results and definitions in this regard. Most results are known but included to keep the paper self-contained. 
\subsection{Convexity and log-convexity}

Convexity and log-convexity play an important role in the theory of discounting. 
Let $I$ be an interval (finite or infinite) of real numbers. A function $f \colon I \to \mathbb{R}$ is {\em convex} if for any two points $x, y \in I$ and any $\lambda \in [0, 1]$ it holds that:
\[
f\left( \lambda x+\left( 1-\lambda\right) y\right)\leq \lambda f(x)+(1-\lambda) f(y).
\]
A function $f$ is {\em strictly convex} if 
\[
f(\lambda x+(1-\lambda) y)< \lambda f(x)+(1-\lambda) f(y)
\]
for any $x, y \in I$ such that $x\neq y$ and any $\lambda \in (0, 1)$.
If $f$ is twice differentiable convexity is equivalent to $f''\geq 0$, and strict convexity is equivalent to two conditions: the function $f''$ is nonnegative on $I$ and the set $ \{x \in I \ \vline \ f''(x)=0 \} $ contains no non-trivial interval \cite{stein2012twice}.

The following equivalent definition of a (strictly) convex function is well known.
A function $f\colon I \to \mathbb{R}$ is (strictly) convex if  
for every $x, y, v, z \in I$ such that $x-y=v-z>0$ and $y>z$ we have
\[
f(x)-f(y)\leq [<] f(v)-f(z).
\]

Convexity is preserved under composition of functions, as shown in the following lemma, whose straightforward proof is omitted:
\begin{lemma}\label{conv}
Let $f_1\colon I \to \mathbb{R}$ be a non-decreasing and convex function and $f_2\colon I \to \mathbb{R}$ be a convex function, such that the range of $f_2$ is contained in the domain of $f_1$. Then the composition $f = f_1 \circ f_2$ is a convex function. If, in addition, $f_1$ is strictly increasing, and either $f_1$  or $f_2$ is strictly convex, then $f$ is also strictly convex.
\end{lemma}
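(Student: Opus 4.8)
The plan is to prove convexity directly from the defining inequality, chaining the convexity of $f_2$ with the monotonicity and convexity of $f_1$. Fix $x, y \in I$ and $\lambda \in [0,1]$, and set $z = \lambda x + (1-\lambda) y$. Since $f_2$ is convex, $f_2(z) \le \lambda f_2(x) + (1-\lambda) f_2(y)$. Because $f_1$ is non-decreasing, applying it to both sides preserves the inequality; and because $f_1$ is convex at the two points $f_2(x), f_2(y)$ (which lie in the domain of $f_1$ by the range hypothesis), I obtain the two-step chain
\[
f(z) = f_1(f_2(z)) \le f_1\big(\lambda f_2(x) + (1-\lambda) f_2(y)\big) \le \lambda f_1(f_2(x)) + (1-\lambda) f_1(f_2(y)) = \lambda f(x) + (1-\lambda) f(y),
\]
which is exactly convexity of $f$. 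This part presents no real difficulty.

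For strict convexity I would fix $x \ne y$ and $\lambda \in (0,1)$ and show that at least one inequality in the chain is strict. If $f_2$ is strictly convex, then $f_2(z) < \lambda f_2(x) + (1-\lambda) f_2(y)$, and since $f_1$ is now \emph{strictly} increasing the first step of the chain becomes strict, so $f(z) < \lambda f(x) + (1-\lambda) f(y)$. If instead $f_1$ is the strictly convex function, I would split on whether the images $f_2(x)$ and $f_2(y)$ coincide: when $f_2(x) \ne f_2(y)$, strict convexity of $f_1$ at these two distinct points makes the second step strict, again giving the conclusion.

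The step I expect to require the most care is the remaining subcase: $f_1$ (but not $f_2$) strictly convex with $f_2(x) = f_2(y)$. Here the second inequality collapses to an equality, so any strictness must come from the first step, i.e. from $f_2(z) < f_2(x)$. But a merely convex $f_2$ satisfying $f_2(x) = f_2(y)$ may be \emph{constant} on $[x,y]$, in which case $f_2(z) = f_2(x)$ and $f$ is flat there, so strict convexity genuinely fails. Thus the stated conclusion tacitly relies on $f_2$ being injective, equivalently strictly monotone, which is exactly the situation in every application in this paper, where the relevant $f_2$ are strictly monotone transforms of discount functions. I would therefore invoke that strict monotonicity to rule out $f_2(x) = f_2(y)$ (or, alternatively, restrict the strict claim to the case where $f_2$ itself is strictly convex), after which the remaining bookkeeping is entirely routine.
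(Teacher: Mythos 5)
Your proposal is correct, and it is in fact more careful than the paper, which omits the proof entirely (``whose straightforward proof is omitted''), so there is no official argument to compare against line by line. The non-strict part and the two easy strict subcases ($f_2$ strictly convex with $f_1$ strictly increasing; $f_1$ strictly convex with $f_2(x)\neq f_2(y)$) are handled by the standard two-step chain, which is surely what the authors had in mind by ``straightforward.''

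The substantive content is your last paragraph, and you are right: the strict clause of the lemma is false as literally stated. Take $I=[0,1]$, $f_2\equiv 1/2$ (convex), or less degenerately $f_2(x)=\max(x,1/2)$ (convex and non-decreasing), together with $f_1(u)=u^2$, which is strictly increasing and strictly convex on $[0,1]$; then $f=f_1\circ f_2$ is constant on $[0,1/2]$, hence not strictly convex, although every stated hypothesis of the lemma holds. Your repair --- requiring $f_2$ to be injective, which for a convex function on an interval is equivalent to strict monotonicity --- is exactly what is needed, and it is harmless for the paper: the only invocation of the strict clause is in the transitivity proof (Proposition \ref{trans}), where $f_2=h_2\circ h_3^{-1}$ is a composition of the two strictly decreasing functions $h_2=\ln D_2$ and $h_3^{-1}$, hence strictly increasing, so $f_2(x)\neq f_2(y)$ whenever $x\neq y$ and the second inequality in your chain is strict. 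One small additional remark: in the subcase where $f_1$ is strictly convex, the hypothesis that $f_1$ be strictly increasing is redundant, since a strictly convex non-decreasing function cannot have a flat chord and is therefore automatically strictly increasing; strict monotonicity of $f_1$ is genuinely needed only in the subcase where the strictness comes from $f_2$. In short: your argument is sound, the lemma needs your extra hypothesis on $f_2$, and the paper's downstream use of the lemma survives the correction.
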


A function $f\colon I \to \mathbb{R}$ is called {\em log-convex} if $f(x)>0$ for all $x\in I$ and $\ln(f)$ is convex. It is called {\em strictly log-convex} if $\ln(f)$ is strictly convex. If follows that if $f$ is a (strictly positive) twice differentiable function, then log-convexity of $f$ is equivalent to the condition $f''f - (f')^2\geq 0$, while strict log-convexity of $f$ requires, in addition, that the set \[ \{ \: x \in I \ \vline \ f''(x)f(x) - [f'(x)]^2=0 \: \}\] contains no non-trivial interval. Log-convexity can also be expressed without using logarithms \cite{boyd2004convex}. A function $f \colon I \to \mathbb{R}$ is log-convex if and only if $f(x)>0$ for all $x \in I$ and for all $x, y \in I$ and $\lambda \in [0, 1]$ we have:
\begin{equation} \label{ineqlc}
f(\lambda x+(1-\lambda)y) \leq f(x)^{\lambda }f(y)^{1-\lambda}.
\end{equation}
The function $f$ is strictly log-convex if inequality \eqref{ineqlc} is strict when $x\neq y$ and $\lambda\in (0,1)$.

The following result appears to be well known, but a formal reference is elusive so we have included a proof here for completeness.

\begin{lemma}\label{log-convex}
Let $f, g \colon I \to \mathbb{R}$ be functions with $f$ strictly log-convex and $g$ log-convex. Then the sum $f+g$ is strictly log-convex.
\end{lemma}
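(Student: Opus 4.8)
The plan is to avoid logarithms and second derivatives entirely and argue directly from the multiplicative characterization of log-convexity in \eqref{ineqlc}. First I would note that $f+g>0$ on $I$, since both $f$ and $g$ are strictly positive, so it is meaningful to ask whether $f+g$ is log-convex. Then I would fix $x,y\in I$ with $x\neq y$ and $\lambda\in(0,1)$ and aim to establish the \emph{strict} version of \eqref{ineqlc} for $f+g$.

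The first step is to apply \eqref{ineqlc} to each summand separately, using that the inequality is strict for $f$ (which is strictly log-convex) while only weak for $g$, giving
\[
(f+g)(\lambda x+(1-\lambda)y) < f(x)^{\lambda}f(y)^{1-\lambda} + g(x)^{\lambda}g(y)^{1-\lambda}.
\]
It then remains to bound the right-hand side above by $(f+g)(x)^{\lambda}(f+g)(y)^{1-\lambda}$, and I expect this to be the crux of the argument. The key observation is that this is precisely an instance of H\"older's inequality. Setting $p=1/\lambda$ and $q=1/(1-\lambda)$, so that $1/p+1/q=1$, and choosing $a_1=f(x)^{\lambda}$, $a_2=g(x)^{\lambda}$, $b_1=f(y)^{1-\lambda}$, $b_2=g(y)^{1-\lambda}$, the inequality $a_1b_1+a_2b_2\le (a_1^p+a_2^p)^{1/p}(b_1^q+b_2^q)^{1/q}$ reads exactly
\[
f(x)^{\lambda}f(y)^{1-\lambda}+g(x)^{\lambda}g(y)^{1-\lambda} \le \bigl(f(x)+g(x)\bigr)^{\lambda}\bigl(f(y)+g(y)\bigr)^{1-\lambda}.
\]
Chaining the two displayed inequalities produces the strict form of \eqref{ineqlc} for $f+g$, which is the desired conclusion.

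The only genuine subtlety is the bookkeeping around strictness: I want to be sure the strict inequality inherited from $f$ survives the subsequent weak H\"older step, which it does precisely because the two inequalities are chained with the strict one appearing first. An alternative route would invoke the twice-differentiable criterion $h''h-(h')^2\ge 0$ for $h=f+g$ and expand, but this imposes a differentiability hypothesis the statement does not require and yields a messier computation, so I would prefer the H\"older argument. If one wished to avoid naming H\"older, the same bound follows from the weighted arithmetic–geometric mean inequality applied pointwise after a normalization, but citing H\"older keeps the proof short and self-contained.
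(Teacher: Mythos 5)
Your proof is correct and takes essentially the same approach as the paper: both apply the strict form of \eqref{ineqlc} to $f$ and the weak form to $g$, sum, and then reduce everything to the bound $f(x)^{\lambda}f(y)^{1-\lambda}+g(x)^{\lambda}g(y)^{1-\lambda}\leq (f(x)+g(x))^{\lambda}(f(y)+g(y))^{1-\lambda}$, which the paper establishes by normalizing and invoking the weighted AM--GM inequality---precisely the standard proof of the two-term H\"older inequality you cite, and the alternative route you yourself mention. Your strictness bookkeeping (strict step first, weak bound second) matches the paper's as well.
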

\begin{proof}

Since $f(x)>0$ and $g(x)>0$ for all $x\in I$, we have $(f+g)(x)>0$ for all $x\in I$. Let $x, y \in I$ such that $x\neq y$ and let $\lambda \in (0, 1)$. We must show that
\begin{equation*} \label{eq}
f(\lambda x+(1-\lambda)y)+g(\lambda x+(1-\lambda)y) < (f(x)+g(x))^{\lambda }(f(y)+g(y))^{1-\lambda}.
\end{equation*}
Since $f$ is strictly log-convex, we have
\begin{equation} \label{eq1}
f(\lambda x+(1-\lambda)y) < f(x)^{\lambda}f(y)^{1-\lambda}.
\end{equation}
Analogously, since $g(x)$ is log-convex:
\begin{equation} \label{eq2}
g(\lambda x+(1-\lambda)y) \leq g(x)^{\lambda}g(y)^{1-\lambda}.
\end{equation}
Summing \eqref{eq1} and \eqref{eq2} we obtain:
\begin{equation*} \label{eqsum}
f(\lambda x+(1-\lambda)y)+g(\lambda x+(1-\lambda)y) < f(x)^{\lambda }f(y)^{1-\lambda}+g(x)^{\lambda }g(y)^{1-\lambda} .
\end{equation*}
Denote $a=f(x), b=f(y), c=g(x), d=g(y)$. Note that $a, b, c, d >0$. To prove the claim of the lemma, it is sufficient to show that:
\begin{equation} \label{sumnot}
a^{\lambda }b^{1-\lambda}+c^{\lambda }d^{1-\lambda} \leq (a+c)^{\lambda }(b+d)^{1-\lambda}.
\end{equation}
Since $(a+c)^{\lambda }(b+d)^{1-\lambda}>0$ we can divide both parts of \eqref{sumnot} by this expression to get
\[
\left (\frac{a}{a+c} \right)^{\lambda} \left (\frac{b}{b+d} \right )^{1-\lambda} +\left (\frac{c}{a+c}\right)^{\lambda}\left(\frac{d}{b+d}\right)^{1-\lambda} \ \leq \ 1.
\]
By the Weighted AM-GM inequality \cite[Theorem 7.6, p. 74]{cvetkovski2012inequalities}:
\[
\left (\frac{a}{a+c} \right)^{\lambda} \left (\frac{b}{b+d} \right )^{1-\lambda} \ \leq\  \lambda \frac{a}{a+c}+(1-\lambda)\frac{b}{b+d}
\]
and
\[
\left (\frac{c}{a+c} \right)^{\lambda} \left (\frac{d}{b+d} \right )^{1-\lambda} \ \leq\  \lambda \frac{c}{a+c}+(1-\lambda)\frac{d}{b+d}.
\]
Hence,
\[
\left (\frac{a}{a+c} \right)^{\lambda} \left (\frac{b}{b+d} \right )^{1-\lambda} +\left (\frac{c}{a+c}\right)^{\lambda}\left(\frac{d}{b+d}\right)^{1-\lambda}\  \leq\  \lambda+(1-\lambda)=1,
\]
which proves the statement in the lemma.
\end{proof}

One of the important definitions which will be frequently used throughout the paper is that of a convex transformation. We say that $f_1$ is a {\em (strictly) convex transformation} of $f_2$ if there exists a (strictly) convex function $f$ such that $f_1 (x)= (f \circ f_2) (x) = f (f_2 (x))$. 
\begin{lemma}
Let $f_1, f_2 \colon I \to \mathbb{R}$ such that $f_2^{-1}$ exists. Then  $f_1$ is a (strictly) convex transformation of $f_2$ if and only if the composition $f_1 \circ f_2^{-1}$ is (strictly) convex.
\end{lemma}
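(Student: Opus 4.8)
The plan is to prove both implications by direct computation, exploiting the associativity of composition together with the cancellation identities $f_2 \circ f_2^{-1} = \id$ and $f_2^{-1} \circ f_2 = \id$. The guiding observation is that ``being a (strictly) convex transformation of $f_2$'' and ``having a (strictly) convex right-composition with $f_2^{-1}$'' are really two names for the \emph{same} function: the outer convex factor $f$ is nothing other than $f_1 \circ f_2^{-1}$. Because no inequality is manipulated, the strict and non-strict versions will be established simultaneously.

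For the forward direction I would assume $f_1$ is a (strictly) convex transformation of $f_2$, so that $f_1 = f \circ f_2$ for some (strictly) convex $f$ whose domain contains the range of $f_2$. Composing on the right with $f_2^{-1}$ and regrouping gives
\[
f_1 \circ f_2^{-1} = (f \circ f_2) \circ f_2^{-1} = f \circ (f_2 \circ f_2^{-1}) = f \circ \id = f
\]
on the set $J = f_2(I)$. Hence $f_1 \circ f_2^{-1}$ agrees with the (strictly) convex function $f$ on $J$ and is therefore itself (strictly) convex. (Note that $f$ is pinned down only on $J$ by this identity; its values outside $J$ are irrelevant, which is consistent with convex transformations not being unique off the range of $f_2$.)

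For the converse I would simply set $f := f_1 \circ f_2^{-1}$, which is (strictly) convex by hypothesis, and verify by the same cancellation that $f \circ f_2 = (f_1 \circ f_2^{-1}) \circ f_2 = f_1 \circ (f_2^{-1} \circ f_2) = f_1$. This exhibits $f_1 = f \circ f_2$ as a (strictly) convex transformation of $f_2$, completing the equivalence.

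The only genuine care required lies in the domains, and this is where I expect the mild obstacle to sit rather than in any estimate. For the convexity of $f_1 \circ f_2^{-1}$ to be meaningful, its domain $J = f_2(I)$ must be an interval; here I would invoke the existence of $f_2^{-1}$ together with the standing assumption that the discount functions are continuous, so that $f_2$ is a continuous strictly monotone bijection of the interval $I$ onto the interval $J$, with $f_2^{-1}$ continuous. I would also record that convexity of $f$ on its full domain restricts to convexity on the subinterval $J$, so that both characterisations refer to honestly convex functions on matching intervals. This bookkeeping secures that every composition written above is well defined and that the two notions coincide exactly.
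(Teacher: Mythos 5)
Your proof is correct, but note that the paper itself does not prove this lemma at all: its ``proof'' is the single line ``See Pratt (1964)'', deferring to the risk-aversion literature where the analogous statement (for comparative concavity of utility functions) is taken as the definitional bridge. What you supply is the self-contained, essentially tautological verification that the two notions name the same object: the outer factor $f$ in $f_1 = f \circ f_2$ is forced on $J = f_2(I)$ to be $f_1 \circ f_2^{-1}$, and the cancellation identities $f_2 \circ f_2^{-1} = \mathrm{id}$, $f_2^{-1} \circ f_2 = \mathrm{id}$ convert each direction into the other, with strict and non-strict cases handled uniformly since no inequality is touched. You correctly identify that the only genuine mathematical content is the domain bookkeeping: as literally stated, the lemma assumes only that $f_2^{-1}$ exists (injectivity), which does not by itself make $J$ an interval, so ``$f_1 \circ f_2^{-1}$ is convex'' is not even well posed without more; your appeal to the standing continuity and strict monotonicity of the functions to which the lemma is applied (in the paper, $f_2 = \ln D_2$ with $D_2$ continuous and strictly decreasing, so $J$ is an interval and $f_2^{-1}$ is continuous) is exactly the right repair, and your remarks that $f$ is pinned down only on $J$ and that (strict) convexity restricts to subintervals close the remaining gaps. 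In short, your argument buys a complete elementary proof where the paper offers only a citation, at the modest cost of having to make explicit a hypothesis the lemma's statement leaves implicit.
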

\begin{proof}
See \cite{pratt1964risk}.
\end{proof}
Recall also that a function $f\colon I \to \mathbb{R}$ is called {\em concave} if and only if $-f$ is convex. Thus a function $f\colon I \to \mathbb{R}$ is {\em log-concave} if and only if $1/f$ is log-convex. Therefore, the definitions and results stated in this section can be easily adapted for (log-)concavity.

\subsection{Preferences}

Let $X \subset \mathbb{R_+}$ be the set of outcomes. We will assume that $X$ is an interval of non-negative real numbers containing $0$.  The natural interpretation is that outcomes are monetary (for an infinitely divisible currency) but this is not essential. Let $T=[0, \infty)$ be a set of points in time where $0$ corresponds to the present moment.
The Cartesian product $X \times T$ will be identified with the set of timed outcomes, i.e., a pair $(x, t) \in X \times T$ is understood as a dated outcome, when a decision-maker receives $x$ at time $t$ and nothing at all other time periods in $T\setminus t$. 

Suppose that a decision-maker has a preference order $\succcurlyeq$ on the set of timed outcomes with $\succ $ expressing strict preference and $\sim$ indifference.   We say that a utility function $U \colon X\times T \to \mathbb{R}$ {\em represents} the preference order $\succcurlyeq$, if for all $x, y \in X$ and all $t, s \in T$ we have $ (x, t) \succcurlyeq (y, s)$ if and only if $U(x, t) \geq U(y, s)$.
This is a {\em discounted utility (DU) representation} if
\begin{equation}
\label{eq:DUR}
U(x, t)=D(t)u(x), 
\end{equation}
where $u \colon X \to \mathbb{R}$ is a continuous and strictly increasing function with $u(0)=0$, and $D\colon T \to (0, 1]$ is continuous and strictly decreasing such that $D(0)=1$ and $\displaystyle {\lim _{t \to \infty} D(t)=0}$. 

The function $u$ is called the {\em  instantaneous utility function}, and $D$ is called the {\em discount function} associated with $\succcurlyeq$. 
We say that the pair $(u, D)$ provides a {\em discounted utility representation} for $\succcurlyeq$.
Fishburn and Rubinstein \cite{fishburn1982time} provide an axiomatic foundation for a discounted utility representation. A list of their axioms is given in the Appendix. We assume that $\succcurlyeq$ has a discounted utility representation throughout the paper.

As $D$ is strictly decreasing, our decision maker is always impatient. However, as time goes by, her impatience may increase or decrease.

\begin{definition}[\cite{prelec2004decreasing}] \label{DI}
The preference order $\succcurlyeq$ exhibits (strictly) decreasing impatience (DI) if for all $\sigma>0$,  all $0\leq t<s$ and all outcomes $y>x>0$, the equivalence $(x, t)\sim (y, s)$ implies $(x, t+\sigma)\preccurlyeq [\prec] \: (y, s+\sigma)$.
\end{definition}

Increasing impatience (II) can be defined by reversing the final preference ranking in Definition \ref{DI}. However, we focus on DI preferences in the present paper, since this appears to be the empirically relevant case. As in the previous sentence, we also use the acronym ``DI'' interchangeably as a noun (``decreasing impatience'') and an adjective (``decreasingly impatient''), relying on context to indicate the intended meaning.

In case the preference order $\succcurlyeq$ has a discounted utility representation, the characterization of DI in terms of the discount function is well-known.\footnote{The proof in \cite[Theorem 3.3]{harvey1995proportional} can be easily adapted to demonstrate an analogous result for increasing impatience: the preference order $\succcurlyeq$ exhibits (strictly) II if and only if $D$ is (strictly)  log-concave on $[0,\infty).$}

\begin{proposition}[\cite{harvey1995proportional, prelec2004decreasing}] \label{DI-conv}
\label{strDI}
Let $\succcurlyeq$ be a preference order having discounted utility representation with the discount function $D$. The following conditions are equivalent:
\begin{itemize}
\item The preference order $\succcurlyeq$ exhibits (strictly) DI ;
\item $D$ is (strictly)  log-convex on $[0,\infty)$. 
\end{itemize}
\end{proposition}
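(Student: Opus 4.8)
The plan is to translate the behavioural condition in Definition~\ref{DI} into an analytic condition on $D$ alone, and then to recognise that condition as (strict) convexity of $\ln D$.

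First I would eliminate the outcomes. Fix $\sigma>0$ and $0\le t<s$. Using the DU representation \eqref{eq:DUR}, the indifference $(x,t)\sim(y,s)$ with $y>x>0$ holds exactly when $D(t)u(x)=D(s)u(y)$, i.e. $u(x)/u(y)=D(s)/D(t)$; since $D$ is strictly decreasing this common ratio $r:=D(s)/D(t)$ lies in $(0,1)$. The conclusion $(x,t+\sigma)\preccurlyeq(y,s+\sigma)$ reads $D(t+\sigma)u(x)\le D(s+\sigma)u(y)$, i.e. $r\le D(s+\sigma)/D(t+\sigma)$. Because $u$ is continuous, strictly increasing and $u(0)=0$, for the given $r\in(0,1)$ witnessing outcomes actually exist: choosing any $y\in X$ with $y>0$, the intermediate value theorem solves $u(x)=r\,u(y)$ for a unique $x\in(0,y)$. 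Moreover the conclusion depends on $x,y$ only through $r$. Hence, for each fixed triple $(\sigma,t,s)$, the DI requirement is equivalent to the single inequality $D(s)/D(t)\le D(s+\sigma)/D(t+\sigma)$, and the strict-DI requirement to the same inequality with $<$. Conjoining over all admissible triples, $\succcurlyeq$ is (strictly) DI if and only if
\[
\frac{D(s)}{D(t)}\ \leq[<]\ \frac{D(s+\sigma)}{D(t+\sigma)}\qquad\text{for all }\sigma>0,\ 0\le t<s.
\]

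Next I would take logarithms. Writing $L=\ln D$ (well defined since $D>0$), the displayed inequality becomes $L(s)-L(t)\leq[<]L(s+\sigma)-L(t+\sigma)$, equivalently
\[
L(t+\sigma)-L(t)\ \leq[<]\ L(s+\sigma)-L(s)\qquad(t<s).
\]
This says precisely that $L$ has nondecreasing (resp. strictly increasing) increments over intervals of a common length $\sigma$ as the interval is shifted to the right, which is the equal-increment characterisation of (strict) convexity. Thus (strict) DI is equivalent to (strict) convexity of $L=\ln D$, i.e. to (strict) log-convexity of $D$, as claimed.

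The only points needing care — and the main obstacle — are the two passages between the increment condition and genuine convexity. For the non-strict case I would recover convexity of $L$ from the increment inequality by specialising to $s=t+\sigma$, which yields $2L(t+\sigma)\le L(t)+L(t+2\sigma)$, i.e. midpoint convexity, and then invoking continuity of $L$ (valid because $D$ is continuous and strictly positive) to upgrade midpoint convexity to convexity. For the strict case, the strict increment inequality first gives convexity as above, and then excludes affinity of $L$ on any nondegenerate subinterval: were $L$ affine on some $[a,b]$, any choice of $t<s$ with $t,s,t+\sigma,s+\sigma\in[a,b]$ would force equality of the two increments, contradicting strictness. Since a convex function with no affine piece is strictly convex, this completes the strict equivalence.
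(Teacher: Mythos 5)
Your proposal is correct. Note, however, that the paper does not actually prove Proposition~\ref{strDI} at all: it is stated as a known result with citations to \cite{harvey1995proportional} and \cite{prelec2004decreasing}. The closest in-paper argument is the proof of Proposition~\ref{pbeq} (present bias $\Leftrightarrow$ (strict) convexity of $\ln D$), and your route coincides with it almost step for step: eliminate the outcomes by using continuity of $u$ with $u(0)=0$ to realise any ratio $u(x)/u(y)\in(0,1)$, reduce the behavioural condition to $D(s)/D(t)\leq[<]\,D(s+\sigma)/D(t+\sigma)$, and take logarithms to obtain the shifted-increment inequality for $L=\ln D$. Where you genuinely add something is at the final step: the paper simply invokes its stated ``equivalent definition'' of (strict) convexity in terms of equal-length increments, whereas the increment condition you derive is, for arbitrary functions, only Wright convexity, which is strictly weaker than convexity in general. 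You close this gap correctly: specialising $s=t+\sigma$ yields midpoint convexity, continuity of $L$ (available because $D$ is continuous and strictly positive) upgrades it to convexity, and in the strict case the exclusion of affine pieces via the strict increment inequality gives strict convexity. So the trade-off is that the paper's treatment is shorter by taking the increment characterisation as a known fact, while yours is self-contained, makes explicit exactly where continuity of $D$ is used, and handles the strict case with the care it actually requires. One incidental benefit of your version: the paper's stated equivalent definition in Section~2.1 has its inequality oriented so that, read literally, it characterises concavity rather than convexity (the later interval should carry the \emph{larger} increment), and your derivation sidesteps that slip by proving the direction you need from first principles.
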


We say that discount function $D$ is (strictly) DI if the preference order $\succcurlyeq$ exhibits (strictly) DI and has a discounted utility representation with discount function $D$.

We next show that a preference order $\succcurlyeq$ with a discounted utility representation exhibits strictly  DI if and only if it exhibits present bias in the sense of Jackson and Yariv \cite[p. 4190]{jackson2014present}. It is important to note, however, that Jackson and Yariv assume a discrete time setting, whereas we allow time to be continuous.  The following Definition~\ref{pb} is, therefore, the continuous-time analogue of their present bias definition.\footnote{There is an inconsistency between Jackson and Yariv's Present Bias definition in their 2014 paper (referenced here) and their 2015 paper \cite{jackson2015collective}. We adhere to the former definition.}

\begin{definition}[Present Bias] \label{pb}
The preference order is present-biased if
\begin{enumerate}[(i)]
\item $(x,t) \preccurlyeq (y,s)$ implies $(x,t+\sigma) \preccurlyeq (y,s+\sigma)$ for every $x, y$, every $\sigma>0$ and every $s,t\in T$ such that $s>t\geq0$; and
\item for every $s,t\in T$ with $s>t\geq 0$ and every $\sigma>0$ there exist $x^*$ and $y^*$ such that $(x^*,t+\sigma) \prec (y^*,s+\sigma)$ and $(x^*,t) \succ (y^*,s)$.
\end{enumerate}
\end{definition}

Proposition \ref{pbeq} gives conditions which are equivalent to present bias for preferences with a discounted utility representation:

\begin{proposition} \label{pbeq}
Suppose that $\succcurlyeq$ has a discounted utility representation. Then the first condition of Definition \ref{pb} is equivalent to convexity of $\ln D(t)$; while the second condition of Definition \ref{pb} is equivalent to strict convexity of $\ln D(t)$.
\end{proposition}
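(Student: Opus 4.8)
\emph{Proof plan.}
The plan is to use the discounted utility representation to convert both parts of Definition~\ref{pb} into statements about the ratio $D(s)/D(t)$, and then to recognise these as the increment characterisation of (strict) convexity of $g := \ln D$. Throughout I would work with positive outcomes: since $u$ is continuous and strictly increasing with $u(0)=0$, its range is an interval with left endpoint $0$, so $u(x)/u(y)$ is well defined and strictly positive whenever $x,y>0$. The key preliminary observation is that, as $x,y$ range over the positive outcomes, the ratio $\rho := u(x)/u(y)$ attains every value in $(0,\infty)$: for a target $\rho>0$ one chooses $y>0$ small enough that both $u(y)$ and $\rho\,u(y)$ lie in the range of $u$, and then picks $x$ with $u(x)=\rho\,u(y)$. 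The degenerate cases $x=0$ or $y=0$ only ever make the relevant (in)equalities trivially true or vacuous, so they are dismissed at the outset, and restricting to $x,y>0$ is without loss of generality.

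First I would treat condition (i). Fix $s>t\geq 0$ and $\sigma>0$. Dividing $U(x,t)=D(t)u(x)$ by $u(y)>0$, the implication $(x,t)\preccurlyeq(y,s)\Rightarrow(x,t+\sigma)\preccurlyeq(y,s+\sigma)$ becomes
\[
\rho\leq \frac{D(s)}{D(t)}\ \Longrightarrow\ \rho\leq \frac{D(s+\sigma)}{D(t+\sigma)}.
\]
Since $\rho$ ranges over all of $(0,\infty)$, this implication holds for every $\rho$ if and only if
\[
\frac{D(s)}{D(t)}\leq \frac{D(s+\sigma)}{D(t+\sigma)}.
\]
Here the ``if'' direction is immediate, while the ``only if'' direction follows by taking $\rho=D(s)/D(t)$, which is attainable. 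Taking logarithms and rearranging turns this into $g(t+\sigma)-g(t)\leq g(s+\sigma)-g(s)$ for all $t<s$ and $\sigma>0$; that is, the increments of $g=\ln D$ over equal-length intervals do not decrease as the interval is shifted to the right. By the increment characterisation of convexity recalled above, this is exactly convexity of $\ln D$.

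Condition (ii) is handled in the same spirit, except that the relevant fact is now membership of $\rho$ in an open interval. For fixed $s>t\geq 0$ and $\sigma>0$, the pair of strict preferences $(x^*,t)\succ(y^*,s)$ and $(x^*,t+\sigma)\prec(y^*,s+\sigma)$ translates, with $\rho=u(x^*)/u(y^*)$, into
\[
\frac{D(s)}{D(t)} < \rho < \frac{D(s+\sigma)}{D(t+\sigma)}.
\]
Such a $\rho$ exists and is attainable as a utility ratio precisely when this open interval is nonempty, i.e. precisely when the strict inequality $D(s)/D(t)<D(s+\sigma)/D(t+\sigma)$ holds. Quantifying over all $s>t$ and $\sigma>0$ and taking logarithms, this reads $g(t+\sigma)-g(t)<g(s+\sigma)-g(s)$ for all $t<s$, which is the strict increment characterisation of strict convexity of $\ln D$.

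The routine algebra is undemanding; the step that needs care, and which I expect to be the crux, is the surjectivity claim for the ratio $u(x)/u(y)$ onto $(0,\infty)$, together with the attainability of the boundary value $\rho=D(s)/D(t)$ used in part (i). This is exactly where the structural hypotheses on $u$ (continuity, strict monotonicity, and $u(0)=0$) enter: they guarantee that the set of achievable utility ratios is all of $(0,\infty)$, so that the behavioural conditions of Definition~\ref{pb} are tight enough to pin down the weak, respectively strict, comparison of discount ratios with no slack, yielding the stated equivalences.
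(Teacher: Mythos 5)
Your proposal is correct and matches the paper's own argument essentially step for step: both translate the two conditions of Definition~\ref{pb} via the DU representation into the weak, respectively strict, inequality $D(s)/D(t)\leq [<]\,D(s+\sigma)/D(t+\sigma)$, using the range of attainable utility ratios $u(x)/u(y)$ to make the behavioural conditions tight, and then invoke the equal-increment characterisation of (strict) convexity of $\ln D$. Your explicit verification that the ratio $u(x)/u(y)$ attains every value in $(0,\infty)$ (and your dismissal of the degenerate cases $x=0$ or $y=0$) is slightly more careful than the paper's brief remark that $u(x)/u(y)$ can take any value in $[0,1)$, but it is the same idea.
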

\begin{proof}
We start by proving the first equivalence.
Since a discounted utility representation exists, the first condition is equivalent to:
\[
u(x)D(t)\leq u(y)D(s) \ \ \text{ implies }\ \  u(x)D(t+\sigma)\leq u(y)D(s+\sigma) 
\]
for every $x, y$, every $\sigma>0$ and every $s,t\in T$ with $s>t\geq0$. This may be rewritten as follows:
\[
u(x)\leq \frac{D(s)}{D(t)}u(y)\ \  \text{ implies }\ \  u(x)\leq \frac{D(s+\sigma)}{D(t+\sigma)}u(y).
\]
Since $(y, s) \succcurlyeq (x, t)$, $s>t$ and $D$ is strictly decreasing it follows that $u(y)>u(x)$.  As $u(0)=0$ and $u$ is strictly increasing we deduce that $u(y)>0$.
Since $u$ is continuous, $x$ and $y$ can be chosen so that $u(x)/u(y)$ takes any value in $[0, 1)$.
We therefore have:
\[
\frac{D(s)}{D(t)}\leq \frac{D(s+\sigma)}{D(t+\sigma)}.
\]
Alternatively,
\begin{equation}
\label{eq:log}
\ln D(s)+\ln D(t+\sigma) \leq \ln D(s+\sigma)+\ln D(t)
\end{equation}
for every $\sigma>0$ and $s,t\in T$ with $s>t\geq 0$.
Inequality \eqref{eq:log} is equivalent to convexity of $\ln D(t)$.

The second part is proved analogously. Under a discounted utility representation the second condition is equivalent to the following:
for every $s,t\in T$ with $s>t\geq 0$ and every $\sigma>0$ there exist $x^*$ and $y^*$ such that:
\[
u(x^*)D(t+\sigma)< u(y^*)D(s+\sigma) \ \ \text{ but }\ \  u(x^*)D(t)> u(y^*)D(s).
\]
Equivalently,
\[
\frac{D(s)}{D(t)}u(y^*)< u(x^*)<\frac{D(s+\sigma)}{D(t+\sigma)}u(y^*).
\]
From the fact that $(y^*, s+\sigma)$ is preferred to $(x^*, t+\sigma)$ with $s>t$ we deduce that $u(y^*)>0$.
Hence,
\[
\frac{D(s)}{D(t)}<\frac{D(s+\sigma)}{D(t+\sigma)}.
\]
This inequality is equivalent to:
\begin{equation}
\label{eq:stlog}
\ln D(s)+\ln D(t+\sigma) < \ln D(s+\sigma)+\ln D(t)
\end{equation}
for every $s,t\in T$ with $s>t\geq 0$ and every $\sigma>0$. Inequality \eqref{eq:stlog} holds if and only if $\ln D(t)$ is {\em strictly} convex.
\end{proof}

Proposition \ref{pbeq} implies that when a discounted utility representation exists the first condition of Definition \ref{pb} follows from the second one, since {\em strict} convexity of $\ln D(t)$ implies convexity of $\ln D(t)$. An immediate consequence is that present bias is equivalent to strictly DI, as stated below: 

\begin{corollary}\label{pb-di}
Suppose the preference order $\succcurlyeq$ admits a discounted utility representation. Then it
exhibits present bias if and only if $\succcurlyeq$ exhibits strictly DI. 
\end{corollary}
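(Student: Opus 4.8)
The plan is to combine Proposition~\ref{pbeq} with the strict version of Proposition~\ref{strDI}, since both equivalences have already done the heavy lifting and the corollary is meant to be immediate. First I would observe that, by Definition~\ref{pb}, the preference order $\succcurlyeq$ is present-biased precisely when \emph{both} conditions (i) and (ii) hold simultaneously. Proposition~\ref{pbeq} translates each of these into a statement about $\ln D$: condition (i) is equivalent to convexity of $\ln D(t)$, and condition (ii) is equivalent to strict convexity of $\ln D(t)$. Hence present bias is equivalent to the conjunction that $\ln D$ is convex and $\ln D$ is strictly convex.

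The key simplifying step is to notice that this conjunction is redundant: strict convexity of $\ln D$ already entails convexity of $\ln D$. The conjunction therefore collapses, and present bias is equivalent simply to strict convexity of $\ln D(t)$, that is, to $D$ being strictly log-convex on $[0,\infty)$.

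Finally I would invoke the strict version of Proposition~\ref{strDI}, which asserts that $\succcurlyeq$ exhibits strictly DI if and only if $D$ is strictly log-convex on $[0,\infty)$. Chaining the two equivalences yields present bias $\iff$ strict log-convexity of $D$ $\iff$ strictly DI, which is exactly the claim.

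There is no genuine obstacle here; the result is an immediate corollary, as its name suggests. The only point requiring care is the logical structure of Definition~\ref{pb}: one must keep in mind that present bias is the \emph{conjunction} of (i) and (ii), and it is precisely the implication that strict convexity entails convexity (already remarked upon in the text preceding the corollary) which renders condition (i) superfluous once condition (ii) is imposed.
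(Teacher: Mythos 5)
Your proposal is correct and follows essentially the same route as the paper: the text immediately preceding the corollary makes exactly your observation that condition (i) of Definition~\ref{pb} is rendered redundant by condition (ii), since strict convexity of $\ln D(t)$ implies convexity, and then the equivalence with strictly DI follows by chaining Proposition~\ref{pbeq} with Proposition~\ref{strDI}. No differences worth noting.
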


\section{Comparative DI}

\subsection{More DI and log-convexity}

Assume now that there are two decision makers and they are both decreasingly impatient. What does it mean to say that one of them is more decreasingly impatient than the other? The answer to this question is in the following definition:\footnote{Since the sign of $\rho$ is not restricted in Definition \ref{moreDI}, it actually applies to preferences that exhibit decreasing or increasing impatience.}

\begin{definition}[cf. \cite{prelec2004decreasing}, Definition 2; \cite{attema2010time}, Definition 1]\label{moreDI}
We say that $\succcurlyeq_1$ exhibits [strictly] more DI than $\succcurlyeq_2$, if for every $\sigma>0$, every $\rho$, every $s,t\in T$ with $0\leq t<s$ and every $x,x',y,y'\in X$ with $y>x>0$ and $y'>x'>0$, the conditions $(x', t)\sim_2 (y', s)$, $(x', t+\sigma)\sim_2 (y', s+\sigma+\rho)$ and $(x, t)\sim_1 (y, s)$  imply $(x, t+\sigma) \preccurlyeq_1 [\prec_1] \: (y, s+\sigma+\rho)$. 
\end{definition}

Not surprisingly, the (strictly)  more DI relation may be expressed in terms of the comparative convexity of the logarithms of the respective discount functions, for cases in which both preference relations have discounted utility representations. 

\begin{proposition}[cf. \cite{prelec2004decreasing}, Proposition 1]\label{key}
Let $\succcurlyeq_1$ and $\succcurlyeq_2$ be two preference orders with discounted utility representation by $(u_1, D_1)$ and $(u_2, D_2)$, respectively. The following conditions are equivalent:

\begin{enumerate}[(i)]
\item The preference order $\succcurlyeq_1$ exhibits (strictly)  more DI than $\succcurlyeq_2$; 
\item $\ln D_1(D_2^{-1}(e^z))$ is (strictly)  convex in $z$ on $(-\infty, 0]$.
\end{enumerate}

\end{proposition}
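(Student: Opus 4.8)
The plan is to strip away the utility functions, reduce both conditions to statements about $g_i:=\ln D_i$, and then recognise condition (ii) as the convexity of $h(z):=g_1(g_2^{-1}(z))=\ln D_1(D_2^{-1}(e^z))$ written in finite-difference form. Each $g_i$ is continuous, strictly decreasing, with $g_i(0)=0$ and $g_i(\tau)\to-\infty$. Using the discounted utility representations, the two relations $(x',t)\sim_2(y',s)$ and $(x',t+\sigma)\sim_2(y',s+\sigma+\rho)$ both force $u_2(x')/u_2(y')=D_2(s)/D_2(t)=D_2(s+\sigma+\rho)/D_2(t+\sigma)$, so after taking logarithms we obtain the constraint, call it $(\star)$,
\[
g_2(s)-g_2(t)=g_2(s+\sigma+\rho)-g_2(t+\sigma).
\]
Similarly $(x,t)\sim_1(y,s)$ gives $u_1(x)/u_1(y)=D_1(s)/D_1(t)$, and the desired conclusion $(x,t+\sigma)\preccurlyeq_1[\prec_1](y,s+\sigma+\rho)$ becomes $D_1(s)/D_1(t)\le[<]D_1(s+\sigma+\rho)/D_1(t+\sigma)$, i.e. the inequality $(\star\star)$,
\[
g_1(s)-g_1(t)\ \le[<]\ g_1(s+\sigma+\rho)-g_1(t+\sigma).
\]
Thus ``(strictly) more DI'' is precisely the assertion that $(\star)$ implies $(\star\star)$ for all admissible $\sigma>0$, all $\rho$, and all $0\le t<s$.

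Next I would change variables by $z=g_2(\tau)$, a continuous strictly decreasing bijection of $[0,\infty)$ onto $(-\infty,0]$, and set $p=g_2(t)$, $q=g_2(s)$, $p'=g_2(t+\sigma)$, $q'=g_2(s+\sigma+\rho)$. Then $(\star)$ reads $p-q=p'-q'=:\delta$, where $\delta>0$ since $t<s$; the hypothesis $\sigma>0$ forces $p'<p$ and hence $q'<q$, so $[q,p]$ and $[q',p']$ are two subintervals of $(-\infty,0]$ of the same length $\delta$ with $[q,p]$ lying strictly to the right. Since $g_1(\tau)=h(g_2(\tau))$, the inequality $(\star\star)$ becomes $h(q)-h(p)\le[<]h(q')-h(p')$, equivalently
\[
h(p)-h(q)\ \ge[>]\ h(p')-h(q').
\]
Hence ``(strictly) more DI'' says exactly that the increment of $h$ over an interval of fixed length is (strictly) nondecreasing as the interval is shifted to the right. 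Because $D_1,D_2$ are continuous and $D_2$ is strictly monotone, $h$ is continuous; for a continuous function this monotone-increment property is equivalent to (strict) convexity of $h$ on $(-\infty,0]$ (taking two adjacent intervals yields midpoint convexity, which together with continuity gives full convexity), and this is condition (ii).

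It remains to match the quantifiers, and this realizability step is where the main work lies. For (ii)$\Rightarrow$(i) I would take arbitrary outcomes satisfying the premises, extract $(\star)$ and the ratio $u_1(x)/u_1(y)=D_1(s)/D_1(t)$ as above, and apply convexity of $h$ to the corresponding intervals to obtain $(\star\star)$. For (i)$\Rightarrow$(ii) the crucial point is that \emph{every} pair of equal-length subintervals $[q',p']$, $[q,p]$ of $(-\infty,0]$ with $q>q'$ arises from an admissible configuration: put $t=g_2^{-1}(p)$, $s=g_2^{-1}(q)$, $\sigma=g_2^{-1}(p')-t>0$ and $\rho=g_2^{-1}(q')-s-\sigma$, where the absence of any sign restriction on $\rho$ is exactly what allows $q'$ to be placed anywhere. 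Then, as in the proof of Proposition~\ref{pbeq}, the intermediate value theorem applied to the continuous functions $u_2$ and $u_1$ produces outcomes $0<x'<y'$ and $0<x<y$ realising the three indifferences; note that the two ``$2$''-indifferences demand the very same ratio $e^{-\delta}$ and so are simultaneously satisfiable precisely because $(\star)$ holds. Invoking (i) then yields $(\star\star)$, i.e. the increment inequality, for this arbitrary interval pair, giving convexity of $h$.

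The strict versions are handled identically: $\sigma>0$ guarantees $[q,p]$ is genuinely to the right of $[q',p']$, so strict DI corresponds to strictly increasing increments and hence to strict convexity of $h$ (again using continuity to pass from strict midpoint convexity to strict convexity). I expect the only delicate part to be this bookkeeping of the existential-versus-universal quantifiers over outcomes, together with the continuity argument that upgrades the monotone-increment condition to genuine (log-)convexity; everything else is a routine translation through the logarithm and the substitution $z=g_2(\tau)$.
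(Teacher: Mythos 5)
Your argument is correct and takes essentially the same route as the paper's own proof: the paper likewise strips away the utilities to reduce ``(strictly) more DI'' to the equal-increment implication for $h_i=\ln D_i$ (its Lemma \ref{hh}, which is precisely your monotone-increment characterization of convexity of $h_1\circ h_2^{-1}$ on $(-\infty,0]$), and it realizes arbitrary admissible time configurations via the continuity of $u_1$ and $u_2$ exactly as in your realizability step, including the observation that the two $\sim_2$ indifferences are simultaneously satisfiable because they demand the same ratio. The only cosmetic differences are that the paper proves (ii)$\Rightarrow$(i) by contraposition rather than directly, and that it invokes a stated finite-difference characterization of convexity where you pass through midpoint convexity plus continuity --- a step you are right to make explicit, since the increment condition over non-overlapping equal-length intervals upgrades to genuine (strict) convexity only with continuity in hand.
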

\begin{proof}

See the Appendix. We follow Prelec's argument for his Proposition 1 in \cite{prelec2004decreasing}. The additional adjustment is the necessity to replace convexity of the log-transformed discount function with strict convexity for the strictly more DI case. The required adjustments are not substantial but we have included a detailed proof as it clarifies some details omitted from Prelec's original version \cite{prelec2004decreasing}.
\end{proof}
Note that the form of the utility functions $u_1$ and $u_2$ does not influence the comparative DI properties of preference relations. 

\begin{corollary}\label{exp}
Let $\succcurlyeq_1$ and $\succcurlyeq_2$ be two preference relations with discounted utility representations $(u_1, D_1)$ and $(u_2, D_2)$, respectively, where $D_2(t)=\delta^t$ and $\delta \in (0, 1)$. 
The preference order $\succcurlyeq_1$ exhibits (strictly)  DI if and only if it exhibits (strictly) more DI than $\succcurlyeq_2$.
\end{corollary}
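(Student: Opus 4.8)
The plan is to derive the result directly from Proposition~\ref{key} together with Proposition~\ref{DI-conv}, exploiting the fact that exponential discounting yields a \emph{linear} log-discount function, which sits exactly on the boundary between DI and II. Intuitively, ``more DI than exponential'' should therefore collapse to ``DI'' outright, and the proof is just a matter of making this precise through the characterization already established.

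First I would compute the composite function appearing in condition~(ii) of Proposition~\ref{key} for the specific choice $D_2(t)=\delta^t$. Since $\delta\in(0,1)$, the inverse is $D_2^{-1}(w)=\ln w/\ln\delta$, so that $D_2^{-1}(e^z)=z/\ln\delta$. Writing $c=1/\ln\delta$, which is negative because $\ln\delta<0$, the relevant function becomes
\[
\ln D_1\bigl(D_2^{-1}(e^z)\bigr)=\ln D_1(cz).
\]
Thus the entire question reduces to the (strict) convexity of $z\mapsto \ln D_1(cz)$ on $(-\infty,0]$.

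Next I would observe that $z\mapsto cz$ is affine, and that composing a function with an affine map preserves both convexity and strict convexity: for any constant $c$, if $h$ is (strictly) convex then so is $z\mapsto h(cz)$, as one sees immediately from the defining inequality after substituting $cz_1$ and $cz_2$ for the two arguments. Moreover, because $c<0$, the map $z\mapsto cz$ carries the interval $(-\infty,0]$ bijectively onto $[0,\infty)$. Hence $\ln D_1(cz)$ is (strictly) convex in $z$ on $(-\infty,0]$ if and only if $\ln D_1(t)$ is (strictly) convex in $t$ on $[0,\infty)$.

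Finally I would invoke Proposition~\ref{DI-conv}, which states that $\ln D_1$ is (strictly) convex on $[0,\infty)$ precisely when $\succcurlyeq_1$ exhibits (strictly) DI. Chaining the three equivalences---Proposition~\ref{key}, the affine-substitution step, and Proposition~\ref{DI-conv}---yields the claim. The only point requiring care is the negative sign of $c$: one must verify that reversing orientation does not disturb strict convexity and that $(-\infty,0]$ maps onto the \emph{full} half-line $[0,\infty)$ over which DI is characterized. Both checks are routine, so I do not anticipate any genuine obstacle; the substantive content is simply that the exponential benchmark has a linear log-discount function, whence comparison against it reduces to an absolute convexity condition.
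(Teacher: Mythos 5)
Your proof is correct and takes essentially the same approach as the paper's: compute $D_2^{-1}(e^z)=z/\ln\delta$, observe that this affine change of variable maps $(-\infty,0]$ onto $[0,\infty)$ while preserving (strict) convexity, and then chain Proposition~\ref{key} with Proposition~\ref{DI-conv}. Your explicit check that the negative coefficient $c=1/\ln\delta$ neither disturbs strict convexity nor fails to cover the full half-line is just a slightly more careful rendering of the substitution the paper performs in one step.
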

\begin{proof}
 Prelec \cite{prelec2004decreasing} proves that a preference relation is DI if and only if it is more DI than an exponential discount function. We prove the ``strict'' part of the claim.
 
Since $D_2(t) = \delta^t$ and $\delta \in (0, 1)$ we have \[D_2^{-1}(e^z)\ =\ \frac{z}{\ln \delta}\ \ge\  0.\] 
By Proposition \ref{key}, for $\succcurlyeq_1$ to exhibit strictly more DI than $\succcurlyeq_2$ it is necessary and sufficient that $\ln D_1(D_2^{-1}(e^z))$ is strictly  convex in $z$ on $(-\infty, 0]$.
However,
\[
\ln D_1(D_2^{-1}(e^z))\  =\  \ln D_1 \left( \frac{z}{\ln\delta}\right)\ =\  \ln D_1(t),
\]
where \[t = \frac{z}{\ln \delta}\in [0,\infty)  \] when $z$ takes arbitrary values in $(-\infty, 0]$. Therefore, strict convexity of $\ln D_1(D_2^{-1}(e^z))$ in $z$ on $(-\infty,0]$ is equivalent to strict convexity of $\ln D_1(t)$ in $t$ on $[0, \infty)$.
By Proposition \ref{strDI}, strict convexity of $\ln D_1(t)$ in $t$ on $[0, \infty)$ is equivalent to $\succcurlyeq_1$ exhibiting strictly  DI.
\end{proof}
The following notations will be used below:
\begin{itemize}
\item If $D_1$ and $D_2$ represent equally DI preferences, we write $D_1 \sim_{DI} D_2$; 
\item If $D_1$ represents more DI preferences than $D_2$, we write $D_1 \succcurlyeq_{DI} D_2$; 
\item If $D_1$ represents strictly more DI preferences than $D_2$, we write $D_1 \succ_{DI} D_2$.
\end{itemize}

The following corollary, due to Prelec \cite{prelec2004decreasing}, characterizes the relation between any two discount functions from the same DI class.

\begin{corollary}[\cite{prelec2004decreasing}] \label{eqDI}
For any two discount functions $D_1$ and $D_2$, we have $D_1 \sim_{DI} D_2$ if and only if $D_1(t) = D_2(t)^c$, where $c>0$ is a constant not depending on $t$.
\end{corollary}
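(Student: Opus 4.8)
The plan is to apply the characterization in Proposition~\ref{key} in both directions and to recognize that the two resulting log-transformed functions are inverses of one another. First I would unpack the notation: $D_1 \sim_{DI} D_2$ means that $\succcurlyeq_1$ exhibits more DI than $\succcurlyeq_2$ \emph{and} $\succcurlyeq_2$ exhibits more DI than $\succcurlyeq_1$. By Proposition~\ref{key}, the first assertion says that
\[
h(z) = \ln D_1\bigl(D_2^{-1}(e^z)\bigr)
\]
is convex in $z$ on $(-\infty,0]$, while the second says that
\[
g(w) = \ln D_2\bigl(D_1^{-1}(e^w)\bigr)
\]
is convex in $w$ on $(-\infty,0]$. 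So the equal-DI hypothesis is exactly that \emph{both} $h$ and $g$ are convex.

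The crucial observation I would establish next is that $g = h^{-1}$. Parametrising by $t \ge 0$ and setting $z = \ln D_2(t)$, so that $e^z = D_2(t)$ and $t = D_2^{-1}(e^z)$, gives $h(z) = \ln D_1(t)$. Writing $w = \ln D_1(t)$, so that $t = D_1^{-1}(e^w)$, gives $g(w) = \ln D_2(t) = z$. Hence $w = h(z)$ and $z = g(w)$, so $h$ and $g$ are mutually inverse. Since $D_1$ and $D_2$ are strictly decreasing with value $1$ at $t=0$ and limit $0$ as $t\to\infty$, the map $h$ is a strictly increasing bijection of $(-\infty,0]$ onto itself with $h(0)=0$.

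The heart of the argument is then the elementary fact that a strictly increasing convex function whose inverse is also convex must be affine. Indeed, the inverse of a strictly increasing convex function is concave, so convexity of $h^{-1} = g$ forces $g$ to be simultaneously convex and concave, hence affine; consequently $h$ is affine as well. As $h(0)=0$ and $h$ is strictly increasing, we get $h(z) = cz$ for some constant $c>0$. Translating back, $\ln D_1(t) = h\bigl(\ln D_2(t)\bigr) = c\,\ln D_2(t)$, which gives $D_1(t) = D_2(t)^c$. For the converse I would simply substitute $D_1 = D_2^c$ into the two transformed functions: a direct computation yields $h(z) = cz$ and $g(w) = w/c$, both linear and hence convex, so Proposition~\ref{key} returns $D_1 \succcurlyeq_{DI} D_2$ and $D_2 \succcurlyeq_{DI} D_1$, i.e. $D_1 \sim_{DI} D_2$.

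I expect the main obstacle to be the careful verification that $g$ and $h$ are genuinely inverse to each other — keeping the domains $(-\infty,0]$ and the direction of monotonicity straight — together with the ``convex function with convex inverse is affine'' step. Both are routine once set up, but the proof hinges entirely on spotting the inverse relationship, which converts the two-sided DI comparison into a single rigidity statement about $h$.
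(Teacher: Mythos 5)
Your proof is correct: the verification that $g = h^{-1}$ (via the parametrisation $z = \ln D_2(t)$, $w = \ln D_1(t)$), the observation that the inverse of a strictly increasing convex function is concave — so convexity of both $h$ and $g$ forces $g$, hence $h$, to be affine — and the normalization $h(0)=0$ with strict monotonicity yielding $h(z) = cz$, $c>0$, i.e.\ $D_1 = D_2^c$, together with the converse substitution, all check out. The paper states this corollary without proof, citing Prelec, and your argument is essentially Prelec's own: reducing the two-sided comparison via Proposition~\ref{key} to the rigidity statement that a convex bijection of $(-\infty,0]$ with convex inverse is linear.
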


The $\succcurlyeq_{DI}$ relation is a partial order. In fact, the ``more DI'' and ``strictly more DI'' relations are both transitive 
This is established in the following proposition.

\begin{proposition}\label{trans}
If $D_1 \succcurlyeq_{DI} D_2$ and $D_2 \succcurlyeq_{DI} D_3$, then $D_1\succcurlyeq_{DI}D_3$. If at least one of the relations $D_1 \succcurlyeq_{DI} D_2$ or $D_2 \succcurlyeq_{DI} D_3$ is strict, then $D_1 \succ_{DI} D_3$.
\end{proposition}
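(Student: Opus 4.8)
The plan is to reduce everything to convexity of the log-transformed discount functions via Proposition~\ref{key}, and then recognize transitivity as an instance of the composition rule in Lemma~\ref{conv}. For $i,j\in\{1,2,3\}$ I would write $\phi_{ij}(z)=\ln D_i(D_j^{-1}(e^z))$, defined for $z\in(-\infty,0]$. By Proposition~\ref{key}, $D_i\succcurlyeq_{DI}D_j$ is equivalent to convexity of $\phi_{ij}$ on $(-\infty,0]$, and $D_i\succ_{DI}D_j$ is equivalent to strict convexity of $\phi_{ij}$. Thus the hypotheses give convexity of $\phi_{12}$ and of $\phi_{23}$ (with at least one strict in the second claim), and the goal is convexity (respectively strict convexity) of $\phi_{13}$.

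The key algebraic step is the composition identity $\phi_{13}=\phi_{12}\circ\phi_{23}$. To establish it, I would fix $z\le 0$, set $w=\phi_{23}(z)=\ln D_2(D_3^{-1}(e^z))$, and exponentiate to get $e^w=D_2(D_3^{-1}(e^z))$, whence $D_2^{-1}(e^w)=D_3^{-1}(e^z)$. Substituting into the definition of $\phi_{13}$ then gives $\phi_{13}(z)=\ln D_1(D_3^{-1}(e^z))=\ln D_1(D_2^{-1}(e^w))=\phi_{12}(\phi_{23}(z))$. Alongside this I would record the domain bookkeeping that makes the composition legitimate: for $z\le 0$ we have $e^z\in(0,1]$, so $D_3^{-1}(e^z)\in[0,\infty)$, hence $D_2(D_3^{-1}(e^z))\in(0,1]$ and $w=\phi_{23}(z)\in(-\infty,0]$, which is precisely the domain of $\phi_{12}$. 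So the range of $\phi_{23}$ is contained in the domain of $\phi_{12}$.

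To invoke Lemma~\ref{conv} with $f_1=\phi_{12}$ and $f_2=\phi_{23}$, I still need $\phi_{12}$ to be non-decreasing, and in fact it is strictly increasing. This follows by chasing monotonicities: as $z$ increases, $e^z$ increases; since $D_2$ is strictly decreasing so is $D_2^{-1}$, hence $D_2^{-1}(e^z)$ decreases; since $D_1$ is strictly decreasing, $D_1(D_2^{-1}(e^z))$ increases; and $\ln$ is increasing, so $\phi_{12}$ is strictly increasing. With $\phi_{12}$ strictly increasing and convex, $\phi_{23}$ convex, and the range condition verified, Lemma~\ref{conv} yields convexity of $\phi_{12}\circ\phi_{23}=\phi_{13}$, i.e.\ $D_1\succcurlyeq_{DI}D_3$. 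For the strict claim, $D_1\succ_{DI}D_2$ makes $\phi_{12}$ strictly convex and $D_2\succ_{DI}D_3$ makes $\phi_{23}$ strictly convex; since $\phi_{12}$ is already strictly increasing, the second part of Lemma~\ref{conv} (strict convexity of either factor suffices) gives strict convexity of $\phi_{13}$, i.e.\ $D_1\succ_{DI}D_3$.

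I expect the only genuinely delicate point to be the composition identity together with the verification that the domains and ranges line up on $(-\infty,0]$; once that bookkeeping is in place, the monotonicity check and the appeal to Lemma~\ref{conv} are routine. A subtlety worth stating carefully is that $D_1\succcurlyeq_{DI}D_3$ is defined through $\phi_{13}$ with this specific ordering of indices, so the identity $\phi_{13}=\phi_{12}\circ\phi_{23}$ must be oriented correctly for the conclusion to match the intended relation.
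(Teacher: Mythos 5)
Your proof is correct and takes essentially the same route as the paper's: both reduce the claim via Proposition~\ref{key} to convexity of the log-transformed functions, establish the composition identity (the paper writes it as $h_1\circ h_3^{-1}=(h_1\circ h_2^{-1})\circ(h_2\circ h_3^{-1})$ with $h_i=\ln D_i$, which is your $\phi_{13}=\phi_{12}\circ\phi_{23}$), observe that the outer function is increasing as a composition of two decreasing functions, and invoke Lemma~\ref{conv} for both the weak and strict conclusions. Your explicit domain bookkeeping on $(-\infty,0]$ is slightly more careful than the paper's, but it is the same argument.
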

\begin{proof}
Suppose $D_1 \succcurlyeq_{DI} D_2$ and $D_2 \succcurlyeq_{DI} D_3$. By Proposition \ref{key}, we know that both \linebreak $\ln D_1(D_2^{-1}(e^z))$ and $\ln D_2(D_3^{-1}(e^z))$ are convex in $z$ on $(-\infty, 0]$.  Defining $h_i=\ln D_i$ for $i\in\{1,2,3\}$, we can
equivalently  state that $h_1\circ h_2^{-1}$ and $h_2\circ h_3^{-1}$ are convex on $(-\infty, 0]$. To prove transitivity it is sufficient to show that $\ln D_1(D_3^{-1}(e^z))$ is convex in $z$ on $(-\infty, 0]$, or equivalently that $h_1\circ h_3^{-1}$ is convex on $(-\infty, 0]$. 

Let $f_1=h_1\circ h_2^{-1}$ and $f_2=h_2\circ h_3^{-1}$. 
Then
\[
\ln D_1\left( D_3^{-1}\left( e^z\right)\right)=h_1\left(h_3^{-1}(z)\right) = h_1 h_2^{-1} \left(h_2 h_3^{-1}(z)\right)=f_1\circ f_2(z) = f(z).  
\]
By the assumption,  $f_1$ and $f_2$ are convex functions. Note that $f_1$  is increasing, as the composition of two decreasing functions $h_1$ and $h_2^{-1}$. Indeed, $h_1 = \ln D_1$ is a strictly decreasing function as $D_1$ is strictly decreasing, and $h_2^{-1}$ is a decreasing function as the inverse of the decreasing function $h_2$.
 Lemma \ref{conv} then implies that $f(z)=f_1 \circ f_2(z) = \ln D_1\left( D_2^{-1}(e^z)\right)$ is convex, and that $f$ is strictly convex if $f_i$ is strictly convex for some $i\in\{1,2\}$.
\end{proof}

\subsection{Time Preference Rate and Index of DI} \label{IndexDI}

In this section we assume that $D$ is twice continuously differentiable.
The {\em rate of time preference}, $r(t)$, is defined as follows: \[r(t)\ =\  -\frac{D'(t)}{D(t)}. \] The following lemma relates the DI property to the behaviour of $r(t)$.\footnote{Takeuchi \cite{takeuchi2011non} contains a related result. His Corollary 1 says that the {\em hazard function} is decreasing (increasing) if and only if preferences exhibit decreasing (increasing) impatience. Takeuchi's hazard function $h(t)$ corresponds to our time preference rate $r(t)$. However, Takeuchi does not analyse the case of strictly decreasing impatience.}
\begin{lemma}\label{rate}
Let $\succcurlyeq$ be a preference relation with DU representation $(u,D)$ in which $D$ is twice continuously differentiable. Then the following conditions are equivalent:
	\begin{enumerate}[(i)]
		\item The preference relation exhibits (strictly) DI;
		\item The time preference rate $r(t)$ is (strictly) decreasing on $[0, \infty)$. 
	\end{enumerate}
\end{lemma}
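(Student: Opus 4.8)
The plan is to route everything through Proposition~\ref{DI-conv}, which already identifies (strictly) DI with (strict) log-convexity of $D$. It therefore suffices to show that, under the twice-continuous-differentiability hypothesis, $D$ is (strictly) log-convex on $[0,\infty)$ if and only if $r(t)$ is (strictly) decreasing there. Set $h(t)=\ln D(t)$; since $D(t)>0$ and $D$ is $C^2$, the function $h$ is $C^2$ as well, and differentiating gives $h'(t)=D'(t)/D(t)=-r(t)$, so that $r$ is itself $C^1$ and $h''(t)=-r'(t)$.

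First I would treat the non-strict case. By the twice-differentiable criterion for convexity recalled earlier, $h$ is convex on $[0,\infty)$ if and only if $h''\geq 0$ throughout, i.e.\ if and only if $r'(t)\leq 0$ for all $t$. Since $r$ is differentiable, $r'\leq 0$ everywhere is equivalent to $r$ being (weakly) decreasing. This closes the loop: DI $\Leftrightarrow$ log-convexity of $D$ $\Leftrightarrow$ $r$ decreasing.

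For the strict case I would invoke the corresponding twice-differentiable criterion for strict convexity: $h$ is strictly convex if and only if $h''\geq 0$ and the zero set $\{t : h''(t)=0\}$ contains no non-trivial interval. Translating through $h''=-r'$, this says exactly that $r'\leq 0$ everywhere and $\{t : r'(t)=0\}$ contains no non-trivial interval. The one genuine point to check is that this condition is equivalent to $r$ being strictly decreasing. This is the standard first-derivative characterization of strict monotonicity: a $C^1$ function with $r'\leq 0$ fails to be strictly decreasing precisely when it is constant on some non-degenerate subinterval, i.e.\ precisely when $\{t : r'(t)=0\}$ contains such an interval. Hence strict log-convexity of $D$ is equivalent to $r$ being strictly decreasing, and combining with Proposition~\ref{DI-conv} yields the equivalence of (i) and (ii) in the strict reading.

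There is no serious obstacle here; the argument is a direct differentiation combined with the convexity/monotonicity dictionary. If anything warrants care, it is matching the ``no non-trivial interval'' clause on the second-derivative side (strict convexity of $h$) with the analogous clause on the first-derivative side (strict monotonicity of $r$); but both reduce to ruling out an interval on which $r$ is locally constant, so they coincide.
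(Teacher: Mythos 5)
Your proposal is correct and follows essentially the same route as the paper: both pass through Proposition~\ref{DI-conv} to identify (strictly) DI with (strict) log-convexity of $D$, and both use the twice-differentiable criteria for (strict) convexity together with the characterization of strict monotonicity via $r'\leq 0$ with no non-trivial interval of zeros (your identity $h''=-r'$ is just the paper's computation $r'=-\bigl(D''D-(D')^{2}\bigr)/D^{2}$ in logarithmic form). No gaps.
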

\begin{proof}
Suppose that  $r(t)$ is decreasing on $[0, \infty)$. This is equivalent to 
\[
r'(t)\ =\ -\frac{D''(t)D(t)-\left( D'(t)\right)^2}{D(t)^2)}\ =\  \frac{\left(D'(t)\right)^2-D''(t)D(t)}{D(t)^2} \ \leq\  0.
\]
Or, alternatively, $D''D-\left( D'\right)^2 \geq 0$.  This inequality is equivalent to log-convexity of $D$, which, by Proposition \ref{DI-conv}, means that the preference order exhibits DI.
	
To prove the equivalence of strictly DI preferences and a strictly decreasing rate of time preference, recall that a continuously differentiable function $r \colon \mathbb{R}_+ \to \mathbb{R}$ is strictly decreasing if and only if $r'(t) \leq 0$ for all $t$ and the set $\left\{ \: t \; \vline \; r'(t)=0 \: \right\}$ contains no non-trivial interval \cite{stein2012twice, rockafellar1998variational}. If a function $v$ is differentiable on an open interval $I \subset \mathbb{R}$, then $v$ is strictly convex on $I$ if and only if $v'$ is strictly increasing on $I$ \cite{rockafellar1998variational}. Assume that $r(t)$ is strictly decreasing on $[0, \infty)$. Let $M \subseteq \mathbb{R}_+$ be the set of $t$ values such that $r'(t)<0$. Then  $D''(t)D(t)-\left[ D'(t)\right]^2 > 0$ for all $t \in M$. Since $\mathbb{R}_+ \setminus M$ contains no non-trivial interval, $r'(t)$ being strictly decreasing is equivalent to $D$ being strictly log-convex.	
\end{proof}

One way to measure the level of DI for suitably differentiable discount functions was suggested by Prelec \cite{prelec2004decreasing}.
Since more DI  preferences have discount functions which are more log-convex, the natural criterion would be some measure of convexity of the log of the discount function. The Arrow-Pratt coefficient, which is a measure of the concavity of a function, can be adapted to this purpose. Indeed, a non-increasing rate of time preference, $r'(t)\leq 0$, is precisely analogous to the notion of decreasing risk aversion in Pratt \cite{pratt1964risk}. 

Recall that $D$ is a twice continuously differentiable function. 
The associated {\em rate of impatience, $IR(D)$,} is defined as follows:
\[
IR(D)\ =\  -\frac{D''}{D'}.
\]
The {\em index of DI} of $D$, denoted $I_{DI}(D)$, is the difference between the rate of impatience and the rate of time preference:
\[
I_{DI}(D)\ =\ IR(D)-r(D)\ =\ \left(-\frac{D''}{D'}\right) - \left(-\frac{D'}{D}\right).
\]
Note that 
\begin{equation} \label{eq:indexf}
I_{DI}(D)(t)\ =\ -\frac{r'(t)}{r(t)}\ =\ -\frac{d}{dt} \ln \left[r(t)\right].
\end{equation}
Prelec \cite{prelec2004decreasing} proved that if $\succcurlyeq_1$ and $\succcurlyeq_2$ both have DU representations with twice continuously differentiable discount functions, $D_1$ and $D_2$ respectively, then  $\succcurlyeq_1$ exhibits more DI than $\succcurlyeq_2$ if and only if
$I_{DI}(D_1) \geq I_{DI}(D_2)$ on the interval $[0, \infty)$. The following proposition strengthens this result.
\begin{proposition} \label{compare_index}
Let $\succcurlyeq_1$ and $\succcurlyeq_2$ have DU representations with discount functions $D_1$ and $D_2$, respectively, where $D_1$ and $D_2$ are twice continuously differentiable. Then the preference order $\succcurlyeq_1$ exhibits strictly more DI than $\succcurlyeq_2$ if and only if
$I_{DI}(D_1) \geq I_{DI}(D_2)$ on the interval $[0, \infty)$ and $\left\{ \: t \ \lvert \ I_{DI}(D_1)(t) = I_{DI}(D_2)(t) \: \right\}$ contains no non-trivial interval.
\end{proposition}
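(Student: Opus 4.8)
The plan is to reduce the statement to the comparative-convexity criterion of Proposition~\ref{key} and then compute a second derivative. Write $h_i = \ln D_i$ for $i \in \{1,2\}$; each $h_i$ is $C^2$ and strictly decreasing, with $h_i' < 0$ throughout, since every $D_i$ is positive, $C^2$ and strictly decreasing. Because $D_2(0)=1$ and $D_2(t)\to 0$ as $t\to\infty$, the map $t \mapsto z = h_2(t)$ is a $C^2$ orientation-reversing bijection of $[0,\infty)$ onto $(-\infty, 0]$ whose inverse $h_2^{-1}$ is likewise $C^2$ (as $h_2'$ never vanishes). By Proposition~\ref{key}, $\succcurlyeq_1$ exhibits strictly more DI than $\succcurlyeq_2$ if and only if
\[
\phi(z) := \ln D_1\bigl(D_2^{-1}(e^z)\bigr) = h_1\bigl(h_2^{-1}(z)\bigr)
\]
is strictly convex in $z$ on $(-\infty, 0]$. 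Since $\phi$ is $C^2$, I will invoke the characterization of strict convexity recorded in the Preliminaries: $\phi$ is strictly convex precisely when $\phi'' \geq 0$ on $(-\infty,0]$ and the set $\{z : \phi''(z)=0\}$ contains no non-trivial interval.

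The key step is an explicit formula for $\phi''$. Differentiating $\phi = h_1\circ h_2^{-1}$ twice and writing $t = h_2^{-1}(z)$ gives
\[
\phi''(z) = \frac{h_1''(t)\,h_2'(t) - h_1'(t)\,h_2''(t)}{\bigl(h_2'(t)\bigr)^{3}}.
\]
A one-line check against \eqref{eq:indexf} shows $I_{DI}(D_i) = -h_i''/h_i'$, since $r = -D'/D = -h'$ forces $-r'/r = -h''/h'$. Because $h_1'<0$ and $h_2'<0$, we have $(h_2')^3 < 0$ while $h_1' h_2' > 0$; dividing the numerator by $h_1' h_2'$ therefore yields, for each $z$ with corresponding $t = h_2^{-1}(z)$,
\[
\phi''(z) \geq 0 \iff \frac{h_1''(t)}{h_1'(t)} \leq \frac{h_2''(t)}{h_2'(t)} \iff I_{DI}(D_1)(t) \geq I_{DI}(D_2)(t),
\]
and likewise $\phi''(z) = 0 \iff I_{DI}(D_1)(t) = I_{DI}(D_2)(t)$. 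In particular $\phi'' \geq 0$ on $(-\infty, 0]$ holds exactly when $I_{DI}(D_1) \geq I_{DI}(D_2)$ on $[0,\infty)$, which recovers Prelec's non-strict characterization and handles the first half of the required condition.

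It then remains to transfer the ``no non-trivial interval'' clause between the two coordinates. The zero set $\{z : \phi''(z) = 0\}$ is exactly the image under $t \mapsto h_2(t)$ of the equality set $E = \{t : I_{DI}(D_1)(t) = I_{DI}(D_2)(t)\}$. As $h_2$ is a continuous strictly decreasing bijection between $[0,\infty)$ and $(-\infty,0]$, it maps non-degenerate intervals to non-degenerate intervals in both directions, so $\{z : \phi''(z)=0\}$ contains a non-trivial interval if and only if $E$ does. Combining this with the strict-convexity criterion and Proposition~\ref{key} gives the claim: $\succcurlyeq_1$ is strictly more DI than $\succcurlyeq_2$ iff $\phi$ is strictly convex, iff $\phi''\ge 0$ with $\{z:\phi''(z)=0\}$ of empty interior, iff $I_{DI}(D_1)\ge I_{DI}(D_2)$ on $[0,\infty)$ and $E$ contains no non-trivial interval. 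I expect the only delicate points to be the sign bookkeeping in the formula for $\phi''$ and the verification that the change of variables preserves the interval condition; both become routine once the $C^2$ diffeomorphism structure of $h_2$ is in place.
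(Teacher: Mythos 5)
Your proof is correct, and it arrives at exactly the paper's target condition --- $h_1''/h_1' - h_2''/h_2' \le 0$ with the equality set containing no non-trivial interval --- but by a more direct computation than the paper uses. The paper, adapting Pratt's original argument, never writes down $\phi''$: it computes only the first derivative $\phi'(z) = h_1'(h_2^{-1}(z))/h_2'(h_2^{-1}(z))$, reduces strict convexity of $\phi$ to strict monotonicity of $\phi'$, then to strict decrease of $h_1'(x)/h_2'(x)$ in $x$, then (via the logarithm trick) to strict decrease of $\log\left[h_1'(x)/h_2'(x)\right]$, and only then differentiates, invoking the criterion that a $C^1$ function is strictly decreasing iff its derivative is $\le 0$ with zero set of empty interior. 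You instead compute $\phi''(z) = \left[h_1''(t)h_2'(t) - h_1'(t)h_2''(t)\right]/\left(h_2'(t)\right)^3$ in one chain-rule step and invoke the second-derivative characterization of strict convexity stated in the paper's Preliminaries ($f'' \ge 0$ plus $\{x : f''(x)=0\}$ containing no non-trivial interval); this collapses the paper's chain of monotonicity equivalences into a single sign computation. A genuine merit of your write-up is that you make explicit the transfer of the ``no non-trivial interval'' clause through the strictly monotone bijection $z = h_2(t)$, a step the paper leaves implicit when it passes between the $z$- and $x$-variables. One small caveat: your assertion that $h_i' < 0$ throughout ``since $D_i$ is strictly decreasing'' is an overstatement, as a strictly decreasing $C^2$ function can have isolated critical points (e.g.\ $D(t) = e^{-(t-1)^3 - 1}$); this is harmless here and shared with the paper, since $I_{DI}(D_i) = -r_i'/r_i$ is only defined when $r_i = -h_i' > 0$, so the proposition implicitly presupposes $D_i' < 0$, which is also what makes $h_2^{-1}$, and hence $\phi$, genuinely $C^2$.
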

\begin{proof}
Prelec's \cite[Proposition 2]{prelec2004decreasing} proof applies the Arrow-Pratt coefficient \cite{pratt1964risk}, which is used to compare the concavity of functions. There is no straightforward adaptation of Prelec's argument to the case of {\em strict} concavity. We therefore adapt Pratt's \cite{pratt1964risk} original argument directly.

Recall that $D_1$ is strictly more DI than $D_2$ if and only if $\ln (D_1)$ is strictly more convex than $\ln (D_2)$ on $[0, \infty)$.
Let $h_1=\ln (D_1)$ and $h_2=\ln (D_2)$, so $h_1$ and $h_2$ are strictly decreasing functions.
The function $h_1$ is strictly more convex than $h_2$ on $(-\infty,0]$ if and only if there exists a strictly convex transformation $f$ such that $h_1=f(h_2)$, or, equivalently, $h_1\left(h^{-1}_2(z)\right)$ is strictly convex on $(-\infty,0]$.

The first derivative of $h_1\left(h^{-1}_2(z)\right)$ is:
\begin{equation} \label{der}
\frac{d h_1\left( h^{-1}_2(z)\right)}{dz}\ =\ \frac{h_1'\left( h^{-1}_2(z)\right)}{h_2'\left(h^{-1}_2(z)\right)}.
\end{equation}
We need to show that expression \eqref{der} is strictly increasing. Note that $h_2^{-1}(z)$ is strictly decreasing since $h_2$ is strictly decreasing. Therefore, \eqref{der} is strictly increasing if and only if $h'_1(x)\diagup h'_2(x)$ is strictly decreasing. The latter is satisfied if and only if 
\begin{equation}\label{ten}
\log \left[\frac{h_1'(x)}{h_2'(x)}\right]
\end{equation} strictly decreases (since $\log(x)$ is strictly increasing). The first derivative of (\ref{ten}) is:
\[
\frac{h_2'(x)}{h_1'(x)}\cdot\frac{h''_1(x)h'_2(x)-h'_1(x)h''_2(x)}{[h_2'(x)]^2}\ =\ \frac{h''_1(x)}{h'_1(x)}-\frac{h''_2(x)}{h'_2(x)}
\]
Therefore (\ref{ten}) is strictly decreasing if and only if \[\frac{h''_1(x)}{h'_1(x)}-\frac{h''_2(x)}{h'_2(x)} \ \leq\  0\] and the set \[\left\{ \: x \ \left| \ \frac{h''_1(x)}{h'_1(x)}-\frac{h''_2(x)}{h'_2(x)} = 0 \:\right. \right\}\] contains no non-trivial interval.

Note that:
\[
\frac{h''_i}{h'_i}\ =\ \frac{D_i''}{D'_i}-\frac{D_i'}{D_i}.
\]
Therefore, 
\[
\frac{h''_1(x)}{h'_1(x)}-\frac{h''_2(x)}{h'_2(x)} \ \leq \ 0
\]
is equivalent to
\[
-\frac{D_1''}{D'_1}-\left(-\frac{D_1'}{D_1}\right) \ \geq\  -\frac{D_2''}{D'_2}-\left (-\frac{D_2'}{D_2}\right).
\]
This means that $D_1 \succ_{DI} D_2$ if and only if $I_{DI}(D_1) \geq I_{DI}(D_2)$ on $[0, \infty)$, and \linebreak $\left\{ \: t \ \lvert \ I_{DI}(D_1)(t) = I_{DI}(D_2)(t) \: \right\}$ contains no non-trivial interval.
\end{proof}
From Proposition \ref{compare_index}, Lemma \ref{rate} and \eqref{eq:indexf} it follows that $\succcurlyeq$ is DI if and only if $I_{DI}(D)\geq 0$ on $[0, \infty)$, and $\succcurlyeq$  is strictly DI if and only if $I_{DI}(D)\geq 0$ on $[0, \infty)$ and $\left\{ \: t \ \lvert \ I_{DI}(D)(t) =0 \: \right\}$ contains no non-trivial interval. Note that the index of DI equals zero for an exponential discount function.\footnote{Similarly, $\succcurlyeq$ is II if and only if $I_{DI}(D)\leq 0$ on $[0, \infty)$ and strictly II if and only if $I_{DI}(D)\leq 0$ on $[0, \infty)$ and $\left\{ \: t \ \lvert \ I_{DI}(D)(t) =0 \: \right\}$ contains no non-trivial interval.}

The following example illustrates the index of DI for a generalized hyperbolic discount function. We will make use of this information later.
	
\begin{example} \label{fex}%
The function $D(t) = (1+ht)^{-\alpha/h}$, with $h>0$ and $\alpha>0$, is called the {\em generalized hyperbolic} discount function. 
For this function we have:
\[
r(t)=\alpha(1+ht)^{-1} \ \ \text{and} \ \  IR(D)(t)=(\alpha+h)(1+ht)^{-1}.
\]
Therefore, $I_{DI}(D)(t)=h(1+ht)^{-1}$. If $D_1(t) = (1+h_1t)^{-\alpha/h_1}$ and $D_2(t) = (1+h_2t)^{-\alpha/h_2}$ are two generalized hyperbolic discount functions
then  $D_1 \succcurlyeq_{DI} [\succ_{DI}] \: D_2$, if and only if $h_1\geq [>] \: h_2$.

Thus the parameter $h$ may be used as a measure of the degree of DI of a generalized hyperbolic discount function, while the parameter $\alpha$ has no influence on $I_{DI}(D)$. We call parameter $h$ the  {\em hyperbolic discount rate}. The special case of a generalized hyperbolic discount function with $\alpha=h>0$ is called the {\em proportional} hyperbolic discount function.
\end{example}

\section{Mixtures of Discount Functions}

As described in the introduction, there are some situations in which the necessity arises to calculate a convex combination (mixture) of discount functions.

\subsection{Two scenarios}

The first situation where a convex combination of discount functions may be used is when there is a group of decision makers with different discount functions and a social discount function needs to be constructed. 
The natural option is averaging the discount functions among individuals, which is equivalent to averaging the discounted utilities when all agents have identical utility functions. This approach has been widely used in the existing literature on time preferences (\cite{jackson2014present}). 

Indeed, suppose that we have a set of agents $M=\{1,\ldots,m\}$. Assume that agent $i$ has time preferences with DU representation $(u,D_i)$. Thus, all agents have the same instantaneous utility function.
Then we define the {\em collective (utilitarian) utility function} as $\hat{u}(x)=mu(x)$ and the {\em collective total utility} at time $t$ is 
\[
\hat{U}(x,t)=\sum_{i=1}^m D_i(t) u(x) = \left(\frac{1}{m} \sum_{i=1}^m D_i(t) \right) \hat{u}(x).
\]
Thus, we obtain the {\em collective discount function} $D=\frac{1}{m} \sum_{i=1}^m D_i$.

In the second possible scenario, discussed by Sozou \cite{sozou1998hyperbolic} and Weitzman \cite{weitzman1998far}, there is a single decision maker with some uncertainty about her discount function. For example, there may be some possibility of not surviving to any given period, $t$, described by a survival function with and uncertain (constant) hazard rate \cite{sozou1998hyperbolic}. Then the expected discount function of this decision maker can be calculated as a weighted average of the distinct discount functions that may eventuate.

If the discount function $D_i$ eventuates with probability $p_i$, then the expected utility of the decision maker is 
\[
\hat{U}(x,t)=\sum_{i=1}^m p_iD_i(t) u(x) = \left(\sum_{i=1}^m p_iD_i(t) \right) u(x).
\]
and the {\em certainty equivalent discount function} will be $D=\sum_{i=1}^m p_iD_i$.

The same question arises in both cases: Is it possible to make some conclusion about the behaviour of the convex combination of distinct discount functions in comparison with its components, if all the component discount functions exhibit DI? 

\subsection{Mixtures of discount functions with decreasing impatience}\label{mix}

Given a set of discount functions $\{\row Dn\}$, we define a  {\em mixture} of them as 
\[
D=\sum_{i=1}^n \lambda_i D_i, 
\]
where $0<\lambda_i<1$ for all $i$ and $\sum_{i=1}^n \lambda_i=1$. Note that we define a mixture such that each $D_i$ has a {\em strictly} positive weight.

We first discuss some known results related to the mixture of discount functions.

One of the most recent results was obtained by Jackson and Yariv \cite{jackson2014present}, who demonstrated that if all decision makers in a group have exponential discount functions, but they do not all have the same discount factor, then their collective discount function must be present biased.

It has also been noted by several authors (for example, \cite{prelec1991decision} and \cite{quiggin1995time}), that time preferences have strong similarities with risk preferences and that some results from risk theory are relevant in the context of intertemporal choice. Pratt \cite{pratt1964risk} showed that decreasing risk aversion is preserved under linear combinations. As was observed in Section \ref{IndexDI}, decreasing risk aversion is analogous to non-decreasing time preference rate, or DI of the discount function. Therefore, Pratt's result can be translated into our time preference framework as follows:

\begin{proposition}[cf. \cite{pratt1964risk}, Theorem 5] \label{Pratt}
\label{pratheorem}
Let $\succcurlyeq_1,\succcurlyeq_2, \ldots, \succcurlyeq_n$ have DU representations with twice continuously differentiable discount functions $D_1, \ldots, D_n$, respectively. Assume that $\succcurlyeq_1,\succcurlyeq_2, \ldots, \succcurlyeq_n$ all exhibit DI. Let 
\[
D=\sum_{i=1}^n \lambda_iD_i,
\] 
be a mixture of $D_1, \ldots, D_n$. Then $D$ is DI. It is strictly DI if and only if \[\left\{ \: t \ \lvert \ r_1(t) = r_2(t) =\ldots =r_n(t) \ \text{and } r'_1(t) = r'_2(t) = \ldots = r'_n(t) = 0 \: \right\}\] contains no non-trivial interval.
\end{proposition}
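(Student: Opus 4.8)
The plan is to work with the time preference rates $r_i(t)=-D_i'(t)/D_i(t)$ and reduce everything to the characterization of DI in terms of the sign of the derivative of the collective rate. By Lemma~\ref{rate}, a twice continuously differentiable discount function is DI if and only if its time preference rate is (weakly) decreasing, and strictly DI if and only if the rate is strictly decreasing. So the whole proposition becomes a statement about the monotonicity of $r(t)=-D'(t)/D(t)$ for the mixture $D=\sum_i \lambda_i D_i$.

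First I would compute $r'(t)$ for the mixture. Writing $D=\sum_i \lambda_i D_i$, we have $D'=\sum_i \lambda_i D_i'$ and $D''=\sum_i \lambda_i D_i''$, so that
\[
r'(t)\ =\ \frac{(D'(t))^2 - D''(t)D(t)}{D(t)^2}.
\]
The numerator is $(D')^2 - D''D = \bigl(\sum_i \lambda_i D_i'\bigr)^2 - \bigl(\sum_i \lambda_i D_i''\bigr)\bigl(\sum_i \lambda_i D_i\bigr)$. For each individual function, DI gives $(D_i')^2 - D_i''D_i \le 0$, i.e. $D_i''D_i \ge (D_i')^2$. The key step is to show that the mixed numerator is $\le 0$, which amounts to showing $\bigl(\sum_i \lambda_i D_i'\bigr)^2 \le \bigl(\sum_i \lambda_i D_i''\bigr)\bigl(\sum_i \lambda_i D_i\bigr)$. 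I expect this to follow from a Cauchy--Schwarz type argument: since each $D_i$ is log-convex, $D_i''D_i \ge (D_i')^2$, and one can bound the cross terms $D_i'D_j' + D_j'D_i'$ against $D_i''D_j + D_j''D_i$ using the log-convexity inequalities for the individual functions together with the elementary inequality $2\sqrt{(D_i''D_i)(D_j''D_j)} \le D_i''D_j + D_j''D_i$ (which is AM--GM after writing each side appropriately). Summing these pairwise bounds weighted by $\lambda_i\lambda_j$ yields the desired inequality, establishing that $D$ is DI.

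For the strict part, I would track exactly when equality can hold throughout. By Lemma~\ref{rate}, $D$ is strictly DI if and only if the set where $r'(t)=0$ contains no non-trivial interval. The numerator $(D')^2-D''D$ vanishes precisely when all the pairwise inequalities used above hold with equality simultaneously. The AM--GM step $2\sqrt{(D_i''D_i)(D_j''D_j)} \le D_i''D_j + D_j''D_i$ is an equality exactly when $D_i''/D_i = D_j''/D_j$ (suitably interpreted), and each individual log-convexity inequality $(D_i')^2 = D_i''D_i$ holds exactly when $r_i'(t)=0$. Chasing these equality conditions through should show that $r'(t)=0$ at a point $t$ if and only if all the $r_i(t)$ coincide, all $r_i'(t)=0$, and the $D_i'(t)$ are proportional in the right way --- which I expect to collapse to the stated condition $r_1(t)=\cdots=r_n(t)$ together with $r_1'(t)=\cdots=r_n'(t)=0$. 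Hence $r'(t)=0$ only on the set described in the proposition, and strict DI is equivalent to that set containing no non-trivial interval.

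The main obstacle will be the careful bookkeeping of the equality case: getting the algebra of the numerator $(D')^2-D''D$ to factor transparently into a weighted sum of manifestly nonnegative terms (each vanishing under an interpretable condition), so that the exact equality set can be read off cleanly. An alternative, possibly cleaner, route to the DI part alone would be to invoke Lemma~\ref{log-convex}: each $D_i$ is log-convex, and a weighted sum of log-convex functions is log-convex, so $D$ is log-convex and hence DI by Proposition~\ref{DI-conv}; but since Lemma~\ref{log-convex} as stated requires at least one summand to be \emph{strictly} log-convex to conclude strict log-convexity of the sum, it does not directly deliver the sharp equality characterization, so the derivative-based computation above seems unavoidable for the strict part.
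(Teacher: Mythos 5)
Your proposal is correct, and it reaches the result by a genuinely different decomposition than the paper's. Both arguments reduce, via Lemma~\ref{rate}, to the sign of $r'(t)$ for the mixture, but the paper substitutes $D_i' = -r_iD_i$ and derives the \emph{exact} identity
\[
r' \;=\; \sum_{j=1}^n \frac{\lambda_j D_j}{D}\,r_j' \;-\; \frac{1}{D^2}\sum_{i<j}\lambda_i\lambda_j D_iD_j\,(r_i-r_j)^2,
\]
a convex combination of the individual $r_j'$ plus a manifestly nonpositive sum of squares, from which the equality set $\{\,t \mid r_1=\cdots=r_n \text{ and } r_1'=\cdots=r_n'=0\,\}$ is read off term by term with no further work. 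You instead expand the numerator $(D')^2-D''D$ and estimate it pairwise by Cauchy--Schwarz/AM--GM --- essentially the twice-differentiable-case proof that sums of log-convex functions are log-convex --- which buys nonpositivity cheaply but defers the strict part to an equality chase. That chase does close, and the ``obstacle'' you flag is routine: vanishing of the diagonal terms $\lambda_i^2\bigl((D_i')^2 - D_i''D_i\bigr)$ already forces $r_i'(t)=0$ for every $i$ (so your Cauchy--Schwarz links are automatically tight), and then equality in AM--GM, $D_i''D_j = D_j''D_i$, combined with $D_i''/D_i = r_i^2$ (valid whenever $r_i'=0$), yields $r_i^2 = r_j^2$ and hence $r_i = r_j$, since the rates are strictly positive; the ``proportionality of the $D_i'$'' you mention is then automatic rather than an extra condition. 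One small point you should make explicit: the AM--GM step needs $D_i''D_j,\, D_j''D_i \ge 0$, which holds because DI plus strict monotonicity gives $D_i''D_i \ge (D_i')^2 > 0$, hence $D_i''>0$. In short, the ``transparent factorization into manifestly nonnegative terms'' you wish for at the end is exactly what the paper's substitution $D_i'=-r_iD_i$ delivers; your route trades that algebraic transparency for a more standard inequality-chaining argument of equivalent strength, and you are right that Lemma~\ref{log-convex} alone could only give the non-strict conclusion.
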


\begin{proof} 

From the definition of time preference rate it follows that $D'_i=-r_iD_i$ for all $i=1, \ldots, n$. 
The time preference rate for $D$ is:
\[
r\ =\ -\frac{D'}{D}\ =\ -\frac{\sum_{i=1}^n\lambda_i D'_i}{D}\ =\
\sum_{i=1}^n \frac{\lambda_i D_i}{D} r_i.
\]
By Lemma \ref{rate}, to prove that $D$ exhibits DI we must show that $r'(t) \leq 0$:
\begin{align*}
r'\ =\ &\sum_{j=1}^n \frac{\lambda_j D'_j\sum_{i=1}^n\lambda_i D_i-\lambda_j D_j \sum_{i=1}^n\lambda_i D'_i}{  D^2}r_j+\sum_{j=1}^n\frac{\lambda_j D_j}{D}r'_j.
\end{align*}

Rearranging and substituting $D'_i=-r_iD_i$ we obtain:  
\[
r'\ =\ \sum_{j=1}^n\frac{\lambda_j D_j}{D}r'_j+\frac{Q}{D^2},
\]
where 
\begin{align*}
Q=-\sum_{j=1}^n \left[ \lambda_j r_jD_j\sum_{i=1}^n\lambda_i D_i-\lambda_j D_j \sum_{i=1}^n\lambda_i r_iD_i  \right]r_j
\end{align*}
This is a quadratic form in $\row Dn$ with the coefficient on $D_iD_j$ being
\[
\lambda_i\lambda_j\left(r_ir_j-r_j^2\right)+\lambda_i\lambda_j\left(r_ir_j-r_i^2\right)=-\lambda_1\lambda_2(r_i-r_j)^2.
\] 	
Hence
\[
Q=-\sum_{i< j}\lambda_i \lambda_j D_i D_j(r_i-r_j)^2.
\]
Since $D_i \in (0, 1]$, $\lambda_i>0$ and $r'_i \leq 0$ for all $i=1, \ldots n$, we have $r'(t)\leq 0$. Therefore, $\succcurlyeq$ is DI.
The preference relation $\succcurlyeq$ is strictly DI if and only if $r'(t)$ is strictly decreasing. We see that $r'(t)$ is strictly decreasing iff \[\{ \: t \ \lvert \ r_1(t) = r_2(t)=\ldots = r_n(t)\ \ \text{and}\ \ r'_1(t) = r'_2(t)=\ldots=r'_n(t) = 0 \: \}\] contains no non-trivial interval.
\end{proof}
The following corollary describes an important special case of Proposition \ref{Pratt}:
\begin{corollary} \label{JYC}
Mixtures of non-identical exponential discount functions are strictly DI.
\end{corollary}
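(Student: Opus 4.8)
The plan is to derive Corollary~\ref{JYC} as a direct application of Proposition~\ref{Pratt} to the special case of exponential discount functions. First I would recall that an exponential discount function $D_i(t)=\delta_i^t$ with $\delta_i\in(0,1)$ has constant time preference rate: since $D_i'(t)=\delta_i^t\ln\delta_i$, we get $r_i(t)=-D_i'(t)/D_i(t)=-\ln\delta_i$, a positive constant independent of $t$. Each exponential discount function is twice continuously differentiable and, having constant (hence non-increasing) time preference rate, exhibits DI by Lemma~\ref{rate}. Thus the hypotheses of Proposition~\ref{Pratt} are satisfied for any mixture of exponential discount functions, and the mixture is at least DI.

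The substance of the argument is to verify the strictness condition from Proposition~\ref{Pratt}, namely that the set
\[
\left\{ \: t \ \lvert \ r_1(t) = r_2(t) =\ldots =r_n(t) \ \text{and } r'_1(t) = r'_2(t) = \ldots = r'_n(t) = 0 \: \right\}
\]
contains no non-trivial interval. For exponentials each $r_i$ is the constant $-\ln\delta_i$, so automatically $r_i'(t)=0$ for all $t$ and all $i$; the second clause is always met. Hence strictness hinges entirely on the first clause: the set where $r_1(t)=\cdots=r_n(t)$ must contain no non-trivial interval. But the equation $r_i(t)=r_j(t)$ reduces to $-\ln\delta_i=-\ln\delta_j$, i.e. $\delta_i=\delta_j$. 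Since the discount functions are assumed non-identical, there exist indices $i\neq j$ with $\delta_i\neq\delta_j$, for which the equality $r_i(t)=r_j(t)$ holds for no $t$ at all. Therefore the set in question is empty, which trivially contains no non-trivial interval, and Proposition~\ref{Pratt} yields that the mixture is strictly DI.

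I do not anticipate a genuine obstacle here, since the constancy of $r_i$ for exponentials collapses the general strictness condition to the simple observation that distinct discount factors give distinct (constant) rates. The only point requiring a little care is to state precisely what ``non-identical'' means for the collection: it suffices that not all the $\delta_i$ coincide, so that at least one pair is distinct and the common-rate set is empty. I would also remark, to connect with the paper's narrative, that this recovers Jackson and Yariv's Proposition~1 via Corollary~\ref{pb-di}, since strict DI is equivalent to present bias under a discounted utility representation.
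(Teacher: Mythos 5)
Your proof is correct and follows exactly the route the paper intends: the paper presents Corollary~\ref{JYC} as an immediate special case of Proposition~\ref{Pratt}, and your verification that exponentials have constant rates $r_i(t)=-\ln\delta_i$ (so the strictness set is empty once some $\delta_i\neq\delta_j$) is precisely the implicit argument. Your care over the meaning of ``non-identical'' (not all $\delta_i$ coincide suffices) is a sound reading and matches the proposition's hypothesis.
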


Corollary \ref{JYC} is therefore a continuous-time version of Jackson and Yariv (2014, Proposition 1). Prelec \cite[Corollary 4]{prelec2004decreasing} considers the mixture of two discount functions only, but does not require differentiability. He proves that the mixture of two equally DI discount functions is more DI than its components. Prelec \cite[Corollary 4]{prelec2004decreasing} implies the special case of Jackson and Yariv's \cite{jackson2014present} result when $n=2$. 

Our objective is to establish such a result which is more general than both Prelec \cite{prelec2004decreasing} and Jackson and Yariv \cite{jackson2014present}. The result we obtain is stated in the following theorem:

\begin{theorem}\label{main}
Let $n\geq 2$ and $D_1,\ldots, D_n$ be distinct discount functions such that \[D_1 \succcurlyeq_{DI} D_2 \succcurlyeq_{DI} \ldots \succcurlyeq_{DI} D_n.\] If $D$ is a mixture of $D_1, \ldots, D_n$, then \ $D \succ_{DI} D_n$.
\end{theorem}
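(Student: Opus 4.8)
The plan is to pass to the coordinates in which $D_n$ plays the role of the reference discount function, exactly as in Proposition~\ref{key}, and then to realise the mixture as a sum of log-convex functions, at least one of which forces strictness. Concretely, set $\tilde D_i(z) = D_i(D_n^{-1}(e^z))$ for $z \in (-\infty,0]$, so that $\tilde D_n(z) = e^z$, and observe that by Proposition~\ref{key} the conclusion $D \succ_{DI} D_n$ is equivalent to strict log-convexity on $(-\infty,0]$ of
\[
\tilde D(z) \;=\; D\left(D_n^{-1}(e^z)\right) \;=\; \sum_{i=1}^n \lambda_i\, \tilde D_i(z).
\]
First I would invoke transitivity (Proposition~\ref{trans}) to deduce $D_i \succcurlyeq_{DI} D_n$ for every $i$, so that each $\tilde D_i$ is log-convex by Proposition~\ref{key}; since scaling by the constant $\lambda_i>0$ only shifts $\ln \tilde D_i$ by an additive constant, every summand $\lambda_i \tilde D_i$ is log-convex as well.

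To upgrade log-convexity of the sum to \emph{strict} log-convexity I would split into two cases. If some index satisfies $D_i \succ_{DI} D_n$, then $\lambda_i \tilde D_i$ is strictly log-convex while all the remaining summands are log-convex, and a repeated application of Lemma~\ref{log-convex} (strictly log-convex plus log-convex is strictly log-convex, added one summand at a time) shows that $\tilde D$ is strictly log-convex, giving $D \succ_{DI} D_n$ immediately. In the complementary case every $D_i$ is equally DI to $D_n$, so Corollary~\ref{eqDI} supplies constants $c_i>0$ with $D_i = D_n^{c_i}$, where $c_n=1$ and the $c_i$ are pairwise distinct because the $D_i$ are distinct. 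In these coordinates $\tilde D_i(z) = e^{c_i z}$, so that
\[
\ln \tilde D(z) \;=\; \ln\!\left( \sum_{i=1}^n \lambda_i\, e^{c_i z}\right),
\]
the logarithm of a sum of exponentials with distinct exponents; this is the standard strictly convex log-sum-exp expression (its second derivative is the variance of the nondegenerate distribution placing weight $\lambda_i$ on $c_i$), and since $n\ge 2$ the exponents are not all equal, so strictness holds throughout $(-\infty,0]$.

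The step I expect to be the main obstacle is securing strictness \emph{uniformly} on the whole half-line rather than merely log-convexity of $\tilde D$. The weak ordering $D_i \succcurlyeq_{DI} D_n$ guarantees only that each $\tilde D_i$ is log-convex, and two comparable discount functions could in principle coincide on a nontrivial $t$-interval while differing elsewhere; on such an interval every summand is locally log-affine, so both inequalities invoked above (log-convexity of each $\lambda_i \tilde D_i$ together with the Weighted AM-GM step from the proof of Lemma~\ref{log-convex}) would hold with equality simultaneously, and strictness would fail locally. The crux is therefore to show that the set on which $\ln \tilde D$ fails to be strictly convex contains no nontrivial interval, and this is precisely where distinctness of the $D_i$ -- and, in the mixed situation, the interaction between the genuinely strictly-more-DI components and the merely equally-DI ones -- must be used decisively. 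Carefully excluding such intervals, rather than only treating the two clean cases above, is the heart of the argument and the point most in need of rigorous justification.
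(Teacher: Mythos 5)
Your proposal is essentially the paper's argument in different packaging, and where it is incomplete, the paper's own proof is incomplete in exactly the same way. The paper proceeds by induction on $n$: the base case is Lemma~\ref{new1}, whose proof splits into $D_1 \sim_{DI} D_2$ (handled by Proposition~\ref{mixture}, an explicit second-derivative computation with $D_1 = D_2^c$ --- precisely the $n=2$ instance of your log-sum-exp calculation) and $D_1 \succ_{DI} D_2$ (handled, as in your first case, by Lemma~\ref{log-convex}); the induction step writes $D^{(k+1)} = (1-\eta_{k+1})D^{(k)} + \eta_{k+1}D_{k+1}$, uses Proposition~\ref{trans} to get $D^{(k)} \succ_{DI} D_{k+1}$, and applies Lemma~\ref{new1} again. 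Your direct treatment of the $n$-fold sum in the $D_n$-coordinates --- one strictly log-convex summand plus log-convex summands, or else all summands of the form $\lambda_i e^{c_i z}$ with pairwise distinct $c_i$ --- is a legitimate and arguably cleaner alternative that dispenses with the induction; the variance formula for the second derivative of $\ln \sum_i \lambda_i e^{c_i z}$ is a correct $n$-ary generalization of the computation in Proposition~\ref{mixture}.

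The obstacle you flag at the end is real, but you should know the paper does not overcome it: the proof of Lemma~\ref{new1} considers only the cases $D_1 \sim_{DI} D_2$ and $D_1 \succ_{DI} D_2$, silently assuming that distinct comparable discount functions fall into one of these two classes. They need not: $\ln D_1(D_2^{-1}(e^z))$ can be convex and non-affine yet affine on a subinterval, and the danger you describe actually materializes when that affine piece has slope $1$. Take $D_2(t) = e^{-t}$ and let $\ln D_1(t)$ equal $-2t$ on $[0,1]$, then $-1-t$ on $[1,2]$, then $-2-t/2$ on $[2,\infty)$; then $\ln D_1(D_2^{-1}(e^z))$ is piecewise affine in $z$ with slopes $\tfrac12, 1, 2$ (in order of increasing $z$), hence convex, so $D_1 \succcurlyeq_{DI} D_2$ with $D_1 \neq D_2$; yet on $t\in[1,2]$ the mixture equals $\left(\lambda e^{-1} + 1 - \lambda\right)e^{-t}$, a positive multiple of $D_2$, so $\ln D(D_2^{-1}(e^z))$ is affine on $[-2,-1]$ and $D \succ_{DI} D_2$ fails. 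So the case you single out cannot be closed by any argument: the theorem as literally stated needs the implicit hypothesis that each $D_i$ is either equally DI or strictly more DI than $D_n$ (which, by Proposition~\ref{trans} and Corollary~\ref{eqDI}, follows if each consecutive pair in the chain is so related), and under that reading your two cases are exhaustive and your proof is complete --- exactly as complete as the paper's. Your instinct here identified a genuine lacuna in the paper, not a defect peculiar to your write-up.
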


To construct the proof of this theorem, two preliminary results will be useful. The first is a strengtheneing of a result in Prelec \cite{prelec2004decreasing}.

\begin{proposition}
 \label{mixture}
Let $\lambda \in (0, 1)$. If two distinct discount functions $D_1$ and $D_2$ satisfy $D_1 \sim_{DI} D_2$, then their mixture, $D=\lambda D_1+(1-\lambda) D_2$, represents strictly more DI preferences than each $D_i$. That is, $D \succ_{DI} D_1$ and $D \succ_{DI} D_2$.
\end{proposition}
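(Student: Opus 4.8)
The plan is to translate everything into the log-convexity language of Proposition \ref{key}, use Corollary \ref{eqDI} to make the two functions explicit up to a single parameter, and then recognize the relevant composition as a log-sum-exp of two exponentials with distinct rates. First, since $D_1 \sim_{DI} D_2$ and the two functions are distinct, Corollary \ref{eqDI} supplies a constant $c>0$ with $D_1(t)=D_2(t)^c$ for all $t$, and necessarily $c\neq 1$ (otherwise $D_1=D_2$). By Proposition \ref{key}, proving $D \succ_{DI} D_1$ amounts to showing that $\phi(z):=\ln D\left(D_1^{-1}(e^z)\right)$ is strictly convex in $z$ on $(-\infty,0]$, and likewise for $D \succ_{DI} D_2$ with $D_1$ replaced by $D_2$.

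Next I would substitute $z=\ln D_1(t)$, which is a strictly decreasing bijection of $[0,\infty)$ onto $(-\infty,0]$ because $D_1$ is a discount function with $D_1(0)=1$ and $D_1(t)\to 0$. At $t=D_1^{-1}(e^z)$ we have $D_1(t)=e^z$ and, since $D_2=D_1^{1/c}$, also $D_2(t)=e^{z/c}$, so that $D(t)=\lambda e^{z}+(1-\lambda)e^{z/c}$ and
\[
\phi(z)\;=\;\ln\left( \lambda e^{z}+(1-\lambda)e^{z/c} \right).
\]
The crucial feature is that this expression no longer involves $D_1$ or $D_2$ except through the single constant $c$; in particular $\phi$ is manifestly smooth, so no differentiability hypothesis on $D_1,D_2$ is required (this is exactly what lets the result strengthen Prelec's without extra regularity). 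Writing $P=\lambda e^{z}$ and $Q=(1-\lambda)e^{z/c}$, a direct differentiation gives
\[
\phi''(z)\;=\;\frac{PQ\,(1-1/c)^2}{(P+Q)^2},
\]
which is strictly positive for every $z$ because $P,Q>0$ and $c\neq 1$. Hence $\phi$ is strictly convex, and Proposition \ref{key} yields $D \succ_{DI} D_1$. The statement $D \succ_{DI} D_2$ follows by the identical computation after the substitution $w=\ln D_2(t)$, which produces $\ln D\left(D_2^{-1}(e^w)\right)=\ln\left(\lambda e^{cw}+(1-\lambda)e^{w}\right)$, whose exponents again have distinct rates $c$ and $1$.

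I do not expect a serious obstacle once this reduction is set up, since the remaining computation is routine. The only points requiring care are the bookkeeping of the change of variables (checking that $z$ ranges over precisely $(-\infty,0]$ and that $D_2=D_1^{1/c}$ is used correctly) and verifying that $D=\lambda D_1+(1-\lambda)D_2$ is itself a legitimate discount function -- continuous, strictly decreasing, with $D(0)=1$ and $D(t)\to 0$ -- so that Proposition \ref{key} applies to the pair $(D,D_1)$. The conceptual heart of the argument, and the source of \emph{strict} DI, is that blending two exponentials of different decay rates yields a log-sum-exp whose second derivative vanishes only when the two rates coincide, i.e.\ only in the excluded case $c=1$.
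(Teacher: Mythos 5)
Your proposal is correct and follows essentially the same route as the paper: reduce via Corollary \ref{eqDI} to $D_1=D_2^c$ with $c\neq 1$, apply Proposition \ref{key}, and verify strict convexity of $\ln\left(\lambda e^{z}+(1-\lambda)e^{z/c}\right)$ by the same second-derivative computation (your $\phi''=PQ(1-1/c)^2/(P+Q)^2$ is exactly the paper's simplified numerator $p(z)=e^{(1+1/c)z}\lambda(1-\lambda)(1-1/c)^2$). The only cosmetic difference is that you obtain $D\succ_{DI}D_2$ by repeating the computation with the substitution $w=\ln D_2(t)$, whereas the paper gets it from $D\succ_{DI}D_1$ and $D_1\sim_{DI}D_2$ via transitivity (Proposition \ref{trans}).
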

\begin{proof}
As $D_1 \sim_{DI} D_2$, then, by Corollary \ref{eqDI}, $D_1(t) = D_2(t)^c$, where $c \neq 1, c>0$. 
By Proposition \ref{key}, it is necessary to demonstrate strict convexity of $f(z)=\ln D\left(D_1^{-1}(e^z)\right)$ for $z\leq 0$. By Proposition \ref{trans} it is also sufficient.
We note that:
\begin{align*}
f(z)&=\ln D\left(D_1^{-1}(e^z)\right)= \ln \left(\lambda D_1\left(D_1^{-1}(e^z)\right)+(1-\lambda) D_2 \left(D_1^{-1}(e^z)\right)\right) \\
&= \ln \left(\lambda e^z+(1-\lambda)e^{z/c}\right).
\end{align*}
The first-order derivative of $f(z)$ is:
\[
f'(z)\ =\  \frac{\lambda e^z+\frac{1}{c}(1-\lambda)e^{z/c}}{\lambda e^z+(1-\lambda)e^{z/c}}.
\]
The second-order derivative is:
\[
f''(z)\ =\  \frac{\left(\lambda e^z+\frac{1}{c^2}(1-\lambda)e^{z/c}\right)\left(\lambda e^z+(1-\lambda)e^{z/c}\right)-\left(\lambda e^z+\frac{1}{c}(1-\lambda)e^{z/c}\right)^2}{\left(\lambda e^z+(1-\lambda)e^{z/c}\right)^2}\  =\  \frac{p(z)}{q(z)}.
\]
Both the denominator $q(z)$ and the numerator $p(z)$ of this fraction are strictly positive. The former is obvious. To see the latter, note that
the numerator $p(z)$ can be simplified as follows:
\begin{align*}
p(z) & =  \left(\lambda e^z+\frac{1}{c^2}(1-\lambda)e^{z/c}\right)\left(\lambda e^z+(1-\lambda)e^{z/c}\right)-\left(\lambda e^z+\frac{1}{c}(1-\lambda)e^{z/c}\right)^2 \\
& =  \lambda (1-\lambda) e^{\left(1+\frac{1}{c}\right)z}-\frac{2}{c} \lambda (1-\lambda)e^{\left(1+\frac{1}{c}\right)z} + \frac{1}{c^2}\lambda (1-\lambda) e^{\left(1+\frac{1}{c}\right)z} \\
& =  e^{\left(1+\frac{1}{c}\right)z} \lambda (1-\lambda) \left (1-\frac{1}{c} \right )^2.
\end{align*}
Therefore, $f$ is a strictly convex function.
\end{proof}

Proposition \ref{mixture} is a stronger version of Corollary~4 in \cite{prelec2004decreasing}, as we show that the mixture of two discount functions that are equally DI represents \textit{strictly} more DI preferences, rather than just more DI preferences. It is important to note that Proposition \ref{mixture} cannot be directly generalized to $n$ discount functions by induction. To obtain the path to such generalization, we will need the following lemma.

\begin{lemma}\label{new1} Let $\lambda \in (0, 1)$. 
If two distinct discount functions $D_1$ and $D_2$ satisfy $D_1 \succcurlyeq_{DI} D_2$, then their mixture $D=\lambda D_1+(1-\lambda) D_2$ represents strictly more DI preferences than $D_2$. That is, $D \succ_{DI} D_2$.
\end{lemma}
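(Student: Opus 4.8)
The plan is to invoke Proposition \ref{key} and reduce the claim to a statement about strict convexity. Write $g(z) = \ln D_1\!\left(D_2^{-1}(e^z)\right)$; by Proposition \ref{key} the hypothesis $D_1 \succcurlyeq_{DI} D_2$ says precisely that $g$ is convex on $(-\infty,0]$, and note $g(0)=0$ since $D_2^{-1}(1)=0$ and $D_1(0)=1$. Because $D_2\!\left(D_2^{-1}(e^z)\right)=e^z$, the function whose strict convexity we must establish is
\[
F(z) \;=\; \ln D\!\left(D_2^{-1}(e^z)\right) \;=\; \ln\!\left(\lambda e^{g(z)} + (1-\lambda)e^{z}\right).
\]
By Proposition \ref{key}, showing that $F$ is strictly convex on $(-\infty,0]$ is exactly what is needed to conclude $D \succ_{DI} D_2$; unlike Proposition \ref{mixture}, here we compute directly against $D_2$ (against $D_1$ the inner function would be \emph{concave}), so no appeal to transitivity is required.

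First I would dispose of plain convexity. Observe that $e^{F} = \lambda e^{g} + (1-\lambda)e^{z}$ is a sum of two log-convex functions: $\ln(e^{g}) = g$ is convex by hypothesis, and $\ln(e^{z}) = z$ is trivially convex. Since a sum of log-convex functions is log-convex --- the non-strict version of the Weighted AM--GM argument in Lemma \ref{log-convex}, run with $\le$ throughout --- it follows that $F$ is convex, i.e. $D \succcurlyeq_{DI} D_2$. The entire difficulty is to upgrade this to \emph{strict} convexity. The natural tool, Lemma \ref{log-convex}, does not apply directly, because it requires one summand to be strictly log-convex, whereas $e^{z}$ is never strictly log-convex (its logarithm is linear) and $e^{g}$ is strictly log-convex only where $g$ is strictly convex, which $D_1 \succcurlyeq_{DI} D_2$ alone does not guarantee.

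I would therefore track precisely when the AM--GM step is tight. Fix $z_1 \ne z_2$ in $(-\infty,0]$, $\mu \in (0,1)$, and set $\bar z = \mu z_1 + (1-\mu)z_2$. Convexity of $g$ gives $e^{g(\bar z)} \le \left(e^{g(z_1)}\right)^{\mu}\left(e^{g(z_2)}\right)^{1-\mu}$, while $e^{\bar z} = \left(e^{z_1}\right)^{\mu}\left(e^{z_2}\right)^{1-\mu}$ holds with equality; feeding these into inequality \eqref{sumnot} as in Lemma \ref{log-convex} yields $e^{F(\bar z)} \le \left(e^{F(z_1)}\right)^{\mu}\left(e^{F(z_2)}\right)^{1-\mu}$. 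Tracing the two AM--GM applications, equality forces both $g(z_1)-z_1 = g(z_2)-z_2$ (tightness of the weighted AM--GM on the ratios) and $g$ affine on $[z_1,z_2]$ (tightness of the convexity estimate for $g$ at the interior point $\bar z$); together these say $g(z)=z+c$ on $[z_1,z_2]$. Thus $F$ can fail to be strictly convex only on an interval where $g$ is affine of unit slope.

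The main obstacle is to rule out such an interval, and this is where distinctness of $D_1$ and $D_2$ must be used. A unit-slope affine piece $g(z)=z+c$ translates, via $z=\ln D_2(t)$, into $\ln D_1(t)=\ln D_2(t)+c$, i.e. $D_1 = e^{c}D_2$ on the corresponding $t$-interval: the two discount functions are proportional there. The delicate point is to show that distinct discount functions with $D_1 \succcurlyeq_{DI} D_2$ admit no such interval --- one would leverage the normalization $D_i(0)=1$, the boundary value $g(0)=0$, and Corollary \ref{eqDI} to argue that proportionality on a sub-interval cannot persist consistently with distinctness and with convexity of $g$ --- so that the exceptional set contains no non-trivial interval and $F$ is strictly convex. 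As a cross-check on the tightness computation, in the twice-differentiable case a direct calculation gives, with $a=\lambda e^{g}$ and $b=(1-\lambda)e^{z}$,
\[
F''\cdot\left(e^{F}\right)^{2} \;=\; g''\,a(a+b) + a\,b\,(g'-1)^{2},
\]
which is nonnegative and vanishes exactly where $g''=0$ and $g'=1$, confirming that the only threat to strict convexity is a unit-slope affine stretch of $g$, and hence isolating precisely the step that distinctness must defeat.
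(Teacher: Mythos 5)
Your reduction via Proposition \ref{key}, the non-strict convexity step, and the tightness analysis are all correct: with $a=\lambda e^{g}$ and $b=(1-\lambda)e^{z}$ one indeed has $F''\cdot (e^{F})^{2}=g''\,a(a+b)+ab\,(g'-1)^{2}$, so strict convexity of $F$ can fail only on an interval where $g(z)=z+c$, i.e.\ where $D_1=e^{c}D_2$. But the step you defer --- ruling out such an interval using distinctness, $D_i(0)=1$ and Corollary \ref{eqDI} --- is the entire content of the lemma, and it cannot be carried out: the hypotheses ``$D_1\neq D_2$ and $D_1\succcurlyeq_{DI}D_2$'' do \emph{not} exclude proportionality on a sub-interval. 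Take $D_2(t)=e^{-t}$ and
\[
D_1(t)=\begin{cases} e^{-2t}, & 0\le t\le 1,\\ e^{-(t+1)}, & t\ge 1,\end{cases}
\]
a valid discount function. Then $g(z)=\ln D_1\left(D_2^{-1}(e^z)\right)$ equals $z-1$ on $(-\infty,-1]$ and $2z$ on $[-1,0]$; the slopes $1$ then $2$ are nondecreasing, so $g$ is convex and $D_1\succcurlyeq_{DI}D_2$, with $D_1\neq D_2$. Yet on $t\ge 1$ the mixture is $\lambda e^{-1}e^{-t}+(1-\lambda)e^{-t}=Ce^{-t}$, exactly exponential, so $F(z)=z+\ln C$ is affine on $(-\infty,-1]$ and $D\succ_{DI}D_2$ fails --- precisely the unit-slope stretch ($g'\equiv 1$, $g''\equiv 0$) that your own formula flags. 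Note that nothing forces the proportionality constant to be $1$ or the interval to reach $t=0$, which is why the normalization $D_i(0)=1$ and $g(0)=0$ give you no traction.

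It is worth comparing this with the paper's proof, which takes a different route: it splits into the case $D_1\sim_{DI}D_2$ (handled by Proposition \ref{mixture} together with Proposition \ref{trans}) and the case $D_1\succ_{DI}D_2$ (where $D\left(D_2^{-1}(e^z)\right)=\lambda D_1\left(D_2^{-1}(e^z)\right)+(1-\lambda)e^z$ is a sum of a strictly log-convex and a log-convex function, so Lemma \ref{log-convex} applies). That proof implicitly treats these two cases as exhaustive, but a convex $g$ need be neither globally affine nor strictly convex, and the example above falls exactly in the uncovered intermediate case. So your analysis does not merely leave a hole in your own argument; it isolates a genuine gap in the lemma as stated. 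The conclusion is sound under the stronger hypothesis ``$D_1\sim_{DI}D_2$ or $D_1\succ_{DI}D_2$'' --- which is what the paper's two cases actually establish, and what the induction in Theorem \ref{main} uses after the base case --- but it is false under ``distinct and $D_1\succcurlyeq_{DI}D_2$'' alone, so no completion of your final step exists.
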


\begin{proof}
If $D_1 \sim_{DI} D_2$ then the conclusion follows from Proposition~\ref{mixture}.
Suppose that $D_1$  represents strictly more DI preferences than $D_2$. Then, by Proposition~\ref{key}, $\ln D_1 \left(D_2^{-1}(e^z)\right)$ is strictly convex on $(\infty,0]$. We need to demonstrate that  $\ln D \left(D_2^{-1}(e^z)\right)$ is also strictly convex on $(\infty,0]$, or, equivalently, that the following function is strictly log-convex:
\[
D \left(D_2^{-1}\left(e^z\right)\right)=\lambda D_1D_2^{-1} (e^z)+(1-\lambda)D_2 D_2^{-1} (e^z) = \lambda D_1D_2^{-1} (e^z)+(1-\lambda) e^z. 
\]
Denote $\lambda D_1D_2^{-1} (e^z) = f$ and $(1-\lambda) e^z = g$. Then we have: 
\[
D \left(D_2^{-1}(e^z)\right)= f+g,
\]
where $f$ is strictly log-convex by assumption and $g = (1-\lambda) e^z$ is log-convex. By Lemma \ref{log-convex}, the sum of a strictly log-convex function and a log-convex function is strictly log-convex, hence $D \left(D_2^{-1}(e^z)\right)$ is strictly log-convex.
\end{proof}

We are now ready to provide the proof of Theorem \ref{main}, since Lemma \ref{new1} can be straightforwardly generalized to the case of  $n$ distinct discount functions.

\begin{proof}[Proof of Theorem \ref{main}]
We prove this statement by induction on $n$. 
By Lemma~\ref{new1} the result holds for $n=2$.
Suppose that the statement of the theorem is true for $n = k$. 
Let \[D^{(k+1)}=\eta_1 D_1+ \ldots + \eta_{k+1} D_{k+1}\] where $\sum_{i=1}^{k+1} \eta_i=1$ and each $\eta_i \in (0, 1)$.
Then 
\begin{align*}
D^{(k+1)}\ =\  &\eta_1 D_1+ \ldots + \eta_{k+1} D_{k+1}\\
= \ &(1-\eta_{k+1}) \left ( \frac{\eta_1}{1-\eta_{k+1}} D_1 +\ldots + \frac{\eta_k}{1-\eta_{k+1}} D_k \right) + \eta_{k+1} D_{k+1}.
\end{align*}
Let
\[
\displaystyle D^{(k)}=\frac{\eta_1}{1-\eta_{k+1}} D_1 +\ldots + \frac{\eta_k}{1-\eta_{k+1}} D_k. 
\]
By the induction hypothesis, $D^{(k)} \succ_{DI} D_k$. It is also known that $D_k\succcurlyeq_{DI} D_{k+1}$, and hence, by Proposition \ref{trans}, $D^{(k)} \succ_{DI} D_{k+1}$. However, the mixture of these two functions is exactly
\[
D^{(k+1)} = (1-\eta_{k+1}) D^{(k)} + \eta_{k+1} D_{k+1}.
\]
Then, by Proposition \ref{new1}, $D^{(k+1)}\succ_{DI} D_{k+1}$, which completes the induction step.
\end{proof}

\subsection{Mixtures of twice continuously differentiable \\discount functions}\label{mixarb}

Note that when discount functions are suitably differentiable, Theorem \ref{main} and Proposition \ref{compare_index} imply that 
\begin{equation} \label{indexineq}
I_{DI}(D) \geq \min_i \{I_{DI}(D_i)\} \ \  \text{on} \ \ [0, \infty)
\end{equation}
and the set of $t$ values at which equality holds does not include any non-trivial interval.

Consider the following example:

\begin{example}\label{orderimp} 
Let $D_1(t)=(1+ht)^{-2}$ be a zero-speed hyperbolic discount function \cite{jamison2011characterizing} and $D_2(t)=\exp(-\alpha t^{1/2})$ be a slow Weibull discount function \cite{jamison2011characterizing}.
As shown in Example \ref{fex}, $I_{DI}(D_1)(t)=h(1+ht)^{-1}>0$ for all $t$. 
We also have $I_{DI}(D_2)(t)= (2t)^{-1}>0$ for all $t$ since \[r_2(t)=\frac{\alpha}{2}t^{-1/2}\ \ \text{and}\ \ r'_2(t)=-\frac{\alpha^2}{4}t^{-3/2}.\] Therefore, both $D_1$ and $D_2$ exhibit strict DI. 
Assume that $h=0.1$. Then
\[
I_{DI}(D_1)(t)-I_{DI}(D_2)(t)\ =\ \frac{0.1}{1+0.1t}-0.5\frac{1}{t}\ =\ 0.05 \frac{t-10}{1+0.1t}.
\]
Obviously, $I_{DI}(D_1)(t) \leq I_{DI}(D_2)(t)$ if and only if $t \leq 10$ and $I_{DI}(D_1)(t) > I_{DI}(D_2)(t)$ if and only if $t > 10$. It follows that $D_1$ and $D_2$ both are from incomparable DI classes. Since $D_1$ and $D_2$ both exhibit strictly DI, Proposition \ref{pratheorem} implies that their mixture $D$ also exhibits strictly DI. 
\begin{figure}[h!] 
	\hspace{-1.5cm}
	\includegraphics{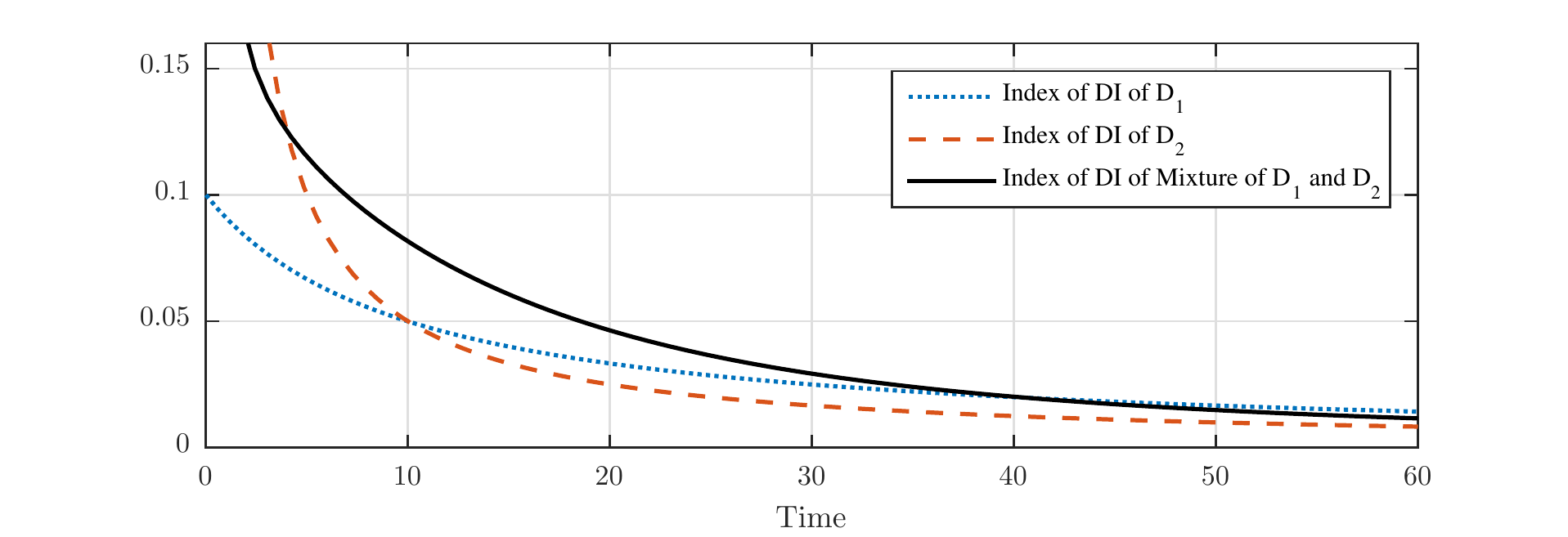}
	\caption{Index of DI for the Mixture of $D_1$ and $D_2$}
	\label{graph}
\end{figure}

By direct calculation we obtain the following expression:
\begin{align*}
I_{DI}(D)(t)\ =\ &\frac{6\lambda_1 h^2(1+ht)^{-4}+1/4\lambda_2 \alpha \exp{-\alpha t^{0.5}}t^{-1}(\alpha+t^{-0.5})}{2\lambda_1 h(1+ht)^{-3}+1/2\lambda_2 \alpha \exp{-\alpha t^{0.5}}t^{-0.5}}\\
\ &\ \\
\ &-\ \frac{2\lambda_1 h(1+ht)^{-3}+1/2\lambda_2 \alpha \exp{-\alpha t^{0.5}}t^{-0.5}}{\lambda_1 (1+ht)^{-2}+\lambda_2 \exp{-\alpha t^{0.5}}}.
\end{align*}
The behaviour of $I_{DI}(D)$ with parameters $\lambda_1=\lambda_2=0.5$, $h=0.1$ and $\alpha=0.12$ is illustrated in Figure \ref{graph}.
It can be clearly seen from Figure \ref{graph} that neither $D \succ_{DI} D_1$ nor $D \succ_{DI} D_2$. However, $I_{DI}(D) \geq \min \{I_{DI}(D_1), I_{DI}(D_2)\}$ on $[0, \infty)$. 
\end{example}

Example \ref{orderimp} suggests that the inequality \eqref{indexineq} may continue to hold even if the discount functions are not DI-comparable. Theorem \ref{main3} verifies this conjecture.

\begin{theorem}\label{main3}
Let $\succcurlyeq_1, \succcurlyeq_2, \ldots, \succcurlyeq_n$ have DU representations with twice continuously differentiable discount functions $D_1, D_2, \ldots, D_n$, respectively. Let $D=\sum_{i=1}^n \lambda_i D_i$ be a mixture of $D_1, D_2, \ldots, D_n$.
Then $I_{DI}(D) \geq \ds \min_{i}\{I_{DI}(D_i)\}$ on $[0, \infty)$, and \[I_{DI}(D)(t) > \ds \min_{i}\{I_{DI}(D_i)(\hat{t})\] if $r_j(\hat{t}) \neq r_k(\hat{t})$ for some $j\neq k$.
\end{theorem}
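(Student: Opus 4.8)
The plan is to avoid log-convexity entirely and work with the reformulation of the index given in \eqref{eq:indexf}, namely $I_{DI}(D) = -r'/r$. The crucial point is that the algebra I need has already been done in the proof of Proposition~\ref{Pratt}: there, for the mixture $D = \sum_i \lambda_i D_i$, one obtains the weighted-average representation of the time preference rate
\[
r \;=\; \sum_{i=1}^n w_i\, r_i, \qquad w_i := \frac{\lambda_i D_i}{D} > 0, \quad \sum_{i=1}^n w_i = 1,
\]
together with the companion formula for its derivative,
\[
r' \;=\; \sum_{j=1}^n w_j\, r'_j \;-\; \frac{1}{D^2}\sum_{i<j}\lambda_i\lambda_j D_i D_j (r_i - r_j)^2 .
\]
So $r'$ is the same convex combination of the $r'_j$ as $r$ is of the $r_j$, minus a manifestly nonpositive correction term. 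This correction is exactly what will drive the inequality, so I would invoke these two identities directly rather than re-deriving them.

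First I would record the elementary sign facts underlying the whole argument. Since each $D_i$ is strictly decreasing and strictly positive, $r_i > 0$, and consequently $r = \sum_i w_i r_i > 0$; this legitimises division by $r$ and by each $r_i$ and justifies writing $I_{DI}(D) = -r'/r$ and $I_{DI}(D_i) = -r'_i/r_i$ via \eqref{eq:indexf}. The decisive rewriting is $r'_i = -r_i\, I_{DI}(D_i)$, which converts a \emph{lower} bound on the indices into an \emph{upper} bound on the $r'_i$.

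Next, fixing $t$ and setting $m := \min_i I_{DI}(D_i)(t)$, I would argue as follows. Because $I_{DI}(D_j) \ge m$ and $r_j > 0$, multiplying by $-r_j$ reverses the inequality to give $r'_j = -r_j I_{DI}(D_j) \le -m\, r_j$ for every $j$. Substituting into the convex-combination term and using $\sum_j w_j r_j = r$ yields $\sum_j w_j r'_j \le -m\sum_j w_j r_j = -m r$. Since the correction term $-\tfrac{1}{D^2}\sum_{i<j}\lambda_i\lambda_j D_iD_j(r_i-r_j)^2$ is $\le 0$, I conclude $r' \le -m r$, and dividing by $r>0$ gives $I_{DI}(D) = -r'/r \ge m = \min_i I_{DI}(D_i)$, the claimed inequality.

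Finally, for the strict statement at a point $\hat t$ with $r_j(\hat t)\ne r_k(\hat t)$ for some $j\ne k$: the correction term is then \emph{strictly} negative, since the summand indexed by that pair contributes $(r_j-\hat r_k)^2>0$ while every $\lambda$ and every $D_i$ is positive. Hence $r'(\hat t) < -m\, r(\hat t)$ strictly, and dividing by $r(\hat t)>0$ delivers $I_{DI}(D)(\hat t) > \min_i I_{DI}(D_i)(\hat t)$. I do not expect a genuine obstacle here: the two identities from Proposition~\ref{Pratt} do the heavy lifting, and the proof is essentially bookkeeping with inequality directions. The one place that warrants a word of care is the positivity of the rates $r$ and $r_i$ — needed both to divide and to preserve the direction of the inequality when multiplying by $-r_i$ — which follows from strict monotonicity of the discount functions and is already implicit in the definition of $I_{DI}$.
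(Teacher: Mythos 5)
Your proposal is correct and is essentially the paper's own argument in a different wrapper: dividing your inequality $r' \le \sum_j w_j r'_j - D^{-2}\sum_{i<j}\lambda_i\lambda_j D_iD_j(r_i-r_j)^2$ through by $-r$ recovers exactly the paper's identity $I_{DI}(D)=\sum_i \alpha_i I_{DI}(D_i)+Q$, since your effective weights satisfy $w_j r_j/r=\lambda_j D_j r_j\big/\sum_i\lambda_i D_i r_i=\alpha_j$ and your correction term is the same sum of squares as the paper's numerator $N$. The only organizational difference is that you reuse the $r'$ decomposition from the proof of Proposition~\ref{Pratt} and argue via the pointwise bound $r'\le -mr$, whereas the paper rederives the decomposition directly at the level of the index from $D_i''=D_ir_i(r_i+I_{DI}(D_i))$; both arguments rest on the same two facts you flag, positivity of each $r_i$ (implicit in $I_{DI}(D_i)$ being well defined) and strict positivity of the $(r_j-r_k)^2$ term at $\hat t$.
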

\begin{proof}
Let $I_i=I_{DI}(D_i)$ for all $i\in\{ 1, \ldots, n\}$ and let $I=I_{DI}(D)$. 
Recall that $D'=-rD$ and hence $D''=Dr^2-Dr'=Dr(r+I)$. Recall also that \[I\ =\ -\frac{D''}{D'}+\frac{D'}{D}.\]
Therefore,
\[
I\ =\ \frac{-\sum_{i=1}^n \lambda_i D''_i}{\sum_{i=1}^n \lambda_i D'_i}+\frac{\sum_{i=1}^n \lambda_i D'_i}{\sum_{i=1}^n \lambda_i D_i}
\ =\  \frac{\sum_{i=1}^n \lambda_i D_i r_i(r_i+I_i)}{\sum_{i=1}^n \lambda_i D_i r_i}-\frac{\sum_{i=1}^n \lambda_i D_i r_i}{\sum_{i=1}^n \lambda_i D_i}.
\]
This expression can be rearranged as follows:
\[
I\  =\  \frac{\sum_{i=1}^n \lambda_i D_i r_iI_i}{\sum_{i=1}^n \lambda_i D_i r_i}+\frac{\sum_{i=1}^n \lambda_i D_i r_i^2}{\sum_{i=1}^n \lambda_i D_i r_i}-\frac{\sum_{i=1}^n \lambda_i D_i r_i}{\sum_{i=1}^n \lambda_i D_i}\ =\ \sum_{i=1}^n \alpha_i(t) I_i+Q,
\]
where 
\[
Q\ =\ \frac{\sum_{i=1}^n \lambda_i D_i r_i^2}{\sum_{i=1}^n \lambda_i D_i r_i}-\frac{\sum_{i=1}^n \lambda_i D_i r_i}{\sum_{i=1}^n \lambda_i D_i}
\ \ \text{and}\ \ 
\alpha_i\ =\ \frac{\lambda_i D_i r_i}{\sum_{i=1}^n \lambda_i D_i r_i}\] with $\sum_{i=1}^n \alpha_i=1$ and $\alpha_i\geq 0$.
Note that
\[
\min_i\{I_i\} \ \leq\  \sum_{i=1}^n \alpha_i I_i \ \leq \ \max_i\{I_i\} \ \ \text{for all } t\in [0,\infty).
\]

The expression $Q$ can be rewritten as:
\[
Q\ =\ \frac{\Big[\sum_{i=1}^n \lambda_i D_i r_i^2\Big]\cdot\Big[\sum_{i=1}^n \lambda_i D_i\Big] - \Big[\sum_{i=1}^n \lambda_i D_i r_i\Big]^2}{\Big[\sum_{i=1}^n \lambda_i D_i r_i\Big]\cdot\Big[\sum_{i=1}^n \lambda_i D_i\Big]}.
\]
The denominator of $Q$ is strictly positive, so the sign of $Q$ depends on the sign of the numerator. Let $N$ be the numerator of $Q$:
\[
N=\Big[\sum_{i=1}^n \lambda_i D_i r_i^2\Big]\cdot\Big[\sum_{i=1}^n \lambda_i D_i\Big] - \Big[\sum_{i=1}^n \lambda_i D_i r_i\Big]^2.
\]  
We can simplify $N$ as follows:
\[
N=\sum_{i=1}^n \sum_{j=1}^n \lambda_i \lambda_j D_i D_j r_i^2  - \sum_{i=1}^n \sum_{j=1}^n \lambda_i \lambda_j D_i D_j r_i r_j.
\]
Therefore, we have:
\[
N\ =\ \sum_{i=1}^n\sum_{j=1}^n\theta_{ij}r_{i}\left(r_{i}-r_{j}\right)
\]
where $\theta_{ij}=\lambda_{i}\lambda_{j}D_{i}D_{j}$. Since $\theta
_{ij}=\theta_{ji}>0$ for all $i$ and $j$ we see that%
\[
N\ =\ \sum_{i<j}\theta_{ij}\left[  r_{i}\left(  r_{i}-r_{j}\right)
+r_{j}\left(  r_{j}-r_{i}\right)  \right]  \ =\ \sum_{i<j}\theta_{ij}\left(
r_{i}-r_{j}\right)  ^{2}\text{.}%
\]

Hence, $N \geq 0$ and $N > 0$ if $r_j \neq r_k$ for some $j \neq k$. It follows that $Q \geq 0$ and $Q > 0$ if $r_j \neq r_k$ for some $j \neq k$.
Therefore, since $I=\sum_{i=1}^n \alpha_i I_i+Q$ and
\[
\min_i\{I_i\} \ \leq\  \sum_{i=1}^n \alpha_i I_i \ \leq\  \max_i\{I_i\} \ \text{for all } t\in [0,\infty),
\]
we have:
\[
\min_i\{I_i\}\ \leq\  \min_i\{I_i\}+Q \ \leq\  \sum_{i=1}^n \alpha_i I_i+Q=I.
\]
In other words, $I\geq\ds\min_i\{I_i\}$ on $[0, \infty)$,
and $I(\hat{t})>\ds\min_i\{I_i(\hat{t})\}$ if $r_j(\hat{t}) \neq r_k(\hat{t})$ for some $j \neq k$.
\end{proof}
Observe that this result does not require discount functions to exhibit decreasing impatience. Therefore, Theorem \ref{main3} makes less restrictive assumptions than Proposition \ref{Pratt} -- it allows the discount functions to exhibit increasing impatience. 

\subsection{Mixtures of proportional hyperbolic discount functions}\label{prob}

Weitzman \cite{weitzman1998far} shows that if different discount functions may eventuate with certain probabilities, then future costs and benefits must eventually be discounted at the lowest possible limiting time preference rate. This result is particularly salient when the possible discount functions are all exponential, with constant time preference rates. The purpose of this section is to give an analogous result for proportional hyperbolic discount functions, with constant hyperbolic discount rates (Example \ref{fex}). The result in this case is very different to Weitzman's. Long-term future benefits and costs are discounted, not at the lowest hyperbolic discount rate, but at the probability-weighted harmonic mean of the individual hyperbolic discount rates.  

Suppose that there is some uncertainty about the rate of time preference, and we have a set of possible scenarios $N=\{1,\ldots,n\}$ where each time preference rate $r_i(t)$ may eventuate with probability $p_i\geq 0$, such that $\sum_{t=1}^n p_i=1$. Since for each $i$
\[
r_i(t)=-\frac{D_i'(t)}{D_i(t)},
\]
the corresponding discount function can be expressed in terms of the rate of time preference as follows
\begin{equation}\label{discthroughrate}
D_i(t)=\exp{\left(-\int_0^t{r_i(\tau)d\tau}\right)} \ \text{for each}\ i \in N.
\end{equation} 
The certainty equivalent discount function will be: 
\[
D=\displaystyle\sum_{i=1}^n p_i D_i, \ \text{where} \ p_i\geq 0 \ \text{and} \ \sum_{t=1}^n p_i=1.
\]
Then the certainty equivalent time preference rate is $r=-\frac{D'}{D}$.
Weitzman \cite{weitzman1998far} proved that if each rate of time preference converges to a non-negative value as time goes to infinity, then the certainty equivalent rate of time preference converges to the lowest of these values. In other words, if $\lim_{t\to \infty} r_i(t)=r_i^*$ with $r_i^*\geq 0$ and $r_1^*<r_{i}^*$, where $i\neq 1$, then $\lim_{t\to \infty} r(t) = r^*_1$.

\begin{example}
Note that $r_i(t)$ in \eqref{discthroughrate} is constant if and only if $D_i$ is exponential. 
In this case we have:
\[
D_i(t)=\exp{\left(-r_it\right)} \ \text{for each}\ i \in N,
\] 
where $r_i=const$.
Therefore, Weitzman's result implies that $\lim_{t\to \infty} r(t) = \min_i{r_i}$.
Figure \ref{graphMixExp} illustrates for the case $n=3$, $r_1=0.01$, $r_2=0.02$, $r_3=0.03$ and $p_1=p_2=p_3=1/3$. We also observe that the certainty equivalent rate of time preference $r(t)$ decreases monotonically towards $r_1$. This is a consequence of Corollaries \ref{exp} and \ref{JYC} and the fact that $I_{DI}(D)=-r'/r$.
\begin{figure}[h!] 
	\hspace{-1.5cm}
	\includegraphics{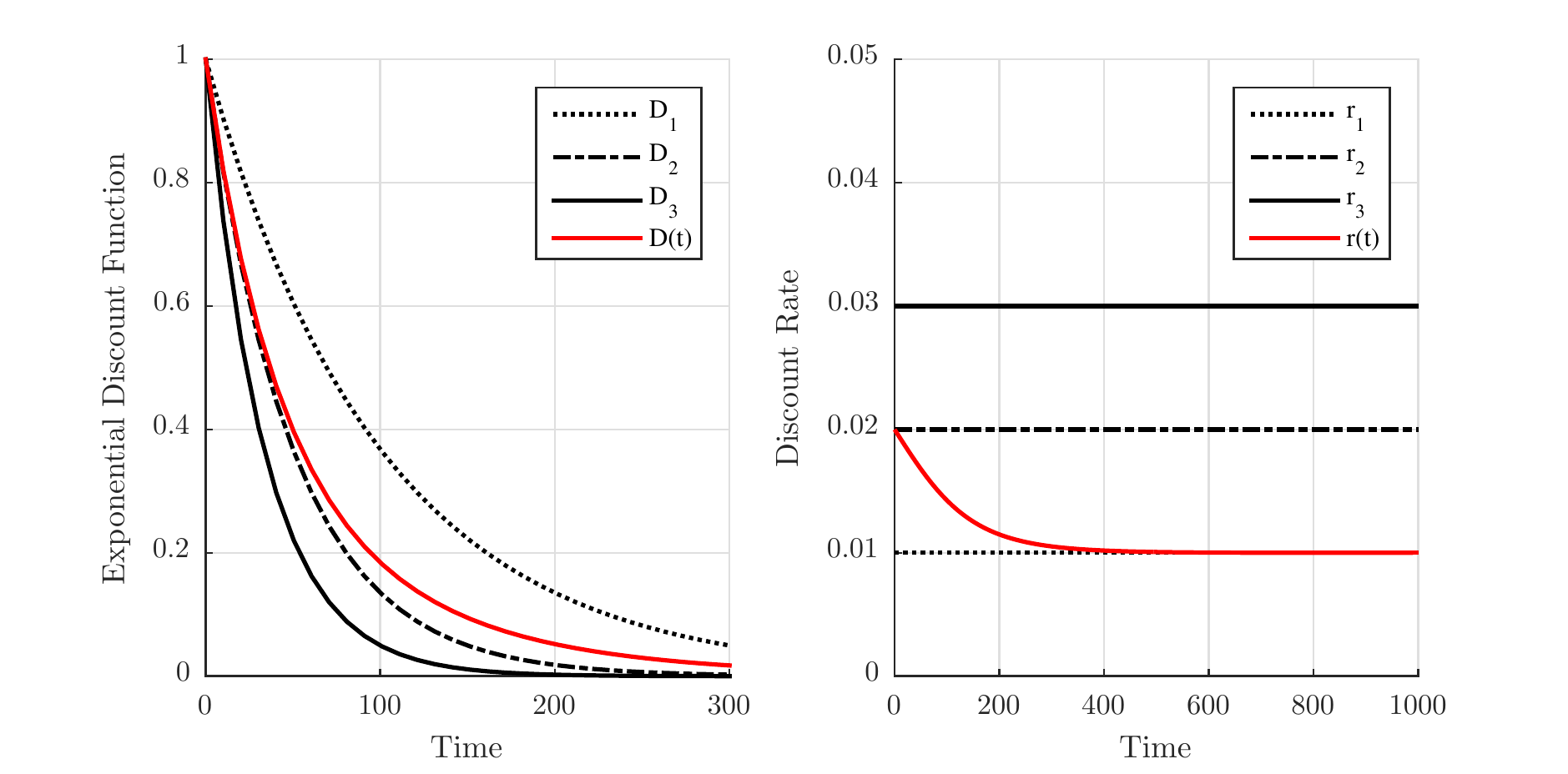}
	\caption{Mixture of Exponential Discount Functions}
	\label{graphMixExp}
\end{figure}

\end{example}

However, Weitzman's result \cite{weitzman1998far} does not provide much insight in the special case when each possible time preference has a DU representation with a proportional hyperbolic discount function. Suppose \[D_i(t)\  \ = \frac{1}{1+h_i t}\] for each $i\in N$, where $h_i>0$ is the hyperbolic discount rate. Without loss of generality we assume that $h_1 > h_2 > \ldots > h_n$. Suppose that $D_i$ eventuates with probability $p_i$ where $p_i\geq 0$ and $\sum_{i=1}^n p_i=1$. Then the certainty equivalent discount function would be
\[
D(t)\ =\ \frac{p_1}{1+h_1 t}+\ldots+\frac{p_n}{1+h_n t}. 
\]
The rate of time preference is \[r_i(t)\ =\ \frac{h_i}{1+h_it}\] for all $i$. It is obvious that $r^*_i=r^*_j=0$ for all $i\neq j$ and $ \lim_{t \to \infty} r(t)=0$, which, indeed, corresponds to Weitzman's result. However, this conclusion does not give much information about the asymptotic behavior of the certainty equivalent discount function. Given that each possible discount function comes from a different DI class (unlike in the case of heterogeneous exponential discount functions) we would like to know which (if any) most closely characterizes the asymptotic behaviour of the certain equivalent function.
 
To answer this question we need to modify the analysis of Weitzman.
Note that the certainty equivalent discount function can be written as 
\[
D(t)\ =\ \frac{1}{1+h(t)t},
\] 
where $h(t)$ is the {\em certainty equivalent hyperbolic discount rate}.
In particular, \[h(t)=(\frac{1}{D(t)}-1)\frac{1}{t},\] so $h(t)$ is well-defined for $t\in (0, \infty)$.
We ask: {\em How does $h(t)$ behave as $t\to \infty$?} \par\smallskip

We remind the reader that the {\em weighted harmonic mean} of non-negative values \linebreak $x_1, x_2, \ldots, x_n$ with non-negative weights $\row an$ satisfying $a_1+\ldots+a_n=1$  is
\[
H(x_1, a_1;\ldots;x_n, a_n)=\left( \sum_{i=1}^n \frac{a_i}{x_i}\right)^{-1}.
\]
It is well-known that the weighted harmonic mean is smaller than the corresponding expected value (weighted arithmetic mean).
\begin{theorem}
\label{main2}
Suppose that each $D_i$ ($i\in N$) is a proportional hyperbolic discount function, with associated hyperbolic discount rate $h_i$. Discount function $D_i$ will eventuate with probability $p_i$. Then the long-term certainty equivalent hyperbolic discount rate is the probability-weighted harmonic mean of the individual hyperbolic discount rates, $H(h_1, p_1;\ldots;h_n, p_n)$.
\end{theorem}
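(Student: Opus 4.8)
The plan is to compute $\lim_{t\to\infty} h(t)$ directly from the explicit formula for $h(t)$ recorded immediately before the theorem. First I would rewrite
\[
h(t)=\left(\frac{1}{D(t)}-1\right)\frac{1}{t}=\frac{1-D(t)}{t\,D(t)},
\]
a form that isolates the two quantities whose asymptotics control $h$, namely $D(t)$ and the product $t\,D(t)$. The whole argument then reduces to evaluating the limits of these two expressions as $t\to\infty$ and taking their quotient.

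The substantive step is the limit of $t\,D(t)$. Because $D=\sum_{i=1}^n p_i D_i$ is a \emph{finite} sum, the limit passes through the summation and may be computed termwise:
\[
\lim_{t\to\infty} t\,D(t)=\sum_{i=1}^n p_i\lim_{t\to\infty}\frac{t}{1+h_i t}=\sum_{i=1}^n\frac{p_i}{h_i},
\]
where each termwise limit equals $p_i/h_i$ because $h_i>0$. Write $S=\sum_{i=1}^n p_i/h_i$; by the definition of the weighted harmonic mean, $S^{-1}=H(h_1,p_1;\ldots;h_n,p_n)$. The companion limit is immediate: each summand $p_i/(1+h_i t)\to 0$, so $D(t)\to 0$ and hence $1-D(t)\to 1$.

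Combining the two limits yields
\[
\lim_{t\to\infty} h(t)=\frac{\ds\lim_{t\to\infty}\bigl(1-D(t)\bigr)}{\ds\lim_{t\to\infty} t\,D(t)}=\frac{1}{S}=H(h_1,p_1;\ldots;h_n,p_n),
\]
which is precisely the asserted long-term certainty equivalent hyperbolic discount rate. I do not anticipate a genuine obstacle here; the only point deserving a word of care is that the quotient rule for limits is legitimate because the denominator limit $S$ is strictly positive (each $h_i>0$, the $p_i$ are non-negative and sum to $1$), so no indeterminate form arises. If one wanted more than bare convergence, the exact identity $t\,D(t)=S-\ds\sum_{i}\frac{p_i}{h_i(1+h_i t)}$ pins down the $O(1/t)$ rate at which $h(t)$ approaches $S^{-1}$, but the limit itself is all the theorem demands.
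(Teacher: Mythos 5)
Your proposal is correct, and it reaches the conclusion by a slightly different, and in fact tighter, route than the paper. The paper expands each summand as $\frac{p_i}{1+h_it}=\frac{p_i}{h_it}+\epsilon_i(t)$, sums, and matches the result against $\frac{1}{1+Ht}+\hat{\epsilon}(t)$ to read off the limit of $h(t)$; you instead write $h(t)=\frac{1-D(t)}{t\,D(t)}$ and compute $\lim_{t\to\infty}t\,D(t)=\sum_i p_i/h_i$ termwise, finishing by limit arithmetic. The underlying asymptotics ($\frac{t}{1+h_it}\to \frac{1}{h_i}$) are the same, but your packaging as a quotient of two convergent limits buys something real: the paper's stated error control, $\hat{\epsilon}(t)/t^2\to 0$, is too weak as written (it is satisfied by any bounded function, and what the final step actually needs is $\hat{\epsilon}(t)=o(1/t)$, i.e.\ $t\,\hat{\epsilon}(t)\to 0$), so taken literally the paper's last implication is a small non sequitur, presumably a typo for the correct bound. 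Your argument avoids error-term bookkeeping entirely and so has no such gap; your closing remark that the identity $t\,D(t)=S-\sum_i\frac{p_i}{h_i(1+h_it)}$ recovers the $O(1/t)$ convergence rate also supplies, rigorously, the quantitative information the paper's expansion was gesturing at. One cosmetic point: in the setting of this section the paper allows $p_i\geq 0$ rather than strictly positive weights, and your observation that $S>0$ because the $p_i$ sum to $1$ is exactly what keeps the quotient rule legitimate there.
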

\begin{proof}
We note that
\[
\frac{p_i}{1+h_it} \ =\  \frac{p_i}{h_it}+\epsilon_i(t), 
\]
where $\epsilon_i(t)/t^2 \to 0$ when $t\to\infty$. Let $\epsilon(t)=\epsilon_1(t)+\ldots+\epsilon_n(t)$. Hence it follows that:
\begin{align*}
\frac{1}{1+h(t)t} \ =\  \sum_{i=1}^n p_iD_i(t) & \ =\  \frac{p_1}{1+h_1 t}+\ldots+\frac{p_n}{1+h_n t}\\ & \ =\   \frac{p_1}{h_1 t}+\ldots+\frac{p_n}{h_n t} +\epsilon(t)
\\ &\ =\ \biggl( \frac{p_1}{h_1}+\ldots+\frac{p_n}{h_n}\biggr ) \frac{1}{t} +\epsilon(t)\\ &\  =\  \frac{1}{H(h_1, p_1;\ldots;h_n, p_n) t}+\epsilon(t)\\
 &\ =\   \frac{1}{1+H(h_1, p_1;\ldots;h_n, p_n) t}+\hat{\epsilon}(t),
 \end{align*}
where $\hat{\epsilon}(t)/t^2 \to 0$ as $t\to\infty$. This implies that $h(t) \to H(h_1, p_1;\ldots;h_n, p_n)$ as $t\to\infty$.
\end{proof}

\begin{figure}[h!] 
	\hspace{-1.5cm}
	\includegraphics{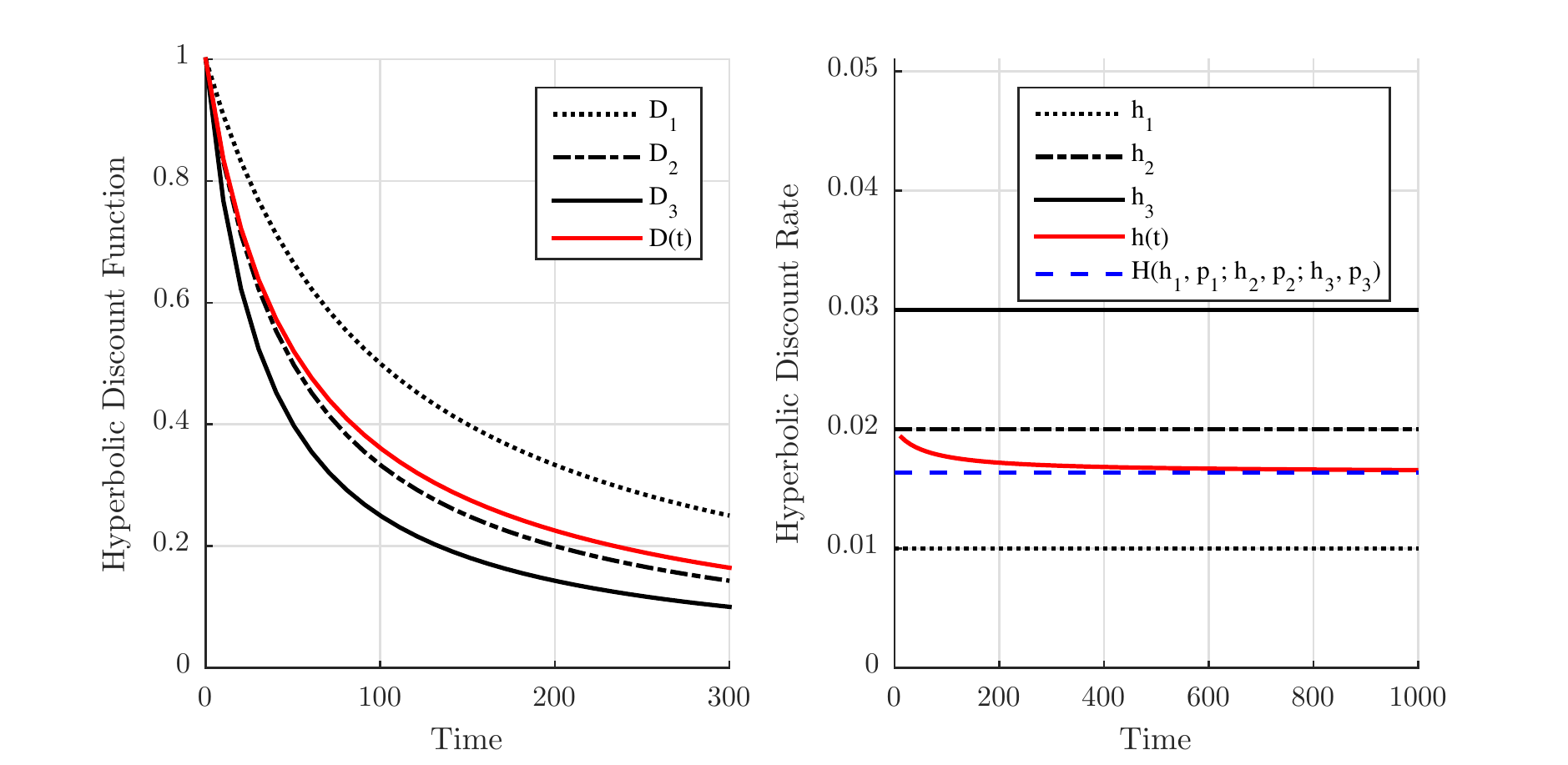}
	\caption{Mixture of Hyperbolic Discount Functions}
	\label{gMH}
\end{figure}
Figure \ref{gMH} illustrates Theorem \ref{main2} for the case $n=3$, when hyperbolic rates $h_1=0.01$, $h_2=0.02$ and $h_3=0.03$ eventuate with equal probabilities. Note that $h_2=0.02$ corresponds to the arithmetic mean of $h_1$, $h_2$ and $h_3$. Figure \ref{gMH} displays the convergence of the certainty equivalent hyperbolic discount rate to the weighted harmonic mean $H(h_1, p_1;h_2, p_;h_3, p_3)$. It also shows the certainty equivalent hyperbolic discount rate decreasing monotonically. The following proposition proves that this is always the case.

\begin{proposition}
	\label{hyp2}
Suppose that each $D_i$ ($i\in N$) is a proportional hyperbolic discount function, with associated hyperbolic discount rate $h_i$. Discount function $D_i$ will eventuate with probability $p_i$. Then the certainty equivalent hyperbolic discount rate is strictly  decreasing on $(0, \infty)$.
\end{proposition}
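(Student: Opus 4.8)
The plan is to reduce the claim to a single transparent Jensen-type inequality. I would start from the explicit formula $h(t) = \left(\tfrac{1}{D(t)} - 1\right)\tfrac{1}{t}$ recorded just before the statement, where $D(t) = \sum_{i=1}^n \tfrac{p_i}{1+h_i t}$. Writing $\psi(t) = h(t)\,t = \tfrac{1}{D(t)} - 1$, so that $h(t) = \psi(t)/t$ and $h'(t) = \tfrac{t\psi'(t) - \psi(t)}{t^2}$, the sign of $h'(t)$ is exactly the sign of $t\psi'(t) - \psi(t)$. Since $\psi'(t) = -D'(t)/D(t)^2$ and $\psi(t) = (1-D(t))/D(t)$, and since $D(t) \in (0,1)$ for $t>0$, multiplying the inequality $t\psi'(t) < \psi(t)$ through by the positive quantity $D(t)^2$ shows that $h'(t) < 0$ is equivalent to $-t\,D'(t) < D(t) - D(t)^2$.

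Next I would compute $-t\,D'(t)$ explicitly. Differentiating the mixture gives $-t\,D'(t) = \sum_{i} \tfrac{p_i h_i t}{(1+h_i t)^2}$, and the elementary identity $\tfrac{h_i t}{(1+h_i t)^2} = \tfrac{1}{1+h_i t} - \tfrac{1}{(1+h_i t)^2}$ yields $-t\,D'(t) = D(t) - E(t)$, where $E(t) = \sum_{i} \tfrac{p_i}{(1+h_i t)^2}$. Substituting this into the inequality from the first paragraph, the leading $D(t)$ terms cancel, and the desired bound $h'(t) < 0$ collapses to the single clean inequality $E(t) > D(t)^2$.

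Finally I would recognize $E(t) > D(t)^2$ as a strict variance (Jensen) inequality. Introducing the random variable $A$ that takes the value $a_i = (1+h_i t)^{-1}$ with probability $p_i$, we have $D(t) = \mathbb{E}[A]$ and $E(t) = \mathbb{E}[A^2]$, so $E(t) - D(t)^2 = \mathrm{Var}(A) \geq 0$; equivalently this is Cauchy--Schwarz, $\left(\sum_i p_i a_i\right)^2 \leq \left(\sum_i p_i\right)\left(\sum_i p_i a_i^2\right)$ with $\sum_i p_i = 1$. Equality holds if and only if $A$ is almost surely constant. For $t>0$ the distinct rates $h_1 > \ldots > h_n$ produce distinct values $a_i$, so $\mathrm{Var}(A) > 0$ as soon as at least two of the $p_i$ are positive (the non-degenerate mixture case intended here). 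Hence $E(t) > D(t)^2$ and therefore $h'(t) < 0$ for every $t \in (0,\infty)$, proving strict monotonicity.

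I expect the only real subtlety to be bookkeeping rather than conceptual: one must carefully verify the algebraic identity $-t\,D'(t) = D(t) - E(t)$ and track signs correctly when multiplying by $D(t)^2$ and when reducing $t\psi'(t) < \psi(t)$. All of the genuine content sits in the last step --- the strict convexity of $x \mapsto x^2$, i.e.\ positivity of variance --- and everything before it is a routine reduction, so the main obstacle is simply organizing the reduction so that the cancellation leaving $E(t) > D(t)^2$ is made manifest.
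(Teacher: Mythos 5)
Your proof is correct, but it takes a genuinely different route from the paper's. The paper argues by induction on $n$: for the base case $n=2$ it computes the certainty equivalent rate in closed form, $h(t)=\bigl(p_1h_1+p_2h_2+h_1h_2t\bigr)\big/\bigl(1+(p_1h_2+p_2h_1)t\bigr)$, and shows the numerator of $h'(t)$ collapses to $-p_1p_2(h_1-h_2)^2<0$; the induction step then rewrites the $(k+1)$-fold mixture as a two-component mixture of $D^{(k)}=(1+h_k(t)t)^{-1}$ (with $h_k$ strictly decreasing by hypothesis) and $D_{k+1}$, and must separately verify that the extra term multiplying $\hat{h}_1'$, namely $(\hat{p}_1+\hat{h}_2t)(1+\hat{p}_1\hat{h}_2t+\hat{p}_2\hat{h}_1t)-\hat{p}_2t(\hat{p}_1\hat{h}_1+\hat{p}_2\hat{h}_2+\hat{h}_1\hat{h}_2t)$, is strictly positive. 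Your reduction --- $h'(t)<0$ iff $-tD'(t)<D(t)-D(t)^2$, the identity $-tD'(t)=D(t)-E(t)$ with $E(t)=\sum_i p_i(1+h_it)^{-2}$, and hence $h'(t)<0$ iff $E(t)>D(t)^2$, which is the strict positivity of $\mathrm{Var}(A)$ for the random variable $A$ taking distinct values $(1+h_it)^{-1}$ --- replaces the induction and all case analysis with a single Jensen/Cauchy--Schwarz step; I checked the algebra and each identity is right. What each approach buys: the paper's base-case computation yields the explicit closed form of $h(t)$ for $n=2$, making visible the convergence to the weighted harmonic mean of Theorem \ref{main2}, whereas your argument is shorter, exposes the structural reason for monotonicity (positivity of a variance, in the same spirit as the quadratic forms $\sum_{i<j}\theta_{ij}(r_i-r_j)^2$ in Proposition \ref{Pratt} and Theorem \ref{main3}), and extends verbatim to continuous mixtures $D(t)=\int(1+ht)^{-1}\,d\mu(h)$, which the finite induction cannot reach. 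Your non-degeneracy caveat (at least two positive weights, distinct rates) is the right one and matches the paper's standing assumptions --- $h_1>h_2>\cdots>h_n$ and, per the mixture definition in Section \ref{mix}, strictly positive weights --- and degeneracy of $A$ occurs only at $t=0$, outside $(0,\infty)$, so strictness holds everywhere claimed.
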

\begin{proof}
We prove this statement by induction on $n$. 
First we need to prove that the statement holds for $n=2$. The respective certainty equivalent hyperbolic discount rate is:
\begin{align*}
	h(t) =\left[ \frac{1}{p_1(1+h_1t)^{-1}+p_2(1+h_2t)^{-1}} -1 \right] \frac{1}{t}
\end{align*}
for each $t>0$. Rearranging:
\begin{align*}
h(t)=\left[ \frac{(1+h_1t)(1+h_2t)}{p_1(1+h_2t)+p_2(1+h_1t)}-1 \right]\frac{1}{t}\ =\ 
\left[ \frac{1+(h_1+h_2)t+h_1h_2t^2} {p_1 +p_2+(p_1h_2+p_2h_1)t}-1\right]\frac{1}{t}.
\end{align*}
Since $p_1+p_2=1$ we obtain:
\begin{align*}
h(t)&=\left[ \frac{1+\left(h_1+h_2\right)t+h_1h_2t^2} {1+\left( p_1h_2+p_2h_1\right)t}-1\right]\frac{1}{t}\ =\ 
\frac{\left( h_1+h_2-p_1h_2-p_2h_1\right)t+h_1h_2t^2} {1+\left( p_1h_2+p_2h_1\right)t} \cdot\frac{1}{t} \\
&\ =\ \frac{h_1+h_2-p_1h_2-p_2h_1+h_1h_2t} {1+\left( p_1h_2+p_2h_1\right)t}\ =\ \frac{p_1h_1+p_2h_2+h_1h_2t} {1+\left( p_1h_2+p_2h_1\right)t}.
\end{align*}
By differentiating $h(t)$:
\begin{equation} \label{deriv}
h'(t)=\frac{h_1h_2\left( 1+\left( p_1h_2+p_2h_1 \right)t \right)-\left( p_1h_1+p_2h_2+h_1h_2t\right) \left( p_1h_2+p_2h_1\right)}{ \left[ 1+(p_1h_2+p_2h_1)t \right]^2}
\end{equation}
We need to show that $h'(t) < 0$. Since the denominator of \eqref{deriv} is positive, the sign of $h'(t)$ depends on the sign of the numerator. Therefore, we denote the numerator of \eqref{deriv} by $Q$ and analyse it separately:
\begin{align*}
Q(t)&\ =\ h_1h_2\left[1+\left(p_1h_2+p_2h_1\right)t\right]-\left(p_1h_1+p_2h_2+h_1h_2t\right)\left(p_1h_2+p_2h_1\right)\\
&\ =\ h_1h_2+h_1h_2(p_1h_2+p_2h_1)t-(p_1h_1+p_2h_2)(p_1h_2+p_2h_1)-h_1h_2(p_1h_2+p_2h_1)t\\
&\ =\ h_1h_2-\left(p_1h_1+p_2h_2\right)\left(p_1h_2+p_2h_1\right).
\end{align*}
By expanding the brackets and using the fact that $p_1+p_2=1$ implies $1-p_1^2-p_2^2=2p_1p_2$ expression $Q$ can be simplified further:
\begin{align*}
Q(t)&\ =\ h_1h_2-p_1^2h_1h_2-p_1p_2h_1^2-p_1p_2h_2^2-p_2^2h_1h_2  \\&\ =\ h_1h_2(1-p_1^2-p_2^2)-p_1p_2(h_1^2+h_2^2)\\
&\ =\ 2p_1p_2h_1h_2-p_1p_2(h_1^2+h_2^2)\\&\ =\ -p_1p_2(h_1-h_2)^2.
\end{align*}
Therefore, since $h_1\neq h_2$ we have $Q<0$. Hence it follows that $h'(t)<0$ and $h(t)$ is strictly decreasing.

Suppose that the proposition holds for $n=k$. We need to show that it also holds for $n=k+1$.
When $n=k+1$ the certainty equivalent hyperbolic discount rate is:
\[
h_{k+1}(t) = \left[ \frac{1}{D^{(k+1)}}-1 \right] \frac{1}{t},
\]
where 
\[
D^{(k+1)}\ =\ \sum_{i=1}^{k+1}p_i D_i\ =\ \left( 1-p_{k+1}\right)\left(\sum_{i=1}^{k}\frac{p_i}{1-p_{k+1}} D_i\right)+p_{k+1}D_{k+1}.
\]
Since \[\sum_{i=1}^{k}\frac{p_i}{1-p_{k+1}}=1,\] we have \[D^{(k+1)}\ =\  \left( 1-p_{k+1}\right)D^{(k)}+p_{k+1}D_{k+1}.
\] where \[D^{(k)}\ =\ \sum_{i=1}^{k}\frac{p_i}{1-p_{k+1}} D_i.\] By the induction hypothesis it follows that 
\[D^{(k)}=\frac{1}{1+h_k(t)t},\] where $h_k$ is strictly decreasing.
Therefore,
\begin{align*}
h^{(k+1)}(t)&=\left[\frac{1}{(1-p_{k+1})D^{(k)}+p_{k+1}D_{k+1}}-1\right]\frac{1}{t}\\
&=\left[\frac{1}{\left( 1-p_{k+1}\right)\left( 1+h_k(t)t\right)^{-1}+p_{k+1}\left(1+h_{k+1}t\right)^{-1}}-1\right]\frac{1}{t}.
\end{align*}

Let $\hat{p_1}=1-p_{k+1}$, $\hat{p_2}=p_{k+1}$, $\hat{h_1}(t)=h_k(t)$ and $\hat{h_2}=h_{k+1}=const$.
Then we have
\begin{equation*}
h^{(k+1)}(t)=\left[ \frac{1}{\hat{p_1}(1+\hat{h_1}(t)t)^{-1}+\hat{p_2}(1+\hat{h_2}t)^{-1}}-1\right]\frac{1}{t}.
\end{equation*}
Analogously to the case $n=2$, this expression can be rearranged to give:
\begin{equation*}
h^{(k+1)}(t)=\frac{\hat{p_1}\hat{h_1}+\hat{p_2}\hat{h_2}+\hat{h_1}\hat{h_2}t}{1+\hat{p_1}\hat{h_2}t+\hat{p_2}\hat{h_1}t}.
\end{equation*}
However, by contrast to the case $n=2$, $\hat{h_1}$ is now a function of $t$.
The derivative of $h^{(k+1)}$ is:
\begin{align*}
&\frac{dh^{(k+1)}(t)}{dt}=\\
&\frac{\left( \hat{p_1}\hat{h}'_1+\hat{h_1}\hat{h_2}+\hat{h}'_1\hat{h_2}t\right) \left( 1+\hat{p_1}\hat{h_2}t+\hat{p_2}\hat{h_1}t\right)-
\left(\hat{p_1}\hat{h_1}+\hat{p_2}\hat{h_2}+\hat{h_1}\hat{h_2}t\right)\left(\hat{p_1}\hat{h_2}+\hat{p_2}\hat{h_1}+\hat{p_2}\hat{h}'_1 t\right)
}{\left[1+\hat{p_1}\hat{h_2}t+\hat{p_2}\hat{h_1}t\right]^2}.
\end{align*}
The denominator of this fraction is strictly positive, so the sign of the derivative depends on the numerator only. Denote the numerator by $N$:
\begin{align*}
N=&\left( \hat{p_1}\hat{h}'_1+\hat{h_1}\hat{h_2}+\hat{h}'_1\hat{h_2}t\right)\left( 1+\hat{p_1}\hat{h_2}t+\hat{p_2}\hat{h_1}t \right)\\
&-\left( \hat{p_1}\hat{h_1}+\hat{p_2}\hat{h_2}+\hat{h_1}\hat{h_2}t\right)\left( \hat{p_1}\hat{h_2}+\hat{p_2}\hat{h_1}+\hat{p_2}\hat{h}'_1 t\right).
\end{align*}
Note that%
\[
N\ =\ \hat{Q}\left(  t\right)  +\hat{h}_{1}^{\prime}\left[  \left(  \hat
{p}_{1}+\hat{h}_{2}t\right)\left(1+\hat{p}_{1}\hat{h}_{2}t+\hat{p}_{2}\hat{h}_{1}t\right)-\hat{p}_{2}t\left(  \hat{p}_{1}\hat{h}_{1}+\hat{p}_{2}\hat{h}_{2}+\hat{h}_{1}\hat{h}_{2}t\right)  \right],
\]
where $\hat{Q}\left(  t\right)  $ is defined as in the proof of Proposition 1,
but with $h_{1}=\hat{h}_{1}\left(  t\right)  $ and $h_{2}=\hat{h}_{2}$. Since
Proposition 1 establishes that $\hat{Q}\left(  t\right)  \leq0$ (with equality
if and only if $\hat{h}\left(  t\right)  =h_{2}$) and $\hat{h}_{1}^{\prime}<0$, it
suffices to show that
\begin{equation}\label{eqn:ineq}
\left(  \hat{p}_{1}+\hat{h}_{2}t\right)  \left(  1+\hat{p}_{1}\hat{h}%
_{2}t+\hat{p}_{2}\hat{h}_{1}t\right)  -\hat{p}_{2}t\left(  \hat{p}_{1}\hat
{h}_{1}+\hat{p}_{2}\hat{h}_{2}+\hat{h}_{1}\hat{h}_{2}t\right)  >0
\end{equation}

\item Cancelling terms on the left-hand side of \eqref{eqn:ineq} leaves us with:%
\[
\hat{p}_{1}\left(  1+\hat{p}_{1}\hat{h}_{2}t\right)  +\hat{h}_{2}t\left(
1+\hat{p}_{1}\hat{h}_{2}t\right)  -\left(  \hat{p}_{2}\right)  ^{2}\hat{h}
_{2}t.
\]
We now use the fact that $\left(  \hat{p}_{2}\right)  ^{2}=\left(  1-\hat
{p}_{1}\right)  ^{2}=1-2\hat{p}_{1}+\left(  \hat{p}_{1}\right)  ^{2}$ to get
\begin{align*}
& \hat{p}_{1}\left(  1+\hat{p}_{1}\hat{h}_{2}t\right)  +\hat{h}_{2}t\left(
1+\hat{p}_{1}\hat{h}_{2}t\right)  -\left[  1-2\hat{p}_{1}+\left(  \hat{p}
_{1}\right)  ^{2}\right]  \hat{h}_{2}t\\
& =\ \hat{p}_{1}+\left(  t\hat{h}_{2}\right)  ^{2}\hat{p}_{1}+2\hat{p}_{1}
\hat{h}_{2}t,
\end{align*}
which is strictly positive as required. Therefore, $h^{(k+1)}(t)$ is strictly decreasing.
\end{proof}

\section{Discussion}

We generalized Jackson and Yariv's result \cite{jackson2014present} by proving that whenever we aggregate different discount functions from comparable DI classes, the weighted average function is always \textit{strictly more DI} than the least DI of its constituents. This also strengthens the conclusion of the theorem of Prelec \cite{prelec2004decreasing} who demonstrates that the mixture of two different discount functions from the same DI class represents more DI preferences.

When a decision maker is uncertain about her hyperbolic discount rate, we showed that long-term costs and benefits must be discounted at the probability-weighted harmonic mean of the hyperbolic discount rates that might eventuate. This complements the well-known result of Weitzman \cite{weitzman1998far}. 

One natural question that arises is whether it is possible to prove a result analogous to Proposition \ref{Pratt} when all preference orders exhibit increasing impatience (II). Will the mixture of II discount functions be (strictly) II? Perhaps surprisingly, 
the answer to this question is negative in general. 

This follows from results in the literature on survival analysis and reliability theory. The similarity between reliability theory and temporal discounting is discussed in \cite{sozou1998hyperbolic}. Takeuchi \cite{takeuchi2011non} also notes that a discount function is analogous to a survival function, $S(t)$. The failure rate associated with $S(t)$ is \[g(t)=-\frac{S'(t)}{S(t)},\] which behaves as a time preference rate. For twice continuously differentiable survival functions, a decreasing failure rate (DFR) corresponds to a decreasing time preference rate, and hence to DI, whereas an increasing failure rate (IFR) corresponds to II. Mixtures of probability distributions are a common topic in survival and reliability analysis. Proschan \cite{proschan1963theoretical} established that mixtures of distributions with DFR always exhibit DFR.\footnote{This result is comparable to the ``non-strict'' part of our Proposition \ref{Pratt}.} However, Gurland and Sethuraman \cite{gurland1994shorter, gurland1995pooling} provide striking examples of mixtures of very quickly increasing failure rates that are eventually decreasing.

\section{Appendix}
\subsection{Fishburn and Rubinstein's axioms for a discounted utility representation}
After Fishburn and Rubinstein \cite{fishburn1982time}, we assume that:
\begin{description}
	\item[Axiom 1. (Weak Order)] The preference order $\succcurlyeq$ is a weak order, i.e., it is complete and transitive.
	\item [Axiom 2. (Monotonicity)] For every $x, y \in X$, if $x<y$, then $(x, t) \prec (y, t)$ for every $t \in T$. 
	\item [Axiom 3. (Continuity)] For every $(y, s) \in X \times T$ the sets $\{ (x, t)\in X \times T \colon (x, t) \succcurlyeq (y, s)\}$ and $\{ (x, t)\in X \times T \colon (x, t) \preccurlyeq (y, s)\}$ are closed.
	\item [Axiom 4. (Impatience)] For all $t, s\in T$ and every $x>0$, if $t<s$, then $(x, t) \succ (x, s)$. If $t<s$ and $x=0$, then $(x, t) \sim (x, s)$ for every	$t, s \in T$, that is, $0$ is a time-neutral outcome.
\item [Axiom 5. (Separability)] For every $x, y, z \in X$ and every $r, s, t \in T$ if $(x, t) \sim (y, s)$ and $(y, r) \sim (z, t)$ then $(x, r) \sim (z, s)$.
\end{description}

Fishburn and Rubinstein \cite{fishburn1982time} proved the following result:

\begin{theorem}[\cite{fishburn1982time}]
The preferences $\succcurlyeq$ on $X \times T$ satisfy Axioms 1-5 if and
only if there exists a discounted utility representation for $\succcurlyeq$ on $X \times T$. 
If $(u, D)$ and $(u_0, D_0)$
both provide discounted utility representations for $\succcurlyeq$ on $X \times T$, then $u = \alpha u_0$ for some
$\alpha > 0$, and $D = \beta D_0$ for some $\beta > 0$.
\end{theorem}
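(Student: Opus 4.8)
The plan is to treat the two directions asymmetrically: necessity is a direct verification from the functional form, while sufficiency rests on recognizing Axiom~5 as a cancellation condition and invoking additive conjoint measurement, after which the multiplicative representation is obtained by exponentiation. For necessity, suppose a representation $U(x,t)=D(t)u(x)$ exists. Axiom~1 holds because $\succcurlyeq$ is represented by the real-valued $U$ and so inherits completeness and transitivity from $\geq$. Axiom~2 follows from $D(t)>0$ and strict monotonicity of $u$. Axiom~3 follows because $U$ is a product of continuous functions, so its contour sets are closed. Axiom~4 follows from $u(x)>0$ for $x>0$ together with strict decrease of $D$, while $u(0)=0$ renders the zero outcome time-neutral. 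For Axiom~5 I would multiply the indifference equations $D(t)u(x)=D(s)u(y)$ and $D(r)u(y)=D(t)u(z)$ and cancel $u(y)>0$ to obtain $D(r)u(x)=D(s)u(z)$, i.e.\ $(x,r)\sim(z,s)$.

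The substantive direction is sufficiency. The key observation is that Axiom~5 is precisely the Thomsen (double-cancellation) condition of additive conjoint measurement, with the outcome and the time coordinate serving as the two factors. I would first restrict to the strictly positive outcomes $X_{++}=X\cap(0,\infty)$, on which both factors are essential: outcomes matter by Monotonicity (Axiom~2) and time matters by Impatience (Axiom~4). Weak Order (Axiom~1) and Continuity (Axiom~3), on the connected, separable space $X_{++}\times T\subseteq\mathbb{R}^2$, supply a continuous Debreu utility and, via connectedness, the restricted-solvability and Archimedean structural hypotheses that the conjoint representation theorem requires. That theorem then yields continuous $\phi\colon X_{++}\to\mathbb{R}$ and $\psi\colon T\to\mathbb{R}$ with
\[
(x,t)\succcurlyeq(y,s)\iff\phi(x)+\psi(t)\geq\phi(y)+\psi(s),
\]
unique up to a common positive scaling and separate additive constants. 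Exponentiating, I set $u(x)=e^{\phi(x)}$ and $D(t)=e^{\psi(t)}$ to obtain the product form $U=u\cdot D$, with $u$ strictly increasing by Axiom~2 and $D$ strictly decreasing by Axiom~4.

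It remains to install the normalizations built into the definition of a discounted utility representation. I would fix the additive constant in $\psi$ so that $D(0)=1$, and extend $u$ to the whole of $X$ by $u(0)=0$, which is consistent with the time-neutrality of the zero outcome in Axiom~4 and forced by continuity as $x\to 0^+$. The condition $\lim_{t\to\infty}D(t)=0$ is the one clause not read off immediately from the contour-set axioms; I would derive it from the way the indifference sets of a continuous, impatient preference sweep out arbitrarily small discount levels. This boundary step, together with verifying the solvability and Archimedean prerequisites purely from continuity and connectedness, is the part I expect to demand the most care.

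For uniqueness, if $(u,D)$ and $(u_0,D_0)$ both represent $\succcurlyeq$ in product form, then $\ln u+\ln D$ and $\ln u_0+\ln D_0$ are two additive representations of the same weak order on $X_{++}\times T$, so the uniqueness clause of conjoint measurement gives $\ln u_0=a\ln u+b_1$ and $\ln D_0=a\ln D+b_2$ with a common $a>0$. The shared normalizations $D(0)=D_0(0)=1$ and $u(0)=u_0(0)=0$ kill the additive constants, and imposing the remaining cardinal convention (a single further common value, equivalently $a=1$) converts this into the asserted $u=\alpha u_0$ and $D=\beta D_0$ with $\alpha,\beta>0$. I would flag that without that last convention the representation is unique only up to a common power, so pinning $a=1$ is exactly the delicate point that makes the stated multiplicative scaling, rather than a power-and-scale transformation, the correct form of uniqueness.
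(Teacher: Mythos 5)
You should first note that the paper contains no proof of this statement: it is quoted verbatim from Fishburn and Rubinstein \cite{fishburn1982time} in the appendix, so there is no internal argument to compare against, and your sketch has to be judged against the statement itself and against the original source's method. In outline you reconstruct exactly that method: direct verification of Axioms 1--5 from the product form (your necessity direction is fine, except that cancelling $u(y)>0$ in the Axiom 5 step needs the trivial side case $y=0$, in which $(x,t)\sim(0,s)$ forces $x=0$ and likewise $z=0$), and, for sufficiency, the identification of Axiom 5 with the Thomsen double-cancellation condition followed by a Debreu-style additive conjoint representation on $X_{++}\times T$ and exponentiation. That skeleton is correct and is essentially how the cited theorem is proved.

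The genuine gaps are the two steps you deferred, and both would fail as promised. First, $\lim_{t\to\infty}D(t)=0$ is \emph{not} derivable from Axioms 1--5. Take $D(t)=\tfrac{1}{2}\left(1+e^{-t}\right)$ and any admissible $u$: the preferences represented by $D(t)u(x)$ satisfy all five axioms ($D$ is continuous, strictly decreasing, $D(0)=1$), yet by the very uniqueness clause of additive conjoint measurement that you invoke, every multiplicative representation of these preferences has a discount function of the form $e^{b}D^{a}$ with $a>0$, which is bounded away from zero. So no argument about indifference sets ``sweeping out arbitrarily small discount levels'' can exist; Axiom 4 yields strict decrease of $D$ and nothing more. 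Second, the uniqueness clause is unprovable as stated: if $(u,D)$ is a DU representation then so is $(u^{2},D^{2})$, since $u\geq 0$ makes squaring order-preserving, and $u^{2}(0)=0$, $D^{2}(0)=1$, $D^{2}$ strictly decreasing with limit $0$; but $u^{2}$ is not a positive multiple of $u$. You flag this yourself, but ``imposing the remaining cardinal convention $a=1$'' is not available --- it is not among the hypotheses (note also that with $D(0)=D_0(0)=1$ the printed clause $D=\beta D_0$ would force $\beta=1$, a further sign of loose transcription). Both defects trace to the appendix's rendering of Fishburn and Rubinstein: their theorem asserts a representation with $D$ positive and strictly decreasing, without the $D(0)=1$ and vanishing-limit normalizations, and uniqueness only up to the power-and-scale family $u_{0}=\gamma u^{a}$, $D_{0}=\beta D^{a}$. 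Your sketch, completed with the standard conjoint-measurement citations, proves that weaker, correct statement; it cannot prove the statement as printed, and no proof can.
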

\subsection{Proof of Proposition \ref{key}}

We need to prove the following lemma first:
\begin{lemma}\label{hh}
Suppose that $h_1$ and $h_2$ are strictly decreasing functions. Then $h_1$ is a (strictly) convex transformation of $h_2$ if and only if $h_2(s)-h_2(t)=h_2(s+\sigma+\rho)-h_2(t+\sigma)$ implies that $h_1(s)-h_1(t)\leq [<] h_1(s+\sigma+\rho)-h_1(t+\sigma)$ for every $s$. $t$, $\sigma$ and $\rho$ satisfying  $0<t<s\leq t+\sigma<s+\sigma+\rho$.
\end{lemma}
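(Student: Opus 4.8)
The plan is to transport the whole statement, through the strictly decreasing map $h_2$, into the standard description of (strict) convexity in terms of increments over equal-length intervals. By the lemma characterizing convex transformations (stating that $f_1$ is a (strictly) convex transformation of $f_2$ exactly when $f_1\circ f_2^{-1}$ is (strictly) convex), $h_1$ is a (strictly) convex transformation of $h_2$ if and only if $f:=h_1\circ h_2^{-1}$ is (strictly) convex on $J$, the range of $h_2$. Since $h_2$ is strictly decreasing, $h_2^{-1}$ and $f$ are well defined and $J$ is an interval; moreover, in the intended application the $h_i$ are logarithms of continuous discount functions, hence continuous, so $f$ is continuous, and I will use this continuity below. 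Thus the task reduces to relating (strict) convexity of $f$ to the displayed difference condition.

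First I would change variables by setting $w=h_2(t)$, $u=h_2(s)$, $q=h_2(t+\sigma)$ and $p=h_2(s+\sigma+\rho)$. Because $h_2$ is strictly decreasing, the chain $0<t<s\le t+\sigma<s+\sigma+\rho$ becomes $p<q\le u<w$, and the hypothesis $h_2(s)-h_2(t)=h_2(s+\sigma+\rho)-h_2(t+\sigma)$ becomes $w-u=q-p$; hence $[p,q]$ and $[u,w]$ are equal-length intervals, with $[u,w]$ lying weakly to the right of $[p,q]$. Using $h_1(\tau)=f(h_2(\tau))$, the conclusion $h_1(s)-h_1(t)\le[<]h_1(s+\sigma+\rho)-h_1(t+\sigma)$ reads $f(u)-f(w)\le[<]f(p)-f(q)$, that is,
\[
f(w)-f(u)\ \ge\ [>]\ f(q)-f(p).
\]
So the lemma says precisely that $f$ is (strictly) convex if and only if its increment over the right interval $[u,w]$ dominates its increment over the equal-length left interval $[p,q]$, for every admissible non-overlapping pair.

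For the ``only if'' direction I would assume $f$ (strictly) convex and invoke the monotonicity of chord slopes: for equal-length intervals with $[u,w]$ to the right of $[p,q]$ and $p<u$, convexity gives $f(w)-f(u)\ge f(q)-f(p)$, strictly when $f$ is strictly convex; translating back through $h_2$ yields the difference inequality. For the ``if'' direction I would specialise to the boundary case $s=t+\sigma$ (equivalently $u=q$, touching intervals), where $q=\tfrac{p+w}{2}$ and the hypothesised inequality becomes $f\left(\tfrac{p+w}{2}\right)\le[<]\tfrac{f(p)+f(w)}{2}$. As $p<w$ range over the interior of $J$, this is exactly (strict) midpoint convexity of $f$; together with the continuity of $f$ it yields (strict) convexity on $J$ (extended to the endpoints by continuity), which by the convex-transformation lemma completes the equivalence.

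The main obstacle I expect is the bookkeeping for the strict case together with the surjectivity of the change of variables. I must verify that every midpoint configuration $p<\tfrac{p+w}{2}<w$ in $J$ is realised by admissible parameters satisfying $0<t$, and that strict midpoint convexity plus continuity genuinely forces strict convexity (if $f$ were affine on a subinterval, the midpoint of that subinterval would give equality, contradicting strict midpoint convexity). Once the correspondence and the continuity of $f$ are secured, the change of variables and the chord-slope inequality are routine.
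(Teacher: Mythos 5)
Your proof is correct and shares its skeleton with the paper's own argument: both transport the statement through the substitution $w=h_2(t)$, $u=h_2(s)$, $q=h_2(t+\sigma)$, $p=h_2(s+\sigma+\rho)$, which turns the quadruple constraint into $p<q\leq u<w$ with $w-u=q-p$, and both reduce the lemma to (strict) convexity of $f=h_1\circ h_2^{-1}$; your ``only if'' direction via monotonicity of equal-length increments is the paper's necessity argument verbatim. Where you genuinely diverge is sufficiency. The paper simply defines $f=h_1\circ h_2^{-1}$ and invokes its Section 2.1 ``equivalent definition'' of convexity through equal increments --- even though that equivalence is stated for all ordered pairs of equal-length intervals (possibly overlapping), whereas the lemma's hypothesis only supplies the inequality when $q\leq u$, i.e.\ for disjoint or touching image intervals. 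Your specialization to the touching case $s=t+\sigma$, which extracts (strict) midpoint convexity and then upgrades it to (strict) convexity via continuity, with the affine-subinterval argument handling strictness, bridges exactly the gap the paper glosses over; in that respect your write-up is the more careful one. The cost is the continuity assumption on $f$, which is absent from the lemma's statement but legitimately available in its only application ($h_i=\ln D_i$ with $D_i$ continuous); note, however, that it could be dropped entirely, since $f$ is strictly increasing (a composition of two strictly decreasing maps) and a monotone midpoint-convex function is convex by the Bernstein--Doetsch theorem --- the paper's appeal to the equal-increments characterization tacitly needs the same sort of regularity, so you are no worse off. One small slip to fix: the range $J$ of $h_2$ is an interval because $h_2$ is continuous and strictly monotone, not because of strict monotonicity alone, so that remark should be conditioned on the continuity you already assume.
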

\begin{proof}
We prove necessity first. Suppose that $h_1$ is a (strictly) convex transformation of $h_2$; that is, there exists a (strictly) convex function $f$ such that $h_1=f(h_2)$. Assume also that $0<t<s\leq t+\sigma<s+\sigma+\rho$ and 
\begin{equation} \label{eq:h2}
h_2(s)-h_2(t)=h_2(s+\sigma+\rho)-h_2(t+\sigma).
\end{equation}
We need to show that 
\[
h_1(s)-h_1(t)\leq [<]\: h_1(s+\sigma+\rho)-h_1(t+\sigma)\]
whenever $0<t<s\leq t+\sigma<s+\sigma+\rho$.
Since $h_2$ is strictly decreasing, it follows that 
\[
h_2(s+\sigma+\rho)<h_2(t+\sigma)\leq h_2(s)<h_2(t).
\]
Recall that $f$ is a (strictly) convex function. Therefore, as equality \eqref{eq:h2} holds, it implies that
\[
f(h_2(t+\sigma))-f(h_2(s+\sigma+\rho))\leq [<] \: f(h_2(t))-f(h_2(s)).
\]
Since $h_1=f(h_2)$, this inequality is equivalent to 
\[
h_1(t+\sigma)-h_1(s+\sigma+\rho) \leq [<] \: h_1(t)-h_1(s).
\]
Rewriting: 
\begin{equation}\label{eq:h1} 
h_1(s)-h_1(t)\leq [<] \: h_1(s+\sigma+\rho)-h_1(t+\sigma),
\end{equation}
whenever $0<t<s\leq t+\sigma<s+\sigma+\rho$.

To show the sufficiency,  suppose that \eqref{eq:h2} implies \eqref{eq:h1} for every $s$, $t$, $\sigma$ and $\rho$ satisfying  $0<t<s\leq t+\sigma<s+\sigma+\rho$.
Define $f$ such that $f=h_1\circ h_2^{-1}$. Note that we can do so because $h_2^{-1}$ exists (since $h_2$ is a strictly decreasing function). Then 
if 
\[
h_2(s+\sigma+\rho)<h_2(t+\sigma)\leq h_2(s)<h_2(t)
\]
and equation \eqref{eq:h2} holds, we have
\[
f(h_2(t+\sigma))-f(h_2(s+\sigma+\rho)) \leq [<] \: f(h_2(t))-f(h_2(s)).
\]
Therefore, $f$ is a (strictly) convex function, which means that $h_1$ is a (strictly) convex transformation of $h_2$.
\end{proof}
We can now prove Proposition \ref{key}.
\begin{proof}
Observe that $D_i\colon [0,\infty) \to (0,1]$ is one-to-one and onto, so $D_i^{-1}\colon (0,1] \to [0, \infty)$.

Let us first prove that condition (i) follows from condition (ii). 
The proof is by contraposition. We show that not (i) implies not (ii).
Assume that (i) fails; that is, there exist $s$ and $t$ with $0<t<s$,  $\rho>0$, $\sigma>0$ and $x,y,x',y'\in X$ with $0<x<y$ and $0<x'<y'$ such that $(x', t) \sim_2 (y', s)$, $(x', t+\sigma) \sim_2 (y', s+\sigma+\rho)$, $(x, t) \sim_1 (y, s)$ and \[(x, t+\sigma) \succ_1 [\succcurlyeq_1] \: (y, s+\sigma+\rho).
\]
Since $u_1(y)>0$ and $u_2(y')>0$ by assumption, this implies  
\[
\frac{u_2(x')}{u_2(y')}=\frac{D_2(s)}{D_2(t)}=\frac{D_2(s+\sigma+\rho)}{D_2(t+\sigma)}\]
and \[\frac{u_1(x)}{u_1(y)}=\frac{D_1(s)}{D_1(t)}> [\geq] \:\frac{D_1(s+\sigma+\rho)}{D_1(t+\sigma)}.
\]

Let $h_1=\ln D_1$ and $h_2=\ln D_2$. Note that $h_1$ and $h_2$ are both strictly decreasing functions. Observe also that $h_i \colon [0,\infty) \to (-\infty, 0]$ is one-to-one and onto. Thus $h_i^{-1} \colon (-\infty,0] \to [0,\infty)$, where $h_i^{-1}(z)=D_i^{-1}(e^z)$.
Rewriting these expressions we get $D_i(t)=e^{h_i(t)}$ for each $i\in\{1,2\}$. Thus:
\[
\frac{e^{h_2(s)}}{e^{h_2(t)}}=\frac{e^{h_2(s+\sigma+\rho)}}{e^{h_2(t+\sigma)}}\] and \[ \frac{e^{h_1(s)}}{e^{h_1(t)}}> [\geq]\:\frac{e^{h_1(s+\sigma+\rho)}}{e^{h_1(t+\sigma)}}.
\]
Equivalently,
\begin{equation} \label{eq:contr1}
h_2(s)-h_2(t)=h_2(s+\rho+\sigma)-h_2(t+\sigma) 
\end{equation}
and
\begin{equation} \label{eq:contr2}
h_1(s)-h_1(t)> [\geq]\:h_1(s+\rho+\sigma)-h_1(t+\sigma).
\end{equation}

Note that $\ln D_1(D_2^{-1}(e^z))$ (strictly)  convex in $z$ on $(-\infty, 0]$ is equivalent to $h_1 \circ h_2^{-1}$ (strictly)  convex in $z$ on $(-\infty, 0]$. In other words, $h_1$ is a (strictly)  convex transformation of $h_2$.
By Lemma \ref{hh} this conclusion contradicts equation \eqref{eq:contr1} and inequality \eqref{eq:contr2}. Therefore, not (i) implies not (ii).

Secondly, we need to demonstrate that (i) implies (ii).
Using the previously introduced notation, we show that for every for every $s$, $t$, $\sigma$ and $\rho$ satisfying  \[0<t<s\leq t+\sigma<s+\sigma+\rho\] the equation 
\[
h_2(s)-h_2(t)=h_2(s+\sigma+\rho)-h_2(t+\sigma)
\]
implies
\[
h_1(s)-h_1(t)\leq [<]\:h_1(s+\sigma+\rho)-h_1(t+\sigma).
\] 
As $h_1$ and $h_2$ are decreasing functions, this proves that $h_1$ is a (strictly)  convex transformation of $h_2$.
Assume that $0\leq t<s\leq t+\sigma<s+\sigma+\rho$ such that 
\[
h_2(s)-h_2(t)=h_2(s+\sigma+\rho)-h_2(t+\sigma).
\]
By definition of $h_i=\ln D_i$ this expression is equivalent to 
\[
\frac{D_2(s)}{D_2(t)}=\frac{D_2(s+\sigma+\rho)}{D_2(t+\sigma)} \ \in (0, 1).
\]
As $u_2$ is continuous, we can choose $0<x'<y'$ such that:
\[
\frac{D_2(s)}{D_2(t)}\ =\ \frac{D_2(s+\sigma+\rho)}{D_2(t+\sigma)}\ =\ \frac{u_2(x')}{u_2(y')}.
\]
Therefore, $D_2(t)u_2(x')=D_2(s)u_2(y')$ and $D_2(t+\sigma)u_2(x')=D_2(s+\sigma+\rho)u_2(y')$.
This means that $(x', t) \sim_2 (y', s)$ and $(x', t+\sigma) \sim_2 (y', s+\sigma+\rho)$.

Analogously, because $u_1$ is continuous, we can choose $x, y$ such that:
\[
\frac{D_1(s)}{D_1(t)}=\frac{u_1(x)}{u_1(y)} \ \in (0, 1).
\]
Hence, $(x, t) \sim_1 (y, s)$.

But according to (i), if $(x', t) \sim_2 (y', s)$, $(x', t+\sigma) \sim_2 (y', s+\sigma+\rho)$ and $(x, t) \sim_1 (y, s)$ then $(x, t+\sigma) \preccurlyeq_1 [\prec_1]\: (y, s+\sigma+\rho)$. The latter is equivalent to: 
\[
\frac{D_1(s+\sigma+\rho)}{D_1(t+\sigma)} \geq [>] \:\frac{u_1(x)}{u_1(y)}.
\]
It follows that \[\frac{D_1(s)}{D_1(t)}\leq [<]\: \frac{D_1(s+\sigma+\rho)}{D_1(t+\sigma)},\] which is equivalent to 
\[
\ln{D_1(s)}-\ln{D_1(t)}\leq [<] \ln{D_1(s+\sigma+\rho)}-\ln{D_1(t+\sigma)}
\]
or
\[
h_1(s)-h_1(t)\leq [<]\:h_1(s+\sigma+\rho)-h_1(t+\sigma).   
\]
Therefore, \[
h_2(s)-h_2(t)=h_2(s+\sigma+\rho)-h_2(t+\sigma)
\]
implies
\[
h_1(s)-h_1(t)\leq [<]\:h_1(s+\sigma+\rho)-h_1(t+\sigma)
\] 
whenever $0\leq t<s\leq t+\sigma< s+\sigma+\rho$. Hence, by Lemma \ref{hh}, $h_1$ is a (strictly)  convex transformation of $h_2$.
\end{proof}

\bibliographystyle{plain}
\bibliography{Biblio} 

\end{document}